\pgfplotsset{compat=1.18}
\definecolor{shadecolor}{rgb}{0.9,0.9,0.9}
\def\squareforqed{\hbox{\rlap{$\sqcap$}$\sqcup$}}
\def\qed{\ifmmode\squareforqed\else{\unskip\nobreak\hfil
\penalty50\hskip1em\null\nobreak\hfil\squareforqed
\parfillskip=0pt\finalhyphendemerits=0\endgraf}\fi}
\def\endenv{\ifmmode\;\else{\unskip\nobreak\hfil
\penalty50\hskip1em\null\nobreak\hfil\;
\parfillskip=0pt\finalhyphendemerits=0\endgraf}\fi}
\newenvironment{proof}{\noindent \textbf{{Proof~} }}{\hfill $\blacksquare$}
\mathchardef\ordinarycolon\mathcode`\:
\def\vcentcolon{\mathrel{\mathop\ordinarycolon}}
\definecolor{darkblue}{RGB}{0,76,156}
\definecolor{darkkblue}{RGB}{0,0,153}
\definecolor{blue2}{RGB}{102,178,255}
\definecolor{darkred}{RGB}{195,0,0}
\newmdenv[skipabove=7pt,
skipbelow=7pt,
backgroundcolor=darkblue!15,
innerleftmargin=5pt,
innerrightmargin=5pt,
innertopmargin=5pt,
leftmargin=0cm,
rightmargin=0cm,
innerbottommargin=5pt,
linewidth=1pt]{tBox}
\newmdenv[skipabove=7pt,
skipbelow=7pt,
backgroundcolor=blue2!25,
innerleftmargin=5pt,
innerrightmargin=5pt,
innertopmargin=5pt,
leftmargin=0cm,
rightmargin=0cm,
innerbottommargin=5pt,
linewidth=1pt]{dBox}
\newmdenv[skipabove=7pt,
skipbelow=7pt,
backgroundcolor=darkred!15,
innerleftmargin=5pt,
innerrightmargin=5pt,
innertopmargin=5pt,
leftmargin=0cm,
rightmargin=0cm,
innerbottommargin=5pt,
linewidth=1pt]{rBox}
\newcommand{\nc}{\newcommand}
\nc{\bra}[1]{\langle#1|}
\nc{\ket}[1]{|#1\rangle}
\nc{\dket}[1]{|#1\rangle\!\rangle}
\nc{\ketbra}[2]{|#1\rangle\!\langle#2|}
\nc{\braket}[2]{\langle#1|#2\rangle}
\DeclarePairedDelimiter{\floor}{\lfloor}{\rfloor}
\nc{\proj}[1]{| #1\rangle\!\langle #1 |}
\nc{\dproj}[1]{| #1\rangle\!\rangle\!\langle\!\langle #1 |}
\nc{\avg}[1]{\langle#1\rangle}
\nc{\rank}{\operatorname{Rank}}
\nc{\smfrac}[2]{\mbox{$\frac{#1}{#2}$}}
\nc{\tr}{\operatorname{Tr}}
\nc{\ox}{\otimes}
\nc{\dg}{\dagger}
\nc{\dn}{\downarrow}
\nc{\cA}{{\cal A}}
\nc{\cB}{{\cal B}}
\nc{\cC}{{\cal C}}
\nc{\cD}{{\cal D}}
\nc{\cE}{{\cal E}}
\nc{\cF}{{\cal F}}
\nc{\cG}{{\cal G}}
\nc{\cH}{{\cal H}}
\nc{\cI}{{\cal I}}
\nc{\cJ}{{\cal J}}
\nc{\cK}{{\cal K}}
\nc{\cL}{{\cal L}}
\nc{\cM}{{\cal M}}
\nc{\cN}{{\cal N}}
\nc{\cO}{{\cal O}}
\nc{\cP}{{\cal P}}
\nc{\cQ}{{\cal Q}}
\nc{\cR}{{\cal R}}
\nc{\cS}{{\cal S}}
\nc{\cT}{{\cal T}}
\nc{\cU}{{\cal U}}
\nc{\cV}{{\cal V}}
\nc{\cX}{{\cal X}}
\nc{\cY}{{\cal Y}}
\nc{\cZ}{{\cal Z}}
\nc{\cW}{{\cal W}}
\nc{\csupp}{{\operatorname{csupp}}}
\nc{\qsupp}{{\operatorname{qsupp}}}
\nc{\var}{{\operatorname{var}}}
\nc{\rar}{\rightarrow}
\nc{\lrar}{\longrightarrow}
\nc{\polylog}{{\operatorname{polylog}}}
\nc{\wt}{{\operatorname{wt}}}
\nc{\supp}{{\operatorname{supp}}}
\def\x{\xi}
\nc{\RR}{{{\mathbb R}}}
\nc{\CC}{{{\mathbb C}}}
\nc{\FF}{{{\mathbb F}}}
\nc{\NN}{{{\mathbb N}}}
\nc{\ZZ}{{{\mathbb Z}}}
\nc{\PP}{{{\mathbb P}}}
\nc{\QQ}{{{\mathbb Q}}}
\nc{\UU}{{{\mathbb U}}}
\nc{\EE}{{{\mathbb E}}}
\nc{\id}{{\operatorname{id}}}
\nc{\CHSH}{{\operatorname{CHSH}}}
\nc{\rU}{\mbox{U}}
\nc{\ob}[1]{#1}
\nc{\SEP}{{\text{\rm SEP}}}
\nc{\NS}{{\text{\rm NS}}}
\nc{\LOCC}{{\text{\rm LOCC}}}
\nc{\PPT}{{\text{\rm PPT}}}
\nc{\EXT}{{\text{\rm EXT}}}
\nc{\Sym}{{\operatorname{Sym}}}
\nc{\ERLO}{{E_{\text{r,LO}}}}
\nc{\ERLOCC}{{E_{\text{r,LOCC}}}}
\nc{\ERPPT}{{E_{\text{r,PPT}}}}
\nc{\ERLOCCinfty}{{E^{\infty}_{\text{r,LOCC}}}}
\nc{\Aram}{{\operatorname{\sf A}}}
\def\grd@save@target#1{%
  \def\grd@target{#1}}
\def\grd@save@start#1{%
  \def\grd@start{#1}}
\tikzset{
  grid with coordinates/.style={
    to path={%
      \pgfextra{%
        \edef\grd@@target{(\tikztotarget)}%
        \tikz@scan@one@point\grd@save@target\grd@@target\relax
        \edef\grd@@start{(\tikztostart)}%
        \tikz@scan@one@point\grd@save@start\grd@@start\relax
        \draw[minor help lines,magenta] (\tikztostart) grid (\tikztotarget);
        \draw[major help lines] (\tikztostart) grid (\tikztotarget);
        \grd@start
        \pgfmathsetmacro{\grd@xa}{\the\pgf@x/1cm}
        \pgfmathsetmacro{\grd@ya}{\the\pgf@y/1cm}
        \grd@target
        \pgfmathsetmacro{\grd@xb}{\the\pgf@x/1cm}
        \pgfmathsetmacro{\grd@yb}{\the\pgf@y/1cm}
        \pgfmathsetmacro{\grd@xc}{\grd@xa + \pgfkeysvalueof{/tikz/grid with coordinates/major step}}
        \pgfmathsetmacro{\grd@yc}{\grd@ya + \pgfkeysvalueof{/tikz/grid with coordinates/major step}}
        \foreach \x in {\grd@xa,\grd@xc,...,\grd@xb}
        \node[anchor=north] at (\x,\grd@ya) {\pgfmathprintnumber{\x}};
        \foreach \y in {\grd@ya,\grd@yc,...,\grd@yb}
        \node[anchor=east] at (\grd@xa,\y) {\pgfmathprintnumber{\y}};
      }
    }
  },
  minor help lines/.style={
    help lines,
    step=\pgfkeysvalueof{/tikz/grid with coordinates/minor step}
  },
  major help lines/.style={
    help lines,
    line width=\pgfkeysvalueof{/tikz/grid with coordinates/major line width},
    step=\pgfkeysvalueof{/tikz/grid with coordinates/major step}
  },
  grid with coordinates/.cd,
  minor step/.initial=.2,
  major step/.initial=1,
  major line width/.initial=2pt,
}
\def\problem@s{}
\newcounter{problems@cnt}
\newcommand{\allproblems}{\problem@s}
\pgfplotsset{compat=1.16}
\newtheorem{proposition}{Proposition}
\newtheorem{theorem}{Theorem}
\newtheorem{corollary}{Corollary}
\newcommand{\idop}{I} 
\newcommand{\lrp}[1]{\left( #1 \right)}
\newcommand{\lrb}[1]{\left[ #1 \right]}
\newcommand{\lrc}[1]{\left\{ #1 \right\}}
\newcommand{\lrV}[1]{\left\| #1 \right\|}
\newcommand{\lrang}[1]{\left\langle #1 \right\rangle}
\newcommand{\mS}{\mathscr{S}}
\newcommand{\mW}{\mathscr{W}}
\newcommand{\tp}{{\rm T}}
\begin{document}
\title{
The communication power of indefinite causal order
}

\author{Xuanqiang Zhao}
\affiliation{QICI Quantum Information and Computation Initiative, School of Computing and Data Science, The University of Hong Kong, Pokfulam Road, Hong Kong}

\author{Benchi Zhao}
\affiliation{QICI Quantum Information and Computation Initiative, School of Computing and Data Science, The University of Hong Kong, Pokfulam Road, Hong Kong}

\author{Cyril Branciard}
\affiliation{Universit\'e Grenoble Alpes, CNRS, Grenoble INP\footnote{Institute of Engineering Univ.\ Grenoble Alpes}, Institut N\'eel, 38000 Grenoble, France}

\author{Giulio Chiribella}
\affiliation{QICI Quantum Information and Computation Initiative, School of Computing and Data Science, The University of Hong Kong, Pokfulam Road, Hong Kong}
\affiliation{Quantum Group, Department of Computer Science, University of Oxford, Wolfson Building, Parks Road, Oxford, OX1 3QD, United Kingdom}
\affiliation{Perimeter Institute for Theoretical Physics, 31 Caroline Street North, Waterloo, Ontario, Canada}

\begin{abstract}
Quantum theory is in principle compatible with scenarios where physical processes occur in an indefinite order, potentially yielding advantages in a broad range of information processing tasks.
However, advantages in communication, the most basic form of information processing, have so far remained controversial and hard to prove. Here we provide a framework for assessing the role of causal order in communication, by comparing different causal structures under the constraint that the allowed operations must not generate signaling from signaling-incapable devices.
Using this framework, we establish a clear-cut advantage of indefinite causal order, and, at the same time, we identify a series of fundamental limits to the communication power of causal structures in quantum mechanics.
Notably, we find that a special form of indefinite causal order, obtained by coherently controlling the order of two processes, enhances the transmission of classical messages in a one-shot scenario, but no quantum operation with indefinite order can offer advantages over shared entanglement when asymptotically many uses of the same communication device are employed.
Overall, our results unveil non-trivial relations between communication, causal order, entanglement, and no-signaling quantum processes.
\end{abstract}

\date{\today}
\maketitle

\section{Introduction}
Traditionally, physical theories picture a world of processes taking place in a well-defined causal order.
It has been observed, however, that 
quantum theory is in principle compatible with scenarios where the order becomes indefinite~\cite{chiribella2009beyond,oreshkov2012quantum,chiribella2013quantum,bisio2018higher}.
Some of these scenarios can be engineered by letting a quantum system control the order in which two or more processes take place~\cite{chiribella2009beyond,wechs2021quantum,purves2021quantum}, a situation that has been reproduced in a series of photonic experiments~\cite{procopio2015experimental,rubino2017experimental,goswami2018indefinite,wei2019experimental,guo2020experimental,rubino2020experimental,goswami2020experiments,yin2023experimental,cao2023semi,rozema2024experimental}.
Other scenarios could be realized by delocalizing quantum systems in time~\cite{oreshkov2019time}, whereas the most general forms of indefinite order can be more exotic~\cite{oreshkov2012quantum,baumeler2014perfect,branciard2015simplest} and their physical implementation still remains an open question.
In general, scenarios with indefinite causal order are expected to offer new insights into the long-standing problem of quantizing gravity~\cite{hardy2007towards,hardy2009quantum}, and, at the same time, to provide a new angle on the characterization of quantum theory in terms of its information processing capabilities~\cite{fuchs2003quantum,brassard2005information,chiribella2016quantum}.

Over the past two decades, advantages of indefinite order have been discovered in a variety of information processing tasks, including quantum process discrimination~\cite{chiribella2012perfect,bavaresco2021strict, wechs2021quantum,bavaresco2022unitary}, query complexity~\cite{abbott2024quantum,abbott2025classical}, communication complexity~\cite{feix2015quantum,guerin2016exponential}, quantum metrology~\cite{zhao2020quantum, liu2023optimal,mothe2024reassessing}, and inversion of an unknown quantum dynamics~\cite{quintino2019reversing, quintino2019probabilistic, quintino2022deterministic}.
In all these tasks, quantum operations with indefinite order were proven to outperform all possible operations with definite order.
Quite strikingly, however, no such separation was found in the most basic information processing task, namely the transmission of messages from a sender to a receiver~\cite{shannon1948mathematical, shannon1956zero}.

A body of works~\cite{ebler2018enhanced,salek2018quantum,procopio2019communication,procopio2020sending,loizeau2020channel,caleffi2020quantum,goswami2020increasing,chiribella2021indefinite,chiribella2021quantum,chandra2021entanglement,sazim2021classical,bhattacharya2021random} showed that combining noisy channels in an indefinite order leads to enhancements over the standard communication scenario in which channels are combined in parallel or in sequence, without introducing additional systems that control their configuration.
However, other works ~\cite{abbott2020communication,guerin2019communication} observed that the above enhancements do not establish a fundamental separation between definite and indefinite order, but rather an advantage of indefinite order over a restricted class of scenarios with definite order.
Ref.~\cite{kristjansson2020resource} pointed out that restrictions on the allowed operations are unavoidable in the context of communication and proposed that the search for communication advantages of indefinite causal order should be framed in a resource-theoretic framework.
However, a unified framework treating definite and indefinite causal order on the same footing has been missing so far, and, as a consequence, the problem of assessing the communication power of indefinite order has remained open.

Here, we provide the first rigorous demonstration of a communication advantage of indefinite causal order, and, at the same time, we establish a number of fundamental limits to its communication power.
To compare operations with different causal structure, we develop a resource-theoretic framework where the devices capable of transmitting signals are considered resources, and the allowed operations do not generate signaling from signaling-incapable devices.
In this framework, we show that allowed operations with indefinite order enable the perfect, error-free transmission of a bit by combining two noisy channels that would inevitably cause errors if combined by any allowed operation with definite order. 
Remarkably, we find that this communication advantage can be achieved through quantum control over the order of the two channels, and therefore can in principle be demonstrated on a photonic platform.

In stark contrast, we show that there exists a broad set of
communication scenarios where indefinite causal order does
not offer advantages.
In particular, we find out that, strikingly, indefinite causal order does not offer any advantage over shared entanglement in the communication of classical and quantum information through asymptotically many uses of the same channel.
Overall, our results reveal that the relations between communication, causal order, entanglement, and no-signaling operations are more complex, and more profound than previously envisaged.

\section{Results}
{\bf The resource theory of signaling.}
Here we develop a resource theory where the key resource is the ability to send signals.
In this resource theory, the basic objects are communication devices, and the basic operations assemble the devices into communication links connecting a sender to a receiver (see Fig.~\ref{fig:channel_assembly}).
Mathematically, the basic resources are represented by lists of quantum channels (completely positive, trace preserving maps transforming the density matrix of an input system into the density matrix of an output system).
For a list of $N$ channels, we will use the vector notation $\vec{\cC} = (\cC_1, \dots, \cC_N)$, and we will denote by $X_i$ ($Y_i$) the input (output) system of the $i$-th channel in the list.

The resource theory is constructed by first specifying a set of devices which are regarded as non-resourceful, or free, and then by considering the largest set of operations that do not create resources~\cite{chitambar2019quantum}.
Here, we focus on the resource of signaling: we regard as free all the devices that block any signal, by outputting a fixed state independently of their input.
Mathematically, these devices are described by replacement channels, that is, quantum channels $\cC$ satisfying the condition $\cC(\rho) = \rho_0 \, \forall \rho$, where $\rho$ is an arbitrary density matrix of the input system, and $\rho_0$ is a fixed density matrix of the output system.
In general, free resources are described by lists of replacement channels.

We now specify the allowed operations that can be used to assemble quantum channels into communication links.
Mathematically, operations on quantum channels are described by quantum supermaps~\cite{chiribella2008transforming,chiribella2008quantum,chiribella2013quantum,bisio2018higher}, that is, linear maps transforming (lists of) quantum channels into quantum channels.
In the following, we will focus on the case in which the $N$ input channels in the initial list are assembled to build a single communication channel connecting a sender to a receiver.
The input and output systems of the channel produced by the supermap will be denoted by $A$ and $B$, respectively.

\begin{figure}
    \centering
    \includegraphics[width=\columnwidth]{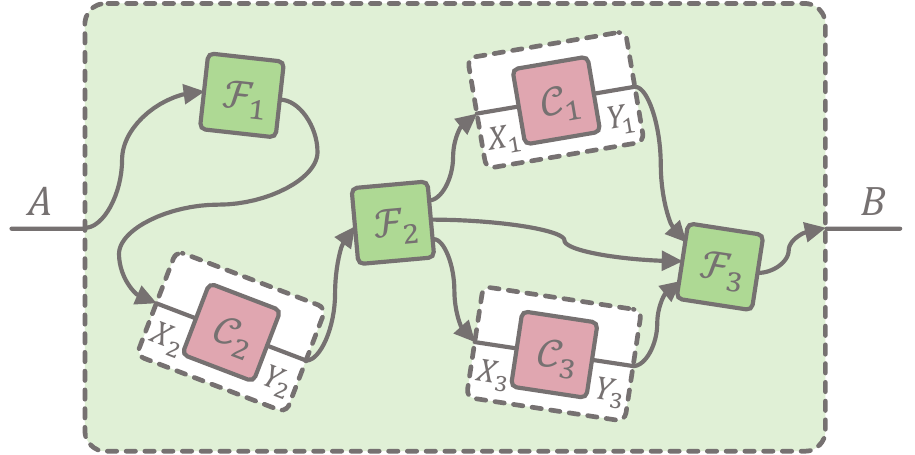}
    \caption{\textbf{Communication through multiple quantum devices connected into a network.}
    A set of $N$ communication devices, described by quantum channels $\vec{\cC} = (\cC_1,\dots,\cC_N)$ with input/output pairs $(X_i,Y_i)_{i=1}^N$, is used to build a communication link from a sender $A$ to a receiver $B$.
    In the example shown in this figure, $N=3$ channels are connected in a fixed order, interspersed by additional channels $\cF_1, \cF_2$, and $\cF_3$.
    The additional channels can be arbitrary, and possibly be connected by direct communication links (as in the case of channels $\cF_2$ and $\cF_3$ in the picture), provided that no overall signaling from $A$ to $B$ is generated when the channels $\cC_1$, $\cC_2$, and $\cC_3$ are replacement channels.
    In general, the initial devices can be assembled in any causal structure, including causal structures with indefinite order.}
    \label{fig:channel_assembly}
\end{figure}

The largest set of operations that do not generate resources from free devices is the set of signaling-non-generating supermaps, that is, quantum supermaps that transform lists of replacement channels into replacement channels.
These supermaps may internally use additional channels that aid the communication between sender and receiver, but must block every signal if the original channels $\vec{\cC}$ are replacement channels.
Since here we are focusing on the fundamental aspects of communication, we will take the full set of signaling-non-generating supermaps as the free operations in the resource theory of signaling.
In the special case where each list consists of a single channel, the set of signaling-non-generating supermaps has been known as ``no-signaling resources'' and has been fruitfully adopted in quantum Shannon theory~\cite{leung2015power,duan2016no,wang2016quantum,wang2018separation}.
Signaling-non-generating supermaps have been recently used to formulate a resource-theoretic framework for quantum communication~\cite{fang2019quantum,takagi2020application}.
Our resource theory of signaling is the extension of this resource-theoretic framework to the case where multiple input channels can be assembled in a non-trivial order.

A complete characterization of the signaling-non-generating supermaps is provided in Methods, where we show that they are in one-to-one correspondence with a subset of multipartite channels satisfying a no-signaling condition: explicitly, a signaling-non-generating supermap $\mS$ transforming a list of $N$ channels with input/output pairs $(X_i, Y_i)_{i=1}^N$ into a channel with input/output pair $(A, B)$ corresponds to a multipartite channel with input system $AY$ and output system $BX$ (with the notation $X \coloneqq X_1\cdots X_N$ and $Y \coloneqq Y_1\cdots Y_N$), satisfying the condition that no signal can be sent from $A$ to $B$.
We call this condition the ``No Forward Signaling Condition.''
Informally, the No Forward Signaling Condition states that the supermap does not allow signals to travel from the sender to the receiver without the help of the devices that are used to build the communication link.
This condition guarantees that our theory satisfies a condition proposed by Ref.~\cite{kristjansson2020resource} as a minimum requirement for any resource theory of communication.

With the set of free operations in place, we are now in position to rigorously analyze the role of indefinite causal order.
For this purpose, we define three subsets of the set of free operations, ordered by mutual inclusion.
The smallest set is the set of free supermaps that can be achieved by placing the $N$ input channels in parallel inside a quantum circuit~\cite{chiribella2008transforming}.
The second set is the set of free supermaps that can be achieved by placing the $N$ input channels inside a quantum circuit in a fixed, pre-assigned order.
These supermaps are known as quantum combs~\cite{chiribella2008quantum,chiribella2009theoretical}.
The third set is the largest set of free supermaps that can be achieved with a definite causal order.
These supermaps, often called ``causally separable'' supermaps~\cite{oreshkov2016causal,wechs2019on}, do not require the order to be fixed {\em a priori}, but rather allow for it to be dynamically determined at subsequent steps of a quantum circuit.
In the following, the sets of free supermaps with parallel placement, fixed-order placement, and definite (possibly dynamical) order placement will be denoted by ${\bf FreePar}$, ${\bf FreeFix}$, and ${\bf FreeDef}$, respectively.
A characterization of these sets of supermaps is provided in Supplementary Note~\ref{supp_sec:causal_order}.
Notice that one has the (generally strict) inclusions ${\bf FreePar} \subset {\bf FreeFix} \subset {\bf FreeDef} \subset {\bf Free}$, where ${\bf Free}$ denotes the set of all supermaps that are free in our resource theory.
Supermaps that are in ${\bf Free}$ but not in ${\bf FreeDef}$ are free operations with indefinite causal order.
In the following, we will show that using free operations with indefinite causal order offers an advantage in one of the most fundamental communication tasks: the transmission of bits from a sender to a receiver.

\medskip

{\bf Communication advantages of indefinite causal order.}
Suppose that Alice wants to transmit classical messages to Bob, using a communication link built from $N$ communication devices using signaling-non-generating operations.
We now show that assembling the devices in an indefinite order can increase the number of bits that Alice can reliably send to Bob.

Let us start from the ideal scenario in which the communication is free from errors.
In Methods, we show that the task of zero-error classical communication can be formulated as the task of converting $N$ communication devices into a noiseless classical channel.
Mathematically, the conversion is implemented by a supermap $\mS$ that transforms a list of $N$ quantum channels $\vec{\cC}$ into a noiseless classical channel $\Delta_m$, of the form $\Delta_m: \rho \mapsto \sum_{j=0}^{m-1} \bra{j}\rho\ket{j}\,\proj{j}$, where $m$ is the number of classical messages the channel can send and $\lrc{\ket{j}}_{j=0}^{m-1}$ is a fixed orthonormal basis.

Now, the question is: what is the maximum number of messages Alice can communicate to Bob in an error-free way?
When the $N$ communication devices are assembled using operations in the set ${\bf X}$ (where ${\bf X}$ can be ${\bf FreePar}$, or ${\bf FreeFix}$, or ${\bf FreeDef}$, or ${\bf Free}$), the maximum number of messages is $m_0^{\bf X} (\,\vec{\cC}\,) \coloneqq \max \{m \,|\, \exists \mS \in {\bf X}:\, \mS (\,\vec{\cC}\,) = \Delta_m\}$.
Due to the inclusions ${\bf FreePar} \subset {\bf FreeFix} \subset {\bf FreeDef} \subset {\bf Free}$, the inequalities $m_0^{\bf FreePar} (\,\vec{\cC}\,) \le m_0^{\bf FreeFix} (\,\vec{\cC}\,) \le m_0^{\bf FreeDef} (\,\vec{\cC}\,) \le m_0^{\bf Free} (\,\vec{\cC}\,)$ hold for every list of channels $\vec {\cC}$.
We now show that the last of these inequalities is generally strict: assembling the devices in an indefinite causal order generally increases the number of messages Alice can communicate to Bob in an error-free way.

Let us introduce the notion of one-shot zero-error capacity of the channels $\vec{\cC}$ assisted by signaling-non-generating operations in the set $\bf X$, defined as $C^{\bf X}_0 (\,\vec{\cC}\,) \coloneqq \log_2 m_0^{\bf X} (\,\vec{\cC}\,)$.
Then, consider the case of $N=2$ amplitude damping channels, that is, qubit channels of the form $\cA: \rho \mapsto \cA(\rho) = A_0 \rho A_0^\dag + A_1 \rho A_1^\dag$, where $\rho$ is an arbitrary density matrix of a single qubit, $A_0$ and $A_1$ are the 2-by-2 matrices given by $A_0 \coloneqq \sqrt{\eta} \ketbra{0}{1}$ and $A_1 \coloneqq \sqrt{I - A_0^\dag A_0}$, and $\eta \in [0,1]$ is the damping parameter.
We then evaluate the zero-error capacity, setting $N=2$ and $\cC_1 = \cC_2 = \cA$.

In Methods and Supplementary Note~\ref{supp_sec:capacity}, we show that the evaluation of the capacity can be reduced to a semidefinite program. Using the semidefinite programs for $C_0^{\bf FreeDef}$ and $C_0^{\bf Free}$, we can then pin down the advantage of indefinite causal order, both analytically and numerically.
On the analytical side, we prove that for damping parameter $\eta = 0.1$, the zero-error capacity assisted by signaling-non-generating operations with definite causal order is
\begin{align}\label{freedef=0}
    C_0^{\bf FreeDef}(\,\vec{\cC}\,) = 0\,,
\end{align}
meaning that no information can be communicated without error when the two channels are combined in a definite order subject to the condition of signaling-non-generation.
In stark contrast, we find that allowing signaling-non-generating operations with indefinite causal order yields capacity
\begin{align}
    C_0^{\bf Free}(\,\vec{\cC}\,) = 1\,,
\end{align}
meaning that indefinite causal order enables the error-free communication of one bit while still complying with the constraint of signaling-non-generation.

Remarkably, we find that the above communication advantage can be achieved by a restricted class of operations with indefinite causal order: in Supplementary Note~\ref{supp_sec:advantage}, we show that the supermap achieving 1 bit of classical communication can be realized by a quantum circuit with quantum control on the order (QC-QC)~\cite{wechs2021quantum,purves2021quantum}.
This finding is important because QC-QC operations can in principle be reproduced in photonic experiments.
More specifically, we show that a circuit achieving the advantage can be built from the quantum switch~\cite{chiribella2009beyond,chiribella2013quantum}, a supermap that applies two channels ${\cal C}_1$ and ${\cal C}_2$ in an order determined by the state of a quantum bit.
In its original formulation, the quantum switch allows for an intermediate operation to be performed between channels ${\cal C}_1$ and ${\cal C}_2$, although in previous applications this intermediate operation was generally taken to be trivial.
In contrast, achieving the communication advantage here requires a non-trivial intermediate operation, acting not only on the qubit passing through channels ${\cal C}_1$ and ${\cal C}_2$, but also on auxiliary systems, as specified in Supplementary Note~\ref{supp_sec:advantage}.
In this respect, our results provide the first application of the full-fledged quantum switch, placing two channels in an indefinite order and allowing for general intermediate operations to take place between them.

Numerically, we determine the values of the zero-error capacity for various values of the damping parameter $\eta$ between $0$ and $0.5$, finding that the advantage of indefinite causal order is present for all tested values of $\eta$ in the interval between $\eta = 0$ (excluded) and $\eta \approx 0.2$, as illustrated in Fig.~\ref{fig:capacities}.
Together, these results provide the first rigorous evidence of a communication advantage of indefinite causal order.

A similar approach can be used to derive advantages of indefinite causal order in classical communication beyond the zero-error scenario.
The maximum number of messages that can be transmitted with error upper bounded by $\epsilon$, using the channels $\vec{\cC}$ assembled by operations in ${\bf X}$, is given by $m_\epsilon^{\bf X}(\,\vec{\cC}\,) \coloneqq \max \{m \,|\, \exists \mS \in {\bf X}:\, \|\mS(\,\vec{\cC}\,) - \Delta_m\|_\diamond / 2 \le \epsilon\}$, where $\|\cdot\|_\diamond$ denotes the diamond norm~\cite{kitaev1997quantum}.
We then introduce the one-shot capacity $C_\epsilon^{\bf X}(\,\vec{\cC}\,) \coloneqq \log_2 m_\epsilon^{\bf X}(\,\vec{\cC}\,)$, and show that there exist parameter values for which the maximum one-shot capacity achieved by signaling-non-generating operations with definite causal order is strictly smaller than the one-shot capacity achievable by signaling-non-generating operations with indefinite causal order.
For $N=2$ amplitude damping channels, an illustration of this fact is provided in Fig.~\ref{fig:capacities}.

\begin{figure}
    \centering
    \includegraphics[width=\columnwidth]{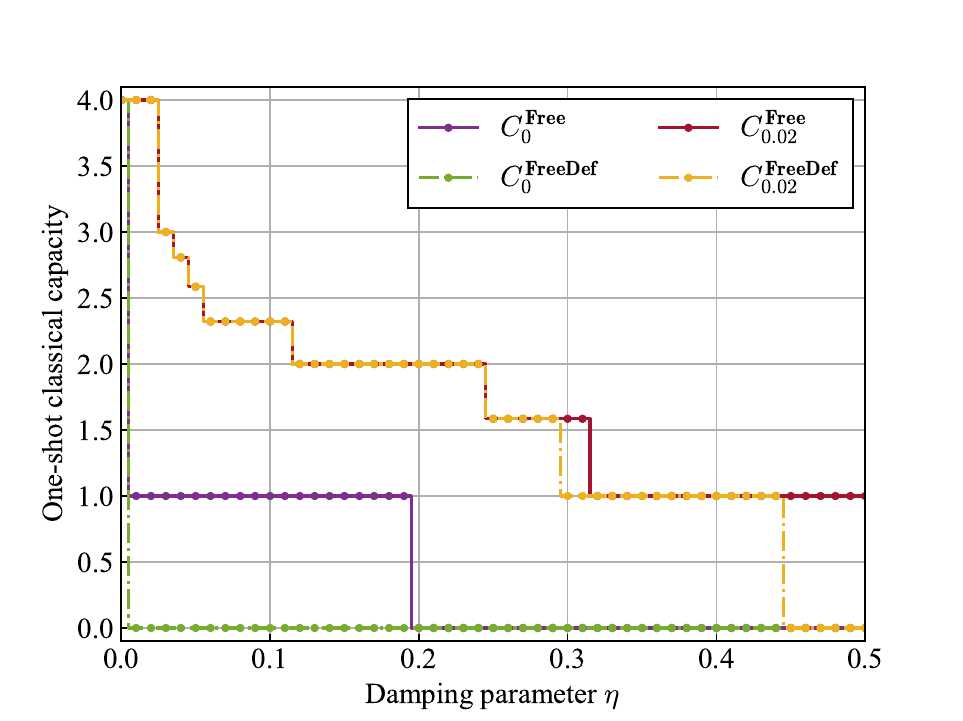}
    \caption{\textbf{Advantages of indefinite causal order in zero-error and bounded-error communication.}
    We compare the one-shot classical capacities of a pair of amplitude damping channels assisted by operations in the sets ${\bf Free}$ and ${\bf FreeDef}$, for zero-error and for bounded-error with error tolerance of $\epsilon = 0.02$.}
    \label{fig:capacities}
\end{figure}

\medskip

{\bf No advantage for Pauli channels.}
After having established a rigorous advantage of indefinite causal order, we now show that the communication power of indefinite causal order is also subject to fundamental limitations.
In this section, we start by showing that no classical communication advantage is present for a large class of channels known as Pauli channels~\cite{wilde2013quantum}.

A Pauli channel acting on $q$ qubits is a channel of the form $\cP(\rho) = \sum_{i_1 \dots i_q} \, p_{i_1\dots i_q} \, (\sigma_{i_1} \ox \cdots \ox \sigma_{i_q}) \,\rho\, (\sigma_{i_1} \ox \cdots \ox \sigma_{i_q} )$, where $\sigma_0$ is the identity matrix, $\sigma_1, \sigma_2, \sigma_3$ are the three Pauli matrices, and $(p_{i_1\dots i_q})$ is a probability distribution, providing the probability that the $q$ qubits undergo the product gate $\sigma_{i_1} \ox \cdots \ox \sigma_{i_q}$.
Here we consider the situation where the original communication devices are $N$ Pauli channels $\vec{\cP} \coloneqq (\cP_1,\dots,\cP_N)$.
In general, these $N$ channels can be different from each other and can even act on different numbers of qubits.

In Supplementary Note~\ref{supp_sec:pauli}, we prove that indefinite causal order does not enhance the classical capacity, even in the presence of an error tolerance.
In fact, we prove an even stronger result: placing the $N$ channels in parallel with the assistance of signaling-non-generating operations (which in the parallel case, as in the single channel case, are known as no-signaling resources~\cite{leung2015power,duan2016no,wang2016quantum,wang2018separation}) resources is already sufficient to achieve the largest possible value of the capacity; in formula, $C_\epsilon^{\bf FreePar}(\,\vec{\cP}\,) = C_\epsilon^{\bf Free}(\,\vec{\cP}\,)$ for every $\epsilon \ge 0$.
In other words, when no-signaling resources are available, neither definite nor indefinite causal order can enhance the capacity beyond the maximum value achievable by placing the channels in parallel.

\medskip

{\bf No advantage for channel simulation.}
We now establish another limitation of indefinite causal order, showing that it does not offer any advantage over no-signaling resources in the task of simulating a noisy quantum channel through classical communication~\cite{bennett2002entanglement,fang2019quantum}.
The task of noisy channel simulation is the inverse of the classical communication task: while in communication one uses a noisy channel to simulate a noiseless transmission of bits, in simulation one uses a noiseless transmission of bits to simulate a noisy channel.

In our resource-theoretic framework, the task of channel simulation can be phrased as the problem of converting a set of ideal classical channels into a target channel $\cC$ by means of signaling-non-generating operations.
Let us denote the list of ideal classical channels as $\vec{\Delta}_m = (\Delta_{m_1}, \Delta_{m_2}, \dots)$, where $m_i$ is the number of messages transmitted by the $i$-th channel, and $m \coloneqq \prod_j m_j$ is the total number of messages transmitted.
With this notation, the classical simulation cost for one realization of channel $\cC$ assisted by signaling-non-generating operations in the set ${\bf X}$ is
$S_{\epsilon}^{\bf X}(\cC) \coloneqq \min \{ \log_2 m \,|\, \exists \mS \in {\bf X}:\, \|\mS(\vec{\Delta}_{m}) - \cC\|_\diamond / 2 \le \epsilon\}$, which measures the minimum number of classical bits consumed by the simulation.
This type of one-shot cost is widely used in quantum resource theories~\cite{liu2019one,takagi2022one}, and has played an important role in the study of entanglement~\cite{buscemi2011entanglement,wang2020cost}, coherence~\cite{zhao2018one,diaz2018using}, and communication~\cite{duan2016no,fang2019quantum}.

Since enlarging the set of allowed operations can in principle reduce the simulation cost, one has the inequalities $S_{\epsilon}^{\bf FreePar}(\cC) \ge S_{\epsilon}^{\bf FreeFix}(\cC) \ge S_{\epsilon}^{\bf FreeDef}(\cC) \ge S_{\epsilon}^{\bf Free}(\cC)$ for every error tolerance $\epsilon \ge 0$.
Surprisingly, however, we found out that this chain of inequalities collapses to a single equality: $S_{\epsilon}^{\bf FreePar}(\cC) = S_{\epsilon}^{\bf Free}(\cC)$ for every $\epsilon \ge 0$ (see Supplementary Note~\ref{supp_sec:chan_sim} for the derivation).
In other words, signaling-non-generating operations with indefinite causal order do not offer any advantage over signaling-non-generating operations that put the original channels in parallel: all the classical communication can be sent in a single round.

\medskip

{\bf No communication advantage in the asymptotic scenario.}
We now conclude our list of no-go results with a major limitation on the communication power of indefinite causal order: when the same quantum channel is used asymptotically many times, combining the uses in an indefinite causal order does not offer any advantage over the assistance of shared entanglement, both in classical and in quantum communication.

Let us start from the case of classical communication. In our resource-theoretic framework, the problem can be formalized as follows: Alice and Bob communicate through a link established by combining $N$ communication devices implementing the same quantum channel, namely $\vec{\cC}_{N} = (\cC_1, \dots, \cC_N)$ with $\cC_j = \cC$ for every $j \in \{1, \dots, N\}$.
Then, the classical capacity of the channel $\cC$ assisted by operations in the set ${\bf X}$ is defined as the maximum number of bits communicated reliably per channel use in the asymptotic limit of many uses and vanishing error, assembled by signaling-non-generating operations in ${\bf X}$.
In formula, $C^{\bf X} (\cC) \coloneqq \lim_{\epsilon \to 0}\lim_{N\to \infty} C_{\epsilon}^{\bf X} (\vec{\cC}_N)/N$.

We now show that the order of the channels, either definite or indefinite, does not affect the capacity in the asymptotic limit.
The proof is based on the quantum reverse Shannon theorem~\cite{bennett2014quantum,berta2011quantum}, which states the equality $C^{\bf Ent}(\cC) = S^{\bf Ent}(\cC)$, where $C^{\bf Ent}(\cC)$ and $S^{\bf Ent}(\cC)$ denote the entanglement-assisted classical capacity and the entanglement-assisted classical simulation cost of the channel $\cC$, respectively.

To prove our no-go result, we consider the asymptotic classical simulation cost assisted by supermaps in the set ${\bf X}$, defined as $S^{\bf X}(\cC) \coloneqq \lim_{\epsilon \to 0}\lim_{N \to \infty} S_{\epsilon}^{\bf X}(\vec{\cC}_N)$.
Since shared entanglement is a special case of a no-signaling resource~\cite{duan2016no,wang2016quantum,wang2018separation} (hence, also of a signaling-non-generating resource), we have the inequalities $C^{\bf Ent}(\cC) \leq C^{\bf FreeDef}(\cC)\leq C^{\bf Free}(\cC)$ and $S^{\bf Free}(\cC) \leq S^{\bf FreeDef}(\cC) \leq S^{\bf Ent}(\cC)$.
Moreover, the classical capacity of a channel cannot be larger than its simulation cost~\cite{fang2019quantum}, whence we have the inequality $C^{\bf Free}(\cC) \le S^{\bf Free}(\cC)$.
Combining the above inequality with the quantum reverse Shannon theorem, we then obtain the chain of equalities $C^{\bf Ent}(\cC) = C^{\bf Free}(\cC) = S^{\bf Free}(\cC) = S^{\bf Ent}(\cC)$.
This chain of equalities has two implications.
The first implication is that the interconversions of quantum channels achieved by signaling-non-generating supermaps (possibly including indefinite causal order) are reversible:
the cost to simulate a target channel equals the amount of resource that can be extracted from it.
This reversibility is desirable as it ensures no resources are wasted, and a single measure, in this case, the capacity, suffices to govern all asymptotic interconversions of channels~\cite{gour2024resources}.
The second implication is that neither definite nor indefinite causal order provide any advantage in the asymptotic regime: the assistance of entanglement is already sufficient to achieve the maximum possible classical capacity.

Note that the above argument also holds for the communication of quantum information.
By teleportation~\cite{bennett1993teleporting} and superdense coding~\cite{bennett1992communication}, both the entanglement-assisted quantum capacity $Q^{\bf Ent}$ and the entanglement-assisted quantum simulation cost $S_Q^{\bf Ent}$ are half of their classical counterparts: $Q^{\bf Ent} = \frac{1}{2}C^{\bf Ent} = \frac{1}{2}S^{\bf Ent} = S_Q^{\bf Ent}$.
Then, the same argument presented in the previous paragraph implies that neither definite nor indefinite causal order can enhance the quantum capacity in the asymptotic regime.

\section{Discussion}
It is instructive to compare our results with previous findings on quantum communication assisted by indefinite causal order.
Earlier works~\cite{ebler2018enhanced,salek2018quantum,procopio2019communication,procopio2020sending,loizeau2020channel,caleffi2020quantum,goswami2020increasing,chiribella2021indefinite,chiribella2021quantum,chandra2021entanglement,sazim2021classical,bhattacharya2021random} showed that the quantum switch~\cite{chiribella2009beyond,chiribella2013quantum}, a particular example of operation with indefinite causal order, can outperform the standard operations with definite causal order considered in quantum communication.
In those examples, the quantum switch enhanced both classical and quantum communication through Pauli channels, even in the asymptotic limit of large channel uses.
These findings are not in contradiction with our no-go theorems, for two major reasons.
First, the advantage of the quantum switch only applies to a restricted class of operations, not to the whole set of signaling-non-generating operations with indefinite causal order.
Second, and most important, the quantum switch is a signaling-generating operation: in a different language, this fact was observed in the first work on indefinite order in quantum communication~\cite{ebler2018enhanced}, where it was found that the quantum switch enables classical communication using two completely depolarizing channels, which are an example of replacement channels.
Since the quantum switch is signaling-generating, it is not an allowed operation in the resource theory of signaling considered in the present paper.

While the quantum switch is signaling-generating, combining it with other operations can still yield supermaps that satisfy the condition of no-signaling-generation.
As observed earlier in the paper, the quantum supermap that achieves the classical communication advantage can be realized in terms of the quantum switch ~\cite{chiribella2009beyond,chiribella2013quantum}, in its original version allowing for an non-trivial intermediate operation to take place between the two channels acting in an indefinite order.
In this realization, the quantum switch with a suitable choice of intermediate operation yields a new quantum channel, which is then used to store information about the initial noisy channels into a quantum state.
Eventually, this quantum state is used as a program to set up a communication link between the sender and receiver.
While the quantum switch alone can be signaling-generating, its combination with a suitable state preparation, a suitable intermediate operation, and a suitable programmable channel eventually gives a supermap that satisfies the condition of signaling-non-generation.
Interestingly, this storage-and-retrieval structure appears to be common to the optimal communication strategies with definite order, too, as observed in Supplementary Note~\ref{supp_sec:advantage}.
Overall, the advantage proven in this paper shows that indefinite causal order enables new ways to store quantum channels into quantum states, with provably different features compared to the standard way of storing quantum channels in a definite order~\cite{vidal2002storing,bisio2010optimal,sedlak2019optimal,yang2020optimal}.

In principle, the communication advantage shown in this paper could be demonstrated on a photonic platform, in a similar way as it was done in the quantum switch experiments~\cite{procopio2015experimental,rubino2017experimental,goswami2018indefinite,wei2019experimental,guo2020experimental,rubino2020experimental,goswami2020experiments,yin2023experimental,cao2023semi,rozema2024experimental}.
Compared to previous experiments, however, the current experiment presents new challenges.
First, the experiment would require controlling the order of two amplitude damping channels, which have not yet featured in previous experiments on indefinite causal order.
Second, demonstrating the advantage requires a non-trivial quantum channel (which also involves auxiliary systems) to be performed between the two channels put in an indefinite order.
Overall, our findings motivate further experimental efforts generate new and more complex quantum circuits with coherent quantum control over the order.

On the conceptual side, this work provides a rigorous resource-theoretic framework for investigating the communication advantages of indefinite causal order.
While different types of dynamical resources have been recently investigated in a series of works~\cite{ben2017resource,rosset2018resource,theurer2019quantifying,saxena2020dynamical,gour2020dynamical,takagi2020application,regula2021fundamental,kim2021one,milz2022resource,saxena2022quantifying,stratton2024dynamical,zhao2024probabilistic,zhu2025bidirectional}, the role of the causal structure within the set of free operations has so far remained unexplored.
In this respect, our framework provides a first exploration of the power of different causal structures in the conversion of dynamical resources.
An interesting direction for future research is to apply this approach to other resource theories, including the resource theory of quantum thermodynamics~\cite{felce2020quantum} and the resource theory of magic~\cite{mo2024enhancement}.

\section{Methods}
{\bf Characterization of the free supermaps.}
Quantum supermaps are in one-to-one correspondence with subsets of multipartite quantum channels.
Specifically, a supermap mapping $N$ channels with input/output pairs $(X_i, Y_i)_{i=1}^N$ into a quantum channel with input $A$ and output $B$ corresponds to a multipartite quantum channel with input $YA$ and output $XB$, where $X \coloneqq X_1\cdots X_N$ and $Y \coloneqq Y_1\cdots Y_N$.
By the Choi isomorphism~\cite{choi1975completely}, every channel can be uniquely represented by a linear operator, known as its Choi operator.
For a multipartite channel corresponding to a supermap, the Choi operator also serves as a representation of the corresponding supermap.
The Choi operator of a supermap is sometimes called the ``process matrix''~\cite{oreshkov2012quantum}.

Supermaps with different causal structures correspond to multipartite channels satisfying different types of linear constraints.
In particular, the constraints on causally fixed, ({\em i.e.}, quantum combs), causally definite, and arbitrary supermaps were characterized in Ref.~\cite{chiribella2009theoretical}, Ref.~\cite{oreshkov2016causal,wechs2019on}, and Refs.~\cite{oreshkov2012quantum,oreshkov2016causal,araujo2015witnessing}, respectively.
The explicit expression of these constraints is reviewed in Supplementary Note~\ref{supp_sec:causal_order}, where we also provide an equivalent characterization of the general supermaps with $N$ input channels: a linear operator $J_{XYAB}$ is the Choi operator of a supermap with $N$ input channels if and only if
\begin{align}
    J_{XYAB} \geq 0 \quad\text{and}\quad \cL^{\rm NS}_{XY}\lrp{J_{XYA}} = \frac{\idop_{XYA}}{d_{X}},
\end{align}
where $J_{XYA} \coloneqq \tr_B[J_{XYAB}]$ (more generally, by dropping a subscript on an operator we mean tracing the corresponding subsystem out), $\cL^{\rm NS}_{XY}$ is the orthogonal projection onto the subspace of Hermitian operators spanned by the Choi operators of no-signaling multipartite quantum channels from $X \coloneqq X_1\cdots X_N$ to $Y \coloneqq Y_1\cdots Y_N$, $\idop_{XYA}$ is the identity operator on the spaces $XYA$, and $d_X$ is the dimension of system $X$.

Free supermaps in our resource theory must satisfy, in addition to the constraints associated to the causal structure, the constraint of signaling-non-generation.
In Supplementary Note~\ref{supp_sec:causal_order}, we prove that a supermap is signaling-non-generating if and only if the corresponding channel from $YA$ to $XB$ is no-signaling from $A$ to $B$.
In terms of the Choi operator $J_{XYAB}$, this condition reads \cite{leung2015power}:
\begin{align}\label{eq:fns_constraint}
    J_{YAB} = J_{YB} \ox \frac{\idop_A}{d_A}
\end{align}
where again $\idop_A$ is the identity operator on system $A$, and $d_A$ is the dimension of the corresponding Hilbert space. We call this constraint the No Forward Signaling Condition.

\medskip 

{\bf One-shot classical capacity assisted by different types of causal structures.}
Here we show that the evaluation of the one-shot classical capacity of a list of quantum channels $\vec{\cC}$, assisted by various types of operations with definite and indefinite causal order can be reduced to a semidefinite program (SDP).
More generally, we consider the task of one-shot classical communication assisted by operations in a generic set of supermaps $\bf X$, with the only requirement that $\bf X$ is convex and closed under the local application of quantum channels at the sender's and receiver's ends.
Mathematically, the second condition is that, for every supermap $\mS$ in $\bf X$ with systems $A$ and $B$ at the sender's and receiver's end, respectively, for every pair of systems $A'$ and $B'$, and for every encoding channel $\cE$ (decoding channel $\cD$) transforming system $A'$ into system $A$ (system $B$ into system $B'$), the supermap $\mS_{\cE, \cD}$ defined by $\mS_{\cE,\cD} (\,\vec{\cC}\,) \coloneqq \cD\circ \mS(\,\vec{\cC}\,) \circ \cE$ is also in $\bf X$.
A consequence of this condition is that the set $\bf X$ contains supermaps with input systems $A'$ and output systems $B'$ of all possible dimensions.

The general settings are as follows.
An initial set of quantum devices, described by a set of quantum channels $\vec{\cC}$, is assembled into a communication link described by channel $\mS(\,\vec{\cC}\,)$, by applying a supermap $\mS$ in the set $\bf X$.
We denote by ${\bf X}_{AB}$ the set of supermaps in the set $\bf X$ that have systems $A$ and $B$ at the sender's and receiver's ends, respectively.
For the communication of $m$ messages, the channel $\mS(\,\vec{\cC}\,)$ is sandwiched between an encoding channel $\cE_m$, performed at the sender's side, and a decoding channel $\cD_m$, performed at the receiver's side.
Here, the encoding channel $\cE_m$ transforms an $m$-dimensional classical system $A_m$ into system $A$, while the decoding channel $\cD_m$ transforms system $B$ into an $m$-dimensional classical system $B_m$.

On average over all possible $m$ messages, the probability of error of the above scheme is $p(\vec{\cC},\mS ,m,\cE_m, \cD_m) \coloneqq 1 - \frac{1}{m}\sum_{j=0}^{m-1} \bra{j} \, (\cD_m \circ \mS(\vec{\cC}) \circ \cE_m) \lrp{\proj{j}} \, \ket{j}$.
Minimizing over all possible supermaps and over all possible encoding and decoding channels, we then obtain the minimum error probability
\begin{align}\label{eq:min_error_proba_def}
\begin{aligned}
    p^{\bf X}( \vec{\cC},m )&: = \min_{\mS\in {\bf X}_{AB}} \min_{\cE_m, \cD_m}\, p(\vec{\cC},\mS,m,\cE_m, \cD_m)\\
    &= \min_{\mS'\in {\bf X}_{A_mB_m}}\, 1 - \frac{1}{m}\sum_{j=0}^{m-1} \bra{j} \, \mS'(\,\vec{\cC}\,) \lrp{\proj{j}} \, \ket{j},
\end{aligned}
\end{align}
where the second equality follows from the fact that $\bf X$ is closed under local channels on the sender's and receiver's end.

In Supplementary Note~\ref{supp_sec:capacity}, we show that the minimum error probability is equal to the minimum diamond norm
\begin{align}\label{eq:min_diamond_norm_def}
    \omega^{\bf X}(\vec{\cC}, \Delta_m) \coloneqq \min_{\mS \in {\bf X}_{A_mB_m}} \|\mS(\,\vec{\cC}\,) - \Delta_m\|_\diamond / 2,
\end{align}
where $\Delta_m$ is the ideal classical channel for the transmission of $m$ messages.
Hence, the maximum number of classical bits that can be reliably transmitted (with average error probability bounded by $\epsilon$) through a single use of the channels $\vec{\cC}$ assisted by supermaps in $\bf X$ is
\begin{align}
\begin{aligned}\label{eq:capacity_def}
    C_\epsilon^{\bf X} (\,\vec{\cC}\,) \coloneqq \log_2 \max &\; m\\
    \text{\rm s.t.} &\; \omega^{\bf X}(\vec{\cC}, \Delta_m) \le \epsilon.
\end{aligned}
\end{align}
We call $C_\epsilon^{\bf X} (\,\vec{\cC}\,)$ the one-shot classical capacity of the channels $\vec{\cC}$ assisted by supermaps in ${\bf X}$.

In Supplementary Note~\ref{supp_sec:capacity}, we further show that the evaluation of $C_\epsilon^{\bf X} (\,\vec{\cC}\,)$ can be reduced to an SDP.
In the particular case where $\bf X$ is the set of supermaps with parallel placement, the SDP was provided in Refs.~\cite{leung2015power,wang2017semidefinite}.
When $\bf X$ is the set of all free supermaps in our resource theory, we show that the SDP is
\begin{align}
\begin{aligned}
    C_\epsilon^{\bf Free} (\,\vec{\cC}\,) = \log_2 \max &\; \left\lfloor m\right\rfloor\\
    \text{\rm s.t.} &\; E_{XY} \star J^{\vec{\cC}}_{XY} \ge m(1 - \epsilon),\\
    &\; E_{Y} = \idop_{Y},\, 0 \leq E_{XY} \leq F_{XY},\\
    &\; \cL^{\rm NS}_{XY}\lrp{F_{XY}} = \frac{m}{d_{X}}\idop_{XY},
\end{aligned}
\end{align}
where $m$ is a real number, $E_{XY}$ and $F_{XY}$ are Hermitian matrices, and $\star$ denotes the link product operation~\cite{chiribella2008quantum,chiribella2009theoretical}, defined as $P_{AB} \star Q_{BC} \coloneqq \tr_B\lrb{\lrp{P_{AB} \ox \idop_C} \lrp{\idop_A \ox Q_{BC}^{{\rm T}_B}}}$, where ${\rm T}_B$ is the partial transpose over system $B$, and $P_{AB}$ and $Q_{BC}$ are arbitrary linear operators on the Hilbert space of systems $AB$ and $BC$, respectively.

Similarly, the one-shot classical capacity assisted by supermaps in ${\bf FreeDef}$ can also be formulated as an SDP, as shown in Supplementary Note~\ref{supp_sec:capacity}.
These SDP formulations allow us to rigorously prove the advantage of indefinite causal order (see Supplementary Note~\ref{supp_sec:advantage} for the proof) as well as to numerically compute the capacities for different $\vec{\cC}$.

\section*{Acknowledgments}
We thank Jessica Bavaresco, Timoth{\'e}e Hoffreumon, Marco T{\'u}lio Quintino, Bartosz Regu{\l}a, Xin Wang, and Chengkai Zhu for helpful discussions.
This work is supported by the Chinese Ministry of Science and Technology (MOST) through grant 2023ZD0300600, by the Hong Kong Research Grant Council (RGC) through grants SRFS2021-7S02 and R7035-21F, by the State Key Laboratory of Quantum Information Technologies and Materials, Chinese University of Hong Kong, and by the Agence Nationale de la Recherche (ANR) through the project ANR-22-CE47-0012.
Research at the Perimeter Institute is supported by the Government of Canada through the Department of Innovation, Science and Economic Development Canada and by the Province of Ontario through the Ministry of Research, Innovation and Science.
For the purpose of open access, the authors have applied a CC-BY license to any author-accepted manuscript version arising from this submission.

\bibliography{ref}

\clearpage
\vspace{2cm}
\onecolumngrid
\vspace{2cm}
\begin{center}
{\textbf{\large Supplementary Material for \\ The communication power of indefinite causal order}}
\end{center}
\appendix

\renewcommand{\appendixname}{Supplementary Note}
\renewcommand{\thesection}{\arabic{section}}
\renewcommand{\thesubsection}{\alph{subsection}}
\renewcommand{\theequation}{\arabic{section}.\arabic{equation}}
\renewcommand{\theHequation}{\arabic{section}.\arabic{equation}}

\section{Characterizations of supermaps with different causal structures}\label{supp_sec:causal_order}
In Methods, we mentioned that each supermap mapping $N$ channels with input/output pairs $(X_i, Y_i)_{i=1}^N$ into a quantum channel with input $A$ and output $B$ can be represented by a unique linear operator $J_{XYAB}$, known as its Choi operator.
In this note, we specify the characterizations of supermaps with different causal structures in terms of linear constraints on their Choi operators.

\subsection{Causally fixed supermaps}
Causally fixed supermaps, also known as quantum combs, are transformations between quantum channels that can be implemented by inserting input quantum channels into quantum circuits with open slots in a fixed order.
These supermaps consist of both parallel and sequential strategies, where sequential strategies are sometimes also known as adaptive strategies.
Such supermaps are introduced in Refs.~\cite{chiribella2008transforming,chiribella2008quantum,chiribella2009theoretical} with the following characterization.

\begin{proposition}[Characterization of causally fixed supermaps~\cite{chiribella2008quantum}]\label{prop:fix_char}
    A linear operator $J_{XYAB}$ is the Choi operator of some causally fixed supermap mapping $N$ channels with input/output pairs $(X_i,Y_i)_{i=1}^N$ into a channel with input $A$ and output $B$ if and only if
    \begin{align}
        J_{XYAB} \geq 0,\quad {}_{\lrb{1-Y_n}}J_{X_1Y_1\dots X_nY_n A} = 0 \quad\forall\; n = 1,\dots,N,\quad J_{A} = d_{Y}\idop_{A}.
    \end{align}
\end{proposition}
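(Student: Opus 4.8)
The plan is to establish the two directions of the equivalence separately: the ``only if'' direction by realizing the Choi operator as a link product and reading off the constraints, and the ``if'' direction by an inductive realization argument that reconstructs the circuit tooth by tooth.

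For necessity, recall that a causally fixed supermap is, by definition, implemented by a fixed sequence of channels (``teeth'') $\cT_1,\dots,\cT_{N+1}$ acting on a forward-propagating memory, with $\cT_1:A\to X_1M_1$ and $\cT_{k+1}:M_kY_k\to X_{k+1}M_{k+1}$ for $k=1,\dots,N$ (taking $X_{N+1}\coloneqq B$ and $M_{N+1}$ trivial). Its Choi operator $J_{XYAB}$ is the link product $\star$ of the Choi operators of the teeth. Positivity $J_{XYAB}\ge 0$ is then immediate, since each tooth is completely positive and the link product preserves positivity. The remaining constraints follow by peeling off teeth from the last one inward: tracing out $B$ and using trace preservation of $\cT_{N+1}$ shows that the marginal $J_{X_1Y_1\dots X_NY_NA}=\tr_B[J_{XYAB}]$ is proportional to $\idop_{Y_N}$ on the $Y_N$ factor, i.e. ${}_{\lrb{1-Y_N}}J_{X_1Y_1\dots X_NY_NA}=0$; iterating, each further trace of an output $X_{n+1}$ together with trace preservation of $\cT_{n+1}$ yields ${}_{\lrb{1-Y_n}}J_{X_1Y_1\dots X_nY_nA}=0$, and the last step, using trace preservation of $\cT_1$, gives the normalization $J_A=d_Y\idop_A$.

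For sufficiency I would argue by induction on the number of teeth. It is convenient to first recast the ``proportional to identity'' constraints in the equivalent recursive form $\tr_{X_{n+1}}[J^{(n+1)}]=\idop_{Y_n}\ox J^{(n)}$ (with $X_{N+1}\coloneqq B$), where $J^{(n)}$ is the suitably normalized marginal of $J$ on $AX_1Y_1\dots X_n$; positivity of $J$ passes to each $J^{(n)}$. The base case has no input slots and reduces to the Choi--Jamio\l{}kowski characterization of a single channel $A\to B$. In the inductive step, the $n=N$ constraint isolates the final tooth: conditioning $J=J^{(N+1)}$ on its marginal $J^{(N)}$, through a pseudo-inverse supported on $J^{(N)}$ and a Stinespring dilation that supplies the memory $M_N$, defines a genuine channel $\cT_{N+1}$. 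One then checks that the reduced operator $J^{(N)}$ satisfies the hypotheses for one fewer slot, realizes it by the inductive hypothesis, and reattaches $\cT_{N+1}$ via the link product to recover $J_{XYAB}$.

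I expect the sufficiency direction to be the main obstacle, and within it the explicit, well-defined construction of each tooth as a genuine completely positive and trace-preserving map. The delicate points are taking inverses only on the (possibly deficient) support of the marginal operators $J^{(n)}$, introducing and book-keeping the purifying memory systems via Stinespring, and verifying that the resulting teeth are trace preserving and that their link product closes up to reproduce $J_{XYAB}$ exactly rather than merely on a subspace. The positivity and trace constraints guarantee that such a realization exists, but making the dilation consistent across all $N+1$ steps is where the real work lies.
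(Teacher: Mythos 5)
The paper does not actually prove this proposition: it is stated with a citation to Ref.~\cite{chiribella2008quantum} and imported as a known result from the quantum-combs literature, so there is no in-paper argument to compare against. Your proposal is, in outline, precisely the standard realization proof from that literature, and it is correct as far as it goes. The necessity direction is complete: writing the comb as a link product of teeth $\cT_1,\dots,\cT_{N+1}$, positivity is inherited from complete positivity of each tooth, and peeling the teeth off from the last one inward via their trace-preservation conditions yields exactly ${}_{\lrb{1-Y_n}}J_{X_1Y_1\dots X_nY_nA}=0$ for each $n$, with the final step giving $J_A = d_Y\idop_A$ (the factor $d_Y=\prod_n d_{Y_n}$ arising from tracing the identities left on the $Y_n$). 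The sufficiency direction is also set up correctly --- recasting the constraints in the recursive form $\tr_{X_{n+1}}\lrb{J^{(n+1)}}=\idop_{Y_n}\ox J^{(n)}$ and inducting on the number of teeth --- and the construction you gesture at is the right one. The single fact you should make explicit, since it is what legitimizes the pseudo-inverse step, is that $\tr_{X_{n+1}}\lrb{J^{(n+1)}}=\idop_{Y_n}\ox J^{(n)}$ together with $J^{(n+1)}\geq 0$ forces $\supp\lrp{J^{(n+1)}}\subseteq \cH_{X_{n+1}}\ox\cH_{Y_n}\ox\supp\lrp{J^{(n)}}$; consequently $\lrp{\idop\ox (J^{(n)})^{-1/2}}\sqrt{J^{(n+1)}}$ is well defined, the resulting conditional operator has partial trace over $X_{n+1}$ equal to the projector onto $\supp\lrp{\idop_{Y_n}\ox J^{(n)}}$, and hence defines a genuine channel (after Stinespring dilation, an isometric tooth with memory $M_n$ carrying that support). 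With that lemma the induction closes and the link product of the reconstructed teeth reproduces $J_{XYAB}$ exactly. So your plan is sound and matches the cited source; it is simply not carried through in full detail, which is no worse than the paper itself, which delegates the entire proof to the reference.
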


Parallel strategies correspond to the simplest type of quantum combs, the single-slot ones:
\begin{align}
    J_{XYAB} \geq 0,\quad {}_{\lrb{1-Y}}J_{XYA} = 0,\quad J_{A} = d_{Y}\idop_{A}.
\end{align}
In the above characterization and the rest of the paper, we use the notation ${}_Y J \coloneqq \tr_Y\lrb{J} \ox \idop_Y/d_Y$, ${}_1 J \coloneqq J$, and ${}_{\lrb{\sum_j s_j Y_j}}J \coloneqq \sum_j s_j \cdot {}_{Y_j} J$ for $s_j\in\{-1,1\}$.
For example, ${}_{\lrb{1-Y}}J_{XYA} = {}_1 J_{XYA} - {}_Y J_{XYA} = J_{XYA} - J_{XA} \ox I_Y / d_Y$.
In these equations, $d_Y$ denotes the dimension of system $Y$ (and we use a similar notation for any other system). Furthermore, recall that we use the convention that dropping a subscript on an operator means tracing the corresponding system out: {\em e.g.}, $J_{XYA}=\tr_B [J_{XYAB}]$, $J_A=\tr_{XYB} [J_{XYAB}]$.

\subsection{QC-CC operations and causally definite supermaps}
Causally fixed supermaps are not the most general type of supermaps allowed by definite causal order.
For example, by dynamically controlling the order of input channels in a classical manner, one can realize supermaps that are causally non-fixed but still causally definite~\cite{oreshkov2016causal,wechs2021quantum}.
We refer to such supermaps realizable by quantum circuits with classical control (QC-CC) as QC-CC operations.

\begin{proposition}[Characterization of QC-CC operations~\cite{wechs2021quantum}]\label{prop:def_char}
    A linear operator $J_{XYAB}$ is the Choi operator of some QC-CC operation mapping $N$ channels with input/output pairs $(X_i,Y_i)_{i=1}^N$ into a channel with input $A$ and output $B$ if and only if there exist Hermitian operators $J^{(k_1,\dots,k_N)}_{XYAB}$ for all $(k_1,\dots,k_N)$, where $(k_1,\dots,k_N)$ enumerates all ordered sequences consisting of numbers $1$ to $N$ with each number appearing once in each sequence, such that
    \begin{align}
    \begin{gathered}
        J_{XYAB} = \sum_{(k_1,\dots,k_N)} J^{(k_1,\dots,k_N)}_{XYAB},\quad \sum_{(k_1,\dots,k_N)} J^{(k_1,\dots,k_N)}_{A} = d_{Y}\idop_{A},\\
        J^{(k_1,\dots,k_N)}_{XYAB} \geq 0 \quad\forall\; (k_1,\dots,k_N),\\
        {}_{\lrb{1-Y_{k_n}}}\lrp{\sum_{(k_{n+1},\dots,k_N)} J^{(k_1,\dots,k_n,k_{n+1},\dots,k_N)}_{X_{k_1}Y_{k_1}\dots X_{k_n}Y_{k_n} A}} = 0 \quad\forall\; n=1,\dots,N-1 \quad\text{and}\quad \forall\; (k_1,\dots,k_n),\\
        {}_{\lrb{1-Y_{k_N}}}J^{(k_1,\dots,k_N)}_{XYA} = 0 \quad \forall\; (k_1,\dots,k_N),
    \end{gathered}
    \end{align}
    where the sum in the third line is over all sequences $(k_{n+1},\dots,k_N)$ that complete $(k_1,\dots,k_n)$, i.e., such that $(k_1,\dots,k_n,k_{n+1},\dots,k_N)$ contains all numbers from 1 to $N$.
\end{proposition}

It has been shown that QC-CC operations are equivalent to causally definite supermaps when the number of input channels is $N=2$ (or $N=3$ if $A$ is trivial, {\em i.e.}, $d_A=1$)~\cite{wechs2019on,wechs2021quantum}.
Whether this equivalence holds more generally for $N \geq 3$ still remains an open problem.
Since our analysis of the communication advantage of indefinite causal order is limited to the case $N = 2$, it suffices to consider only QC-CC operations.

\subsection{QC-QC operations}
Beyond classical control of the order as in QC-CC operations, one can consider supermaps that apply operations in a quantum, coherently controlled order~\cite{wechs2021quantum,purves2021quantum}.
Such quantum circuits with quantum control of causal order (QC-QC) can exhibit indefinite causal order and can be reproduced in photonic experiments.
They are characterized as follows:

\begin{proposition}[Characterization of QC-QC operations~\cite{wechs2021quantum}]\label{prop:qcqc_char}
    A linear operator $J_{XYAB}$ is the Choi operator of some QC-QC operation mapping $N$ channels with input/output pairs $(X_i,Y_i)_{i=1}^N$ into a channel with input $A$ and output $B$ if and only if there exist Hermitian operators $J^{(\hat{K}_n,k_{n+1})}_{X_{\hat{K}_n}X_{k_{n+1}}Y_{\hat{K}_n}A}$ for all strict subsets $\hat{K}_n$ of $\hat{N} \coloneqq \{1,\dots,N\}$ with $|\hat{K}_{n}| = n$ and $k_{n+1} \in \hat{N}\backslash\hat{K}_n$ such that
    \begin{align}
    \begin{gathered}
        J_{XYAB} \geq 0,\quad J^{(\hat{K}_n,k_{n+1})}_{X_{\hat{K}_n}X_{k_{n+1}}Y_{\hat{K}_n}A} \geq 0 \quad\forall\; \hat{K}_n \subsetneq \hat{N} \,\text{and}\, k_{n+1} \in \hat{N}\backslash\hat{K}_n,\\
        J_{XYA} = \sum_{k_N \in \hat{N}} J^{(\hat{N}\backslash\{k_N\}, k_N)}_{XY_{\hat{N}\backslash\{k_N\}}A} \ox \frac{\idop_{Y_{k_N}}}{d_{Y_{k_N}}},\\
        \sum_{k_{n+1} \in \hat{N}\backslash\hat{K}_n} J^{(\hat{K}_n,k_{n+1})}_{X_{\hat{K}_n}Y_{\hat{K}_n}A} = \sum_{k_n \in \hat{K}_n} J^{(\hat{K}_n\backslash\{k_n\},k_n)}_{X_{\hat{K}_n}Y_{\hat{K}_n\backslash\{k_n\}}A} \ox \frac{\idop_{Y_{k_n}}}{d_{Y_{k_n}}} \quad\forall\; \emptyset \subsetneq \hat{K}_n \subsetneq \hat{N},\\
        \sum_{k_1 \in \hat{N}} J^{(\emptyset,k_1)}_A = d_Y \idop_A,
    \end{gathered}
    \end{align}
    where $X_{\hat{K}_n}$ ($Y_{\hat{K}_n}$) is a shorthand for $X_{k_1}\cdots X_{k_n}$ ($Y_{k_1}\cdots Y_{k_n}$) with $\hat{K}_n = \{k_1,\dots,k_n\}$.
\end{proposition}

\subsection{General supermaps}
The most general causal structure allows for indefinite causal order even beyond QC-QC operations~\cite{chiribella2013quantum,oreshkov2012quantum}.
The characterization of general, possibly causally indefinite supermaps was briefly discussed in Methods.
The formalism of process matrices was first proposed in Ref.~\cite{oreshkov2012quantum} to describe all possible transformations, including causally indefinite ones, from multipartite local quantum operations to probability distributions.
Such transformations are named as processes.
In this work, a general supermap mapping $N$ channels to a new channel can be regarded as a process involving $N+2$ parties.
The first $N$ parties have input/output pairs $(X_i,Y_i)_{i=1}^N$.
The $(N+1)$-th party has an output system $A$ and a trivial input system.
The $(N+2)$-th party has an input system $B$ and a trivial output system.
According to Ref.~\cite{araujo2015witnessing}, a linear operator $J_{XYAB}$ is the Choi operator of such a general supermap if and only if
\begin{align}
\begin{gathered}\label{supp_eq:gen_constraints_init}
    J_{XYAB} \geq 0,\quad \tr\lrb{J_{XYAB}} = d_{YA},\quad \cL\lrp{J_{XYAB}} = J_{XYAB},
\end{gathered}
\end{align}
where $\cL$ is a projection onto a linear subspace, defined by $\cL\lrp{J_{XYAB}} \coloneqq {}_{\lrb{1 - \lrp{\prod_{j=1}^{N}\lrp{1-Y_j+X_jY_j}}B + XYAB}} J_{XYAB}$.
We now show that this above characterization is equivalent to a simpler one that has a clear physical interpretation.

\begin{proposition}[Characterization of general supermaps]\label{prop:gen_char}
    A linear operator $J_{XYAB}$ is the Choi operator of some general supermap mapping $N$ channels with input/output pairs $(X_i,Y_i)_{i=1}^N$ into a channel with input $A$ and output $B$ if and only if
    \begin{align}
        J_{XYAB} \geq 0 \quad\text{and}\quad \cL^{\rm NS}_{XY}\lrp{J_{XYA}} = \frac{\idop_{XYA}}{d_{X}},
    \end{align}
    where $\cL^{\rm NS}_{XY}\lrp{J_{XYA}}$ is a shorthand for $(\cL^{\rm NS}_{XY} \ox \cI_A) \lrp{J_{XYA}}$, $\cI_A$ is the identity map acting on system $A$, and $\cL^{\rm NS}_{XY}$ is a linear map acting on systems $XY$ as $\cL^{\rm NS}_{XY}\lrp{J_{XY}} \coloneqq {}_{\lrb{\prod_{j=1}^{N}\lrp{1-Y_j+X_jY_j}}} J_{XY}$.
\end{proposition}
\begin{proof}
    Consider that
    \begin{align}
        \cL\lrp{J_{XYAB}} &= J_{XYAB} - {}_{\lrb{\prod_{j=1}^{N}\lrp{1-Y_j+X_jY_j}}} J_{XYA} \ox \frac{\idop_{B}}{d_{B}} + \frac{\tr\lrb{J_{XYAB}}\idop_{XYAB}}{d_{XYAB}}\\
        &= J_{XYAB} - \lrp{{}_{\lrb{\prod_{j=1}^{N}\lrp{1-Y_j+X_jY_j}}} J_{XYA} - \frac{\idop_{XYA}}{d_{X}}} \ox \frac{\idop_{B}}{d_{B}},
    \end{align}
    where the second equality follows from $\tr\lrb{J_{XYAB}} = d_{YA}$ as given in Eq.~\eqref{supp_eq:gen_constraints_init}.
    Then, to require $\cL\lrp{J_{XYAB}} = J_{XYAB}$ is equivalent to requiring
    \begin{align}
        \cL^{\rm NS}_{XY}\lrp{J_{XYA}} = {}_{\lrb{\prod_{j=1}^{N}\lrp{1-Y_j+X_jY_j}}} J_{XYA} = \frac{\idop_{XYA}}{d_{X}}.
    \end{align}
    Note that the projector $\cL^{\rm NS}_{XY}$ is trace-preserving. Therefore, the above constraint implies
    \begin{align}
        \tr\lrb{J_{XYAB}} = \tr\lrb{\cL^{\rm NS}_{XY}\lrp{J_{XYA}}} = d_{YA},
    \end{align}
    which completes the proof.
\end{proof}

This characterization also applies to general processes involving $N$ local parties as long as we set both $A$ and $B$ to trivial one-dimensional systems.
Note that we use the superscript ${\rm NS}$ in $\cL^{\rm NS}_{XY}$ to emphasize the fact that $\cL^{\rm NS}_{XY}$ is actually the orthogonal projection onto the subspace of Hermitian operators spanned by the Choi operators of no-signaling multipartite quantum channels, as we show below.

\begin{proposition}[Projection onto no-signaling subspace]
    $\cL^{\rm NS}_{XY}$ is the orthogonal projection of Hermitian operators onto the subspace spanned by the Choi operators of no-signaling channels from $X=X_1\cdots X_N$ to $Y=Y_1\cdots Y_N$.
\end{proposition}
\begin{proof}
    We call an $N$-partite channel (with input/output pairs $(X_i,Y_i)$ for each of the $N$ parties) no-signaling if the input state at any one local input system does not affect the output state at the rest $N-1$ local output systems.
    In terms of its Choi operator $J_{XY}$, this is equivalent to require
    \begin{align}
        \tr_{Y_j}\lrb{J_{XY}} = \tr_{X_jY_j}\lrb{J_{XY}} \ox \frac{\idop_{X_j}}{d_{X_j}} \quad\forall\; j = 1,\dots,N.
    \end{align}
    These no-signaling constraints can be equivalently written as
    \begin{align}
        {}_{\lrb{Y_j - X_jY_j}}J_{XY} = 0 \quad\forall\; j = 1,\dots,N.
    \end{align}
    Note that the no-signaling constraints imply ${}_{Y}J_{XY} = \tr\lrb{J_{XY}}\idop_{XY} / d_{XY}$, which says that $J_{XY}$ is proportional to the Choi operator of some trace-preserving map.
    Due to the linearity of this set of constraints, the subspace spanned by Choi operators of no-signaling channels consists of all Hermitian operators satisfying the same set of constraints.
    For each $j$, the constraint ${}_{\lrb{Y_j - X_jY_j}}J_{XY} = 0$ is satisfied if and only if
    \begin{align}
        \cL^{{\rm NS}}_{X_jY_j}\lrp{J_{XY}} \coloneqq {}_{\lrb{1 - Y_j + X_jY_j}}J_{XY} = J_{XY}.
    \end{align}
    Furthermore, by checking that $\cL^{{\rm NS}}_{X_jY_j} \circ \cL^{{\rm NS}}_{X_jY_j} = \cL^{{\rm NS}}_{X_jY_j}$ and it is self-adjoint, one can verify that $\cL^{{\rm NS}}_{X_jY_j}$ is the orthogonal projection onto the subspace of Hermitian operators satisfying ${}_{\lrb{Y_j - X_jY_j}}J_{XY} = 0$. The intersection of subspaces associated with all $\cL^{{\rm NS}}_{X_jY_j}$ is exactly the no-signaling subspace. Since all the projections $\cL^{{\rm NS}}_{X_jY_j}$ commute as they all act on different systems, the orthogonal projection onto the no-signaling subspace is given by
    \begin{align}
        \cL^{\rm NS}_{X_1Y_1} \circ \cdots \circ \cL^{\rm NS}_{X_NY_N} = \cL^{\rm NS}_{XY},
    \end{align}
    which completes the proof.
\end{proof}

\subsection{Characterization of signaling-non-generating supermaps}
Our resource theory considers as free those supermaps that transform lists of replacement channels into replacement channels. These signaling-non-generating supermaps can be characterized as follows:

\begin{proposition}
    A general supermap transforming $N$ channels with input/output pairs $(X_i,Y_i)_{i=1}^N$ into a channel with input $A$ and output $B$ is signaling-non-generating if and only if it satisfies the No Forward Signaling Condition: its corresponding quantum channel, with input $YA$ and output $XB$, is no-signaling from $A$ to $B$.
\end{proposition}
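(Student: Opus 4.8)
The plan is to work entirely at the level of Choi operators and the link product $\star$. Write $J_{XYAB}$ for the Choi operator of the supermap---equivalently, of its corresponding channel with input $YA$ and output $XB$---and set $\tilde J_{YAB} \coloneqq \tr_X\lrb{J_{XYAB}}$. A list of replacement channels $\vec{\cR} = (\cR_1,\dots,\cR_N)$ with $\cR_i(\rho)=\rho_i$ has Choi operator $\bigotimes_i (\idop_{X_i}\ox\rho_i) = \idop_X \ox \rho_Y$, where $\rho_Y \coloneqq \rho_1\ox\cdots\ox\rho_N$ ranges over all product states of $Y$. Applying the supermap is the link product, and a short computation gives
\begin{align}
J_{XYAB}\star(\idop_X\ox\rho_Y) = \tr_Y\lrb{\tilde J_{YAB}\,(\rho_Y^{\tp}\ox\idop_{AB})}.
\end{align}
The supermap is signaling-non-generating exactly when this output Choi operator has replacement form $\idop_A\ox\tau_{\rho}$ for every product state $\rho_Y$, while the No Forward Signaling Condition~\eqref{eq:fns_constraint} reads $\tilde J_{YAB} = \frac{\idop_A}{d_A}\ox \tr_A\lrb{\tilde J_{YAB}}$. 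Both properties are thus statements about $\tilde J_{YAB}$, and the task reduces to showing that the pointwise replacement condition is equivalent to this factorization.

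For the direction ``No Forward Signaling $\Rightarrow$ signaling-non-generating,'' I would substitute the factorization $\tilde J_{YAB} = \frac{\idop_A}{d_A}\ox\sigma_{YB}$, with $\sigma_{YB}\coloneqq\tr_A\lrb{\tilde J_{YAB}}$, into the formula above. The identity factor on $A$ passes through $\tr_Y$ untouched, yielding $\idop_A \ox \frac{1}{d_A}\tr_Y\lrb{\sigma_{YB}(\rho_Y^{\tp}\ox\idop_B)}$, which is manifestly of replacement form; positivity and normalization of the resulting state on $B$ are automatic, since the link product of valid Choi operators is again a valid channel Choi operator.

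The converse is the crux. Here I would use that the product density matrices $\rho_1\ox\cdots\ox\rho_N$ linearly span the whole operator space $\cL(\cH_Y)$: the density matrices of each $Y_i$ span $\cL(\cH_{Y_i})$ over $\CC$, so their tensor products span $\cL(\cH_Y)$. Since the map $M_Y\mapsto\tr_Y\lrb{\tilde J_{YAB}(M_Y^{\tp}\ox\idop_{AB})}$ is linear and the target set $\{\idop_A\ox M_B\}$ is a linear subspace, the replacement condition---assumed for every product state---extends by linearity to every $M_Y\in\cL(\cH_Y)$. Letting $M_Y$ range over matrix units $\ketbra{y}{y'}$ then extracts every block $\langle y|\tilde J_{YAB}|y'\rangle$ of $\tilde J_{YAB}$ in a product basis of $Y$, and each block lies in $\{\idop_A\ox(\cdot)_B\}$; reassembling the blocks gives $\tilde J_{YAB}=\idop_A\ox\sigma_{YB}$ with $\sigma_{YB}=\tr_A\lrb{\tilde J_{YAB}}/d_A$, which is exactly the No Forward Signaling Condition. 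I expect this spanning step to be the main obstacle: one must check that imposing replacement form only for genuine product \emph{states}---a convex rather than linear family---already pins down the global tensor structure, which succeeds precisely because those states span the ambient operator space and the constraint defines a closed linear subspace.
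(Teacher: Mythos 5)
Your proof is correct. It establishes the same underlying equivalence as the paper---namely that, once a list of replacement channels fixes the $Y$ input to a product state $\sigma_Y$, the induced channel $\rho_A \mapsto \tr_X\lrb{\cS(\rho_A \ox \sigma_Y)}$ is a replacement channel precisely when no signal passes from $A$ to $B$---but you execute it in the Choi picture rather than operationally. The paper argues by contradiction with states and channels: if the output is not a replacement channel it distinguishes two inputs $\rho_A, \rho'_A$, contradicting no-signaling, and conversely a violation of no-signaling witnessed by some $(\rho_A,\rho'_A,\sigma_Y)$ yields a list of replacement channels mapped to a non-replacement channel. Your version reduces both conditions to statements about $\tr_X\lrb{J_{XYAB}}$ via the link product and closes the converse with an explicit spanning argument. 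That last step buys you something the paper leaves implicit: for the converse, the witnessing $\sigma_Y$ must be realizable by a list of \emph{independent} replacement channels, i.e.\ it must be a product state across $Y_1,\dots,Y_N$, and the reason product states suffice to witness any violation of Eq.~\eqref{eq:fns_constraint} is exactly your observation that product density matrices span the full operator space on $Y$ while the replacement condition cuts out a linear subspace of outputs. You correctly flag this convex-versus-linear subtlety and resolve it. The trade-off is that the paper's argument is shorter and more physically transparent, while yours is more airtight on the converse and dovetails directly with the Choi-operator characterizations used in the rest of the supplementary material.
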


\begin{proof}
    Let $J_{XYAB}$ be the Choi operator of a given general supermap.
    Because the Choi operator of a replacement channel is of the form $\idop_{X_i} \ox \rho^{(i)}_{Y_i}$ with $\rho^{(i)}_{Y_i}$ being a density matrix, the supermap being signaling-non-generating is equivalent to $J_{XYAB} \star (\idop_X \ox \rho^{(1,\dots,N)}_Y) = J_{YAB} \star \rho^{(1,\dots,N)}_Y$ being the Choi operator of a replacement channel for every possible $\rho^{(1,\dots,N)}_Y$, where $\rho^{(1,\dots,N)}_Y \coloneqq \bigotimes_{i=1}^N \rho^{(i)}_{Y_i}$.
    The latter condition can be written as ${}_{[1-A]} (J_{YAB} \star \rho^{(1,\dots,N)}_Y) = ({}_{[1-A]} J_{YAB}) \star \rho^{(1,\dots,N)}_Y = 0$ for all $\rho^{(1,\dots,N)}_Y$, and this is true if and only if
    \begin{align}\label{supp_eq:forward_no_sig_init}
        {}_{[1-A]} J_{YAB} = 0,
    \end{align}
    {\em i.e.}, if and only if the channel from $YA$ to $XB$ described by $J_{XYAB}$ is no-signaling from $A$ to $B$.
\end{proof}

\section{One-shot classical communication capacity}\label{supp_sec:capacity}
\subsection{Equivalence between minimum error and minimum average probability of error}
\begin{proposition}\label{supp_prop:diamond_prob_equal}
    Let ${\bf X}_{A_mB_m}$ be a convex set of general supermaps mapping $N$ channels with input/output pairs $(X_i,Y_i)_{i=1}^N$ into a channel with input $A_m$ and output $B_m$, where both $A_m$ and $B_m$ are $m$-dimensional systems.
    Given a list $\vec{\cC}$ of $N$ channels, it holds that
    \begin{align}\label{appeq:diamond_prob_equal}
        \omega^{\bf X}(\vec{\cC}, \Delta_m) = p^{\bf X}(\vec{\cC}, m)
    \end{align}
    if ${\bf X}_{A_mB_m}$ is closed under simultaneous permutations on $A_m$ and $B_m$, i.e., for every supermap $\mS \in {\bf X}_{A_mB_m}$ and for every permutation $\tau$ in the symmetric group $S_m$ of degree $m$, the supermap $\mW^\tau \circ \mS \in {\bf X}_{A_mB_m}$, where $\mW^\tau$ acts as $\mW^\tau(\cdot) \coloneqq \cW^\tau_{B\to B} \circ (\cdot) \circ \cW^{\tau^\dagger}_{A\to A}$ with $\cW^\tau$ being the permutation channel associated with $\tau$.
\end{proposition}
\begin{proof}
    Recall the definitions of $p^{\bf X}(\vec{\cC}, m)$ and $\omega^{\bf X}(\vec{\cC}, \Delta_m)$ from Eqs.~\eqref{eq:min_error_proba_def}--\eqref{eq:min_diamond_norm_def} in the Methods.

    For $m = 1$, it is clear that $\omega^{\bf X}(\vec{\cC}, \Delta_1) = p^{\bf X}(\vec{\cC}, 1) = 0$ because there is only one possible state for a one-dimensional quantum system, which makes the communication task trivial.
    Hence, Eq.~\eqref{appeq:diamond_prob_equal} does hold for $m = 1$.

    Now, we consider cases where $m \geq 2$. For a list of $N$ quantum channels $\vec{\cC} = (\cC_1,\dots,\cC_N)$, we can express the minimum average error probability $p^{\bf X}(\vec{\cC}, m)$ in terms of its Choi operator $J^{\vec{\cC}}_{XY} \coloneqq \bigotimes_{i=1}^N J^{\cC_i}_{X_iY_i}$ ($J^{\cC_i}_{X_iY_i}$ is the Choi operator of $\cC_i$) as
    \begin{align}
    &\begin{aligned}
        p^{\bf X}(\vec{\cC}, m) = \min &\; 1 - \frac{1}{m}\sum_{j=0}^{m-1} \tr\lrb{J^\cM_{A_mB_m} \lrp{\proj{j}_{A_m} \ox \proj{j}_{B_m}}}\\
        \text{\rm s.t.} &\; J^\cM_{A_mB_m} = J^\mS_{XYA_mB_m} \star J^{\vec{\cC}}_{XY},\, J^\mS_{XYA_mB_m} \in \widetilde{\bf X}_{A_mB_m}
    \end{aligned}\\
    &\phantom{p^{\bf X}(\vec{\cC}, m)}
    \begin{aligned}\label{appeq:min_err_prob_program}
        \mathllap{} = \min &\; 1 - \frac{1}{m} \tr\lrb{\lrp{J^\mS_{XYA_mB_m} \star J^{\vec{\cC}}_{XY}} J^{\Delta_m}_{A_mB_m}}\\
        \text{\rm s.t.} &\; J^\mS_{XYA_mB_m} \in \widetilde{\bf X}_{A_mB_m},
    \end{aligned}
    \end{align}
    where $J^{\Delta_m}_{A_mB_m} \coloneqq \sum_{j=0}^{m-1} \proj{j}_{A_m} \ox \proj{j}_{B_m}$ is the Choi operator of an $m$-dimensional classical noiseless channel $\Delta_m$, and $\widetilde{\bf X}_{A_mB_m}$ denotes the set of all Choi operators corresponding to supermaps in the given set ${\bf X}_{A_mB_m}$.

    The optimization program in Eq.~\eqref{appeq:min_err_prob_program} can be further simplified by exploiting the symmetry of $J^{\Delta_m}_{A_mB_m}$.
    Specifically, if $J^\mS_{XYA_mB_m}$ is optimal with the corresponding supermap $\mS \in {\bf X}_{A_mB_m}$, then the corresponding Choi operator $J^{\mS^\tau}_{XYA_mB_m}$ of $\mS^\tau \coloneqq \mW^\tau \circ \mS$ is feasible for any permutation $\tau$ in the symmetric group $S_m$ of degree $m$ because $\mS^\tau \in {\bf X}_{A_mB_m}$.
    Given that
    \begin{align}
        J^{\mW^\tau}_{A_mB_mA'_mB'_m} &= \lrp{\sum_{j,k = 0}^{m-1} \ketbra{j}{k}_{A'_m} \ox \ketbra{\tau^\dagger(j)}{\tau^\dagger(k)}_{A_m}} \ox \lrp{\sum_{j',k'= 0}^{m-1} \ketbra{j'}{k'}_{B_m} \ox \ketbra{\tau(j')}{\tau(k')}_{B'_m}}\\
        &= \lrp{\sum_{j,k = 0}^{m-1} \ketbra{\tau(j)}{\tau(k)}_{A'_m} \ox \ketbra{j}{k}_{A_m}} \ox \lrp{\sum_{j',k'= 0}^{m-1} \ketbra{j'}{k'}_{B_m} \ox \ketbra{\tau(j')}{\tau(k')}_{B'_m}}\\
        &= \sum_{j,k,j',k' = 0}^{m-1} \ketbra{j}{k}_{A_m} \ox \ketbra{j'}{k'}_{B_m} \ox \ketbra{\tau(j)}{\tau(k)}_{A'_m} \ox \ketbra{\tau(j')}{\tau(k')}_{B'_m},
    \end{align}
    where we mark the output system as $A'_mB'_m$ for clarity, the Choi operator $J^{\mS^\tau}_{XYA_mB_m}$ can be written as
    \begin{align}
        J^{\mS^\tau}_{XYA_mB_m} &= J^\mS_{XYA_mB_m} \star J^{\mW^\tau}_{A_mB_mA'_mB'_m}\\
        &= \tr_{A_mB_m}\lrb{\lrp{J^\mS_{XYA_mB_m} \ox \idop_{A'_mB'_m}} \lrp{\idop_{XY} \ox \sum_{j,k,j',k' = 0}^{m-1} \ketbra{k}{j}_{A_m} \ox \ketbra{k'}{j'}_{B_m} \ox \ketbra{\tau(j)}{\tau(k)}_{A'_m} \ox \ketbra{\tau(j')}{\tau(k')}_{B'_m}}}\\
        &= \sum_{j,k,j',k' = 0}^{m-1} \tr_{A_mB_m}\lrb{\lrp{J^\mS_{XYA_mB_m} \ox \idop_{A'_mB'_m}} \lrp{\idop_{XY} \ox \ketbra{k}{j}_{A_m} \ox \ketbra{k'}{j'}_{B_m} \ox \ketbra{\tau(j)}{\tau(k)}_{A'_m} \ox \ketbra{\tau(j')}{\tau(k')}_{B'_m}}}\\
        &= \sum_{j,k,j',k' = 0}^{m-1} \tr_{A_mB_m}\lrb{J^\mS_{XYA_mB_m} \lrp{\idop_{XY} \ox \ketbra{k}{j}_{A_m} \ox \ketbra{k'}{j'}_{B_m}}} \ox \ketbra{\tau(j)}{\tau(k)}_{A'_m} \ox \ketbra{\tau(j')}{\tau(k')}_{B'_m}\\
        &= \sum_{j,k,j',k' = 0}^{m-1} \lrp{\idop_{XY} \ox \bra{j}_{A_m} \ox \bra{j'}_{B_m}} J^\mS_{XYA_mB_m} \lrp{\idop_{XY} \ox \ket{k}_{A_m} \ox \ket{k'}_{B_m}} \ox \ketbra{\tau(j)}{\tau(k)}_{A'_m} \ox \ketbra{\tau(j')}{\tau(k')}_{B'_m}\\
        &= \sum_{j,k,j',k' = 0}^{m-1} \lrp{\idop_{XY} \ox \ket{\tau(j)}_{A'_m}\bra{j}_{A_m} \ox \ket{\tau(j')}_{B'_m}\bra{j'}_{B_m}} J^\mS_{XYA_mB_m} \lrp{\idop_{XY} \ox \ket{k}_{A_m}\bra{\tau(k)}_{A'_m} \ox \ket{k'}_{B_m}\bra{\tau(k')}_{B'_m}}\\
        &= \lrp{\idop_{XY} \ox \sum_{j = 0}^{m-1} \ket{\tau(j)}_{A'_m}\bra{j}_{A_m} \ox \sum_{j' = 0}^{m-1} \ket{\tau(j')}_{B'_m}\bra{j'}_{B_m}} J^\mS_{XYA_mB_m}\nonumber\\
        &\qquad\qquad \lrp{\idop_{XY} \ox \sum_{k = 0}^{m-1} \ket{k}_{A_m}\bra{\tau(k)}_{A'_m} \ox \sum_{k' = 0}^{m-1} \ket{k'}_{B_m}\bra{\tau(k')}_{B'_m}}\\
        &= \lrp{\idop_{XY} \ox \tau_{A_m} \ox \tau_{B_m}} J^\mS_{XYA_mB_m} \lrp{\idop_{XY} \ox \tau_{A_m} \ox \tau_{B_m}}^\dagger.
    \end{align}
    More importantly, $J^{\mS^\tau}_{XYA_mB_m}$ achieves the same average probability of error as $J^\mS_{XYA_mB_m}$ since
    \begin{align}
        \tr\lrb{\lrp{J^{\mS^\tau}_{XYA_mB_m} \star J^{\vec{\cC}}_{XY}} J^{\Delta_m}_{A_mB_m}} &= \tr\lrb{\lrp{\lrp{\idop_{XY} \ox \tau_{A_m} \ox \tau_{B_m}} J^\mS_{XYA_mB_m} \lrp{\idop_{XY} \ox \tau_{A_m} \ox \tau_{B_m}}^\dagger \star J^{\vec{\cC}}_{XY}} J^{\Delta_m}_{A_mB_m}}\\
        &= \tr\lrb{\lrp{J^\mS_{XYA_mB_m} \star J^{\vec{\cC}}_{XY}} \lrp{\tau_{A_m}\ox \tau_{B_m}}^\dagger J^{\Delta_m}_{A_mB_m} \lrp{\tau_{A_m}\ox \tau_{B_m}}}\\
        &= \tr\lrb{\lrp{J^\mS_{XYA_mB_m} \star J^{\vec{\cC}}_{XY}} J^{\Delta_m}_{A_mB_m}},
    \end{align}
    where the last equality holds due to $\lrp{\tau_{A_m}\ox \tau_{B_m}}^\dagger J^{\Delta_m}_{A_mB_m} \lrp{\tau_{A_m}\ox \tau_{B_m}} = J^{\Delta_m}_{A_mB_m}$.
    Thus, the optimality of $J^\mS_{XYA_mB_m}$ implies the optimality of $J^{\mS^\tau}_{XYA_mB_m}$.
    Moreover, any convex combination of optimal Choi operators is still optimal due to the convexity of ${\bf X}_{A_mB_m}$ and the linearity of the average error probability, implying that
    \begin{align}
        J^{\mS'}_{XYA_mB_m} \coloneqq \frac{1}{m!}\sum_{\tau\in S_m} J^{\mS^\tau}_{XYA_mB_m}
    \end{align}
    is also optimal.
    Then, by Schur's Lemma, we can restrict the optimization range of the operator $J^\mS_{XYA_mB_m}$ to operators of the form
    \begin{align}
        J^\mS_{XYA_mB_m} = E'_{XY}\ox J^{\Delta_m}_{A_mB_m} + F'_{XY}\ox\lrp{\idop_{A_mB_m}-J^{\Delta_m}_{A_mB_m}}, \label{eq:gen_form_Schur}
    \end{align}
    where $E'_{XY}$ and $F'_{XY}$ are arbitrary Hermitian operators.
    Indeed, observe that $J^{\mS'}_{XYA_mB_m}$ commutes with $\idop_{XY} \ox \tau_{A_m} \ox \tau_{B_m}$ for every $\tau \in S_m$, where $\tau_{A_m} \ox \tau_{B_m}$ is a representation of $S_m$.
    This representation can be decomposed into two irreducible representations on the invariant subspace spanned by $\{\ket{j}_{A_m}\ket{j}_{B_m}\}_{j=0}^{m-1}$ and its orthogonal complement.
    Then, Schur's Lemma implies Eq.~\eqref{eq:gen_form_Schur}, where $J^{\Delta_m}_{A_mB_m}$ and $\lrp{\idop_{A_mB_m}-J^{\Delta_m}_{A_mB_m}}$ are the projectors onto the aforementioned invariant subspaces.

    With this, $J^\mS_{XYA_mB_m} \star J^{\vec{\cC}}_{XY}$ becomes
    \begin{align}
        \lrp{E'_{XY}\ox J^{\Delta_m}_{A_mB_m} + F'_{XY}\ox\lrp{\idop_{A_mB_m}-J^{\Delta_m}_{A_mB_m}}} \star J^{\vec{\cC}}_{XY} = \lrp{E'_{XY} \star J^{\vec{\cC}}_{XY}}J^{\Delta_m}_{A_mB_m} + \lrp{F'_{XY} \star J^{\vec{\cC}}_{XY}}\lrp{\idop_{A_mB_m}-J^{\Delta_m}_{A_mB_m}},
    \end{align}
    and noting that $\tr\lrb{J^{\Delta_m}_{A_mB_m} J^{\Delta_m}_{A_mB_m}} = m$ and $\tr\lrb{J^{\Delta_m}_{A_mB_m} \lrp{\idop_{A_mB_m}-J^{\Delta_m}_{A_mB_m}}} = 0$, the optimization program of Eq.~\eqref{appeq:min_err_prob_program} can be written as
    \begin{align}
    \begin{aligned}\label{supp_eq:optim_pb_pX}
        p^{\bf X}(\vec{\cC}, m) = \min &\; 1 - E'_{XY} \star J^{\vec{\cC}}_{XY}\\
        \text{\rm s.t.} &\; E'_{XY}\ox J^{\Delta_m}_{A_mB_m} + F'_{XY}\ox\lrp{\idop_{A_mB_m}-J^{\Delta_m}_{A_mB_m}} \in \widetilde{\bf X}_{A_mB_m}.
    \end{aligned}
    \end{align}

    For the minimum error $\omega^{\bf X}(\vec{\cC}, \Delta_m)$, we can use the semidefinite program (SDP)~\cite{watrous2009semidefinite} for the diamond distance between quantum channels to write it as the following optimization program:
    \begin{align}
    \begin{aligned}
        \omega^{\bf X}(\vec{\cC}, \Delta_m) = \min &\; \epsilon\\
        \text{\rm s.t.} &\; Z_{A_mB_m} \geq J^\mS_{XYA_mB_m} \star J^{\vec{\cC}}_{XY} - J^{\Delta_m}_{A_mB_m},\, Z_{A_mB_m} \geq 0,\, Z_{A_m} \leq \epsilon\idop_{A_m},\, J^\mS_{XYA_mB_m} \in \widetilde{\bf X}_{A_mB_m}.
    \end{aligned}
    \end{align}
    From this explicit optimization program, we can see that restricting the optimization range of the operator $J^\mS_{XYA_mB_m}$ to operators of the form $E'_{XY}\ox J^{\Delta_m}_{A_mB_m} + F'_{XY}\ox\lrp{\idop_{A_mB_m}-J^{\Delta_m}_{A_mB_m}}$, again, does not change the optimal value.
    Hence, we arrive at
    \begin{align}
    \begin{aligned}
        \omega^{\bf X}(\vec{\cC}, \Delta_m) = \min &\; \frac{1}{2}\lrV{\cM - \Delta_m}_\diamond\\
        \text{\rm s.t.} &\; E'_{XY}\ox J^{\Delta_m}_{A_mB_m} + F'_{XY}\ox\lrp{\idop_{A_mB_m}-J^{\Delta_m}_{A_mB_m}} \in \widetilde{\bf X},
    \end{aligned}
    \end{align}
    where $\cM$ denotes the quantum channel whose Choi operator is
    \begin{align}
        J^\cM_{A_mB_m} \coloneqq \lrp{E'_{XY} \star J^{\vec{\cC}}_{XY}} J^{\Delta_m}_{A_mB_m} + \lrp{F'_{XY} \star J^{\vec{\cC}}_{XY}} \lrp{\idop_{A_mB_m}-J^{\Delta_m}_{A_mB_m}}.
    \end{align}
    One constraint that holds for the Choi operator $J^{\mS}_{XYA_mB_m}$ of every general supermap $\mS$ is $\cL^{\rm NS}_{XY}\lrp{J^\mS_{XYA_m}} = \frac{\idop_{XYA_m}}{d_{X}}$ (see Proposition~\ref{prop:gen_char}).
    In terms of operators $E'_{XY}$ and $F'_{XY}$, this constraint can be expressed as
    \begin{align}\label{supp_eq:LNS_E_m1F}
        \cL^{\rm NS}_{XY}\lrp{E'_{XY} + (m-1)F'_{XY}} = \frac{\idop_{XY}}{d_{X}},
    \end{align}
    which implies that
    \begin{align}
        F'_{XY} \star J^{\vec{\cC}}_{XY} &= F'_{XY} \star \cL^{\rm NS}_{XY}\lrp{J^{\vec{\cC}}_{XY}}\\
        &= \cL^{\rm NS}_{XY}\lrp{F'_{XY}} \star J^{\vec{\cC}}_{XY}\\
        &= \frac{1}{m-1}\lrp{\frac{\idop_{XY}}{d_{X}} - \cL^{\rm NS}_{XY}\lrp{E'_{XY}}} \star J^{\vec{\cC}}_{XY}\\
        &= \frac{1}{m-1}\lrp{1 - E'_{XY} \star J^{\vec{\cC}}_{XY}},
    \end{align}
    where the first equality holds because $\vec{\cC}$ is no-signaling, the second equality is due to that $\cL^{\rm NS}_{XY}$ is self-dual, and the last equality uses these two properties together with the fact that $\idop_{XY} \star J^{\vec{\cC}}_{XY} = \tr\lrb{J^{\vec{\cC}}_{XY}} = d_X$.
    It then follows that
    \begin{align}
        J^\cM_{A_mB_m} - J^{\Delta_m}_{A_mB_m} &= \lrp{E'_{XY} \star J^{\vec{\cC}}_{XY} - 1} J^{\Delta_m}_{A_mB_m} + \frac{1}{m-1}\lrp{1 - E'_{XY} \star J^{\vec{\cC}}_{XY}} \lrp{\idop_{A_mB_m}-J^{\Delta_m}_{A_mB_m}}\\
        & = \frac{1 - E'_{XY} \star J^{\vec{\cC}}_{XY}}{m - 1} \lrp{\idop_{A_mB_m} - m J^{\Delta_m}_{A_mB_m}}.
    \end{align}

    It is known that the diamond norm can be written as the following optimization~\cite{wilde2013quantum}:
    \begin{align}
        \lrV{\cM - \Delta_m}_\diamond = \max_{\rho_{R_mA_m}} \lrV{\cM\lrp{\rho_{R_mA_m}} - \Delta_m\lrp{\rho_{R_mA_m}}}_1,
    \end{align}
    where $R_m$ is an m-dimensional quantum system and $\rho_{R_mA_m}$ is a quantum state.
    Denote $1 - E'_{XY} \star J^{\vec{\cC}}_{XY}$ by $\alpha$.
    For any quantum state $\rho_{R_mA_m}$, consider that
    \begin{align}
        \cM\lrp{\rho_{R_mA_m}} - \Delta_m\lrp{\rho_{R_mA_m}} &= \lrp{J^\cM_{A_mB_m} - J^{\Delta_m}_{A_mB_m}} \star \rho_{R_mA_m}\\
        &= \frac{\alpha}{m-1} \lrp{\idop_{A_mB_m} \star \rho_{R_mA_m} - m J^{\Delta_m}_{A_mB_m} \star \rho_{R_mA_m}}\\
        &= \frac{\alpha}{m-1} \lrp{\rho_{R_m} \ox \idop_{B_m} - m \tr_{A_m}\lrb{\lrp{\sum_{j=0}^{m-1} \idop_{R_m} \ox \proj{j}_{A_m} \ox \proj{j}_{B_m}} \lrp{\rho_{R_mA_m} \ox \idop_{B_m}}}}\\
        &= \frac{\alpha}{m-1} \lrp{\sum_{j=0}^{m-1} \rho_{R_m} \ox \proj{j}_{B_m} - m \sum_{j=0}^{m-1}\tr_{A_m}\lrb{\lrp{\idop_{R_m} \ox \proj{j}_{A_m}} \rho_{R_mA_m}} \ox \proj{j}_{B_m}}\\
        &= \frac{\alpha}{m-1} \sum_{j=0}^{m-1}\lrp{\rho_{R_m} - m \tr_{A_m}\lrb{\lrp{\idop_{R_m} \ox \proj{j}_{A_m}} \rho_{R_mA_m}}} \ox \proj{j}_{B_m}\\
        &= \frac{\alpha}{m-1} \sum_{j=0}^{m-1}\lrp{\rho_{R_m} - m\widetilde{\rho}^j_{R_m}} \ox \proj{j}_{B_m}\\
        &= \frac{\alpha}{m-1} \sum_{j=0}^{m-1}\lrp{\rho_{R_m} - \widetilde{\rho}^j_{R_m} - (m - 1) \widetilde{\rho}^j_{R_m}} \ox \proj{j}_{B_m},
    \end{align}
    where we define $\widetilde{\rho}^j_{R_m} \coloneqq \tr_{A_m}\lrb{\lrp{\idop_{R_m} \ox \proj{j}_{A_m}} \rho_{R_mA_m}}$.
    Since $\rho_{R_m} - \widetilde{\rho}^j_{R_m} = \sum_{k=0,\,k\neq j}^{m-1}\widetilde{\rho}^k_{R_m}$ and $\widetilde{\rho}^j_{R_m}$ are both positive semidefinite, we have $\lrV{\rho_{R_m} - \widetilde{\rho}^j_{R_m}}_1 = \tr\lrb{\rho_{R_m} - \widetilde{\rho}^j_{R_m}} = 1 - \tr\lrb{\widetilde{\rho}^j_{R_m}}$ and $\lrV{\widetilde{\rho}^j_{R_m}}_1 = \tr \lrb{\widetilde{\rho}^j_{R_m}}$.
    Then,
    \begin{align}
        \lrV{\cM\lrp{\rho_{R_mA_m}} - \Delta_m\lrp{\rho_{R_mA_m}}}_1 &= \lrV{\frac{\alpha}{m-1} \sum_{j=0}^{m-1}\lrp{\rho_{R_m} - \widetilde{\rho}^j_{R_m} - (m - 1) \widetilde{\rho}^j_{R_m}} \ox \proj{j}_{B_m}}_1\\
        \label{eq:trace_dist_expression}
        &= \frac{\alpha}{m-1} \sum_{j=0}^{m-1}\lrV{\rho_{R_m} - \widetilde{\rho}^j_{R_m} - (m - 1) \widetilde{\rho}^j_{R_m}}_1\\
        &\leq \frac{\alpha}{m-1} \sum_{j=0}^{m-1}\lrp{\lrV{\rho_{R_m} - \widetilde{\rho}^j_{R_m}}_1 + (m - 1)\lrV{\widetilde{\rho}^j_{R_m}}_1}\\
        &= \frac{\alpha}{m-1} \sum_{j=0}^{m-1}\lrp{1 + (m - 2)\tr\lrb{\widetilde{\rho}^j_{R_m}}}\\
        &= \frac{\alpha}{m-1} \lrp{m + (m - 2)\tr\lrb{\rho_{R_m}}}\\
        &= 2\alpha.
    \end{align}
    Since this holds for any state $\rho_{R_mA_m}$, we obtain an upper bound on the diamond distance:
    \begin{align}
        \frac{1}{2}\lrV{\cM - \Delta_m}_\diamond = \frac{1}{2}\max_{\rho_{R_mA_m}} \lrV{\cM\lrp{\rho_{R_mA_m}} - \Delta_m\lrp{\rho_{R_mA_m}}}_1 \leq \alpha.
    \end{align}
    On the other hand, $\frac{1}{2}\lrV{\cM - \Delta_m}_\diamond$ is lower bounded by $\alpha$ because, for $\rho_{R_mA_m} = \proj{0}_{R_m} \ox \proj{0}_{A_m}$, $\rho_{R_m} = \widetilde{\rho}^0_{R_m} = \proj{0}_{R_m}$ and $\widetilde{\rho}^j_{R_m} = 0$ for all $j\neq 0$, Eq.~\eqref{eq:trace_dist_expression} gives
    \begin{align}
        \lrV{\cM\lrp{\rho_{R_mA_m}} - \Delta_m\lrp{\rho_{R_mA_m}}}_1 &= \frac{\alpha}{m-1} \lrp{\lrV{-(m - 1) \widetilde{\rho}^0_{R_m}}_1 + (m - 1)\lrV{\rho_{R_m}}_1}\\
        &= 2\alpha.
    \end{align}
    Since $\alpha$ is both an upper bound and a lower bound on $\frac{1}{2}\lrV{\cM - \Delta_m}_\diamond$, we obtain
    \begin{align}
        \frac{1}{2}\lrV{\cM - \Delta_m}_\diamond = \alpha = 1 - E'_{XY} \star J^{\vec{\cC}}_{XY}.
    \end{align}
    Therefore, we can write the optimization program of the minimum error as
    \begin{align}
    &\begin{aligned}\label{appeq:min_diamond_opt}
        \omega^{\bf X}(\vec{\cC}, \Delta_m) = \min &\; 1 - E'_{XY} \star J^{\vec{\cC}}_{XY}\\
        \text{\rm s.t.} &\; E'_{XY} \ox J^{\Delta_m}_{A_mB_m} + F'_{XY} \ox \lrp{\idop_{A_mB_m}-J^{\Delta_m}_{A_mB_m}} \in \widetilde{\bf X}_{A_mB_m},
    \end{aligned}
    \end{align}
    {\em i.e.}, $\omega^{\bf X}(\vec{\cC}, \Delta_m)$ is obtained from the same optimization program as $p^{\bf X}(\vec{\cC}, m)$ in Eq.~\eqref{supp_eq:optim_pb_pX}, which implies that the two quantities are the same: $\omega^{\bf X}(\vec{\cC}, \Delta_m) = p^{\bf X}(\vec{\cC}, m)$, as we set out to prove.
\end{proof}

\subsection{SDP for one-shot classical capacities assisted by free general supermaps}
\begin{proposition}\label{prop:gen_capacity_sdp}
    Given a list of $N$ quantum channels $\vec{\cC}$, its one-shot classical capacity assisted by free general supermaps with error tolerance $\epsilon \geq 0$ can be formulated as
    \begin{align}
    \begin{aligned}\label{supp_eq:CFree_SDP}
        C_\epsilon^{\bf Free} (\,\vec{\cC}\,) = \log_2 \max &\; \left\lfloor m\right\rfloor\\
        \text{\rm s.t.} &\; E_{XY} \star J^{\vec{\cC}}_{XY} \ge m(1 - \epsilon),\, E_{Y} = \idop_{Y},\\
        &\; 0 \leq E_{XY} \leq F_{XY},\, \cL^{\rm NS}_{XY}\lrp{F_{XY}} = \frac{m}{d_{X}}\idop_{XY},
    \end{aligned}
    \end{align}
    where $m$ is maximized over real numbers, and $E_{XY}$ and $F_{XY}$ are optimization variables.
\end{proposition}
\begin{proof}
    Recall from Eq.~\eqref{eq:capacity_def} in the Methods that
    \begin{align}
    \begin{aligned}
        C_\epsilon^{\bf X} (\,\vec{\cC}\,) \coloneqq \log_2 \max &\; m\\
        \text{\rm s.t.} &\; \omega^{\bf X}(\vec{\cC}, \Delta_m) \le \epsilon,
    \end{aligned}
    \end{align}
    where $m$ is maximized over all positive integers.
    According to Eq.~\eqref{appeq:min_diamond_opt}, we can write $C_\epsilon^{\bf X} (\,\vec{\cC}\,)$ as
    \begin{align}
    \begin{aligned}\label{supp_eq:omega_capacity_program}
        C_\epsilon^{\bf X} (\,\vec{\cC}\,) = \log_2 \max &\; m\\
        \text{\rm s.t.} &\; 1 - E'_{XY} \star J^{\vec{\cC}}_{XY} \le \epsilon,\\
        &\; E'_{XY} \ox J_{A_mB_m}^{\Delta_m} + F'_{XY} \ox \lrp{\idop_{A_mB_m} - J_{A_mB_m}^{\Delta_m}} \in \widetilde{\bf X}.
    \end{aligned}
    \end{align}
    Choi operators of signaling-non-generating supermaps should satisfy constraint~\eqref{supp_eq:forward_no_sig_init}, which is
    \begin{align}\label{supp_eq:forward_no_sig}
        {}_{[1-A_m]}J_{YA_mB_m} = 0
    \end{align}
    for the Choi operator $J_{XYA_mB_m}$ of a supermap.
    Applying this constraint to $E'_{XY}\ox J_{A_mB_m}^{\Delta_m} + F'_{XY}\ox\lrp{\idop_{A_mB_m}-J_{A_mB_m}^{\Delta_m}}$ results in
    \begin{align}
        {}_{[1-A_m]}\lrp{E'_Y \ox J^{\Delta_m}_{A_mB_m} + F'_Y \ox \lrp{\idop_{A_mB_m}-J^{\Delta_m}_{A_mB_m}}} &= E'_Y \ox {}_{[1-A_m]}J^{\Delta_m}_{A_mB_m} + F'_Y \ox {}_{[1-A_m]}\lrp{\idop_{A_mB_m} - J^{\Delta_m}_{A_mB_m}}\\
        &= \lrp{E'_Y - F'_Y} \ox \lrp{J^{\Delta_m}_{A_mB_m} - \frac{1}{m}\idop_{A_mB_m}}\\
        &= 0,
    \end{align}
    which is true if and only if $E'_Y = F'_Y$.

    The case where ${\bf X} = {\bf Free}$ corresponds to the constraints given in Proposition~\ref{prop:gen_char}, in addition to the No Forward Signaling Condition~\eqref{supp_eq:forward_no_sig_init}.
    The first constraint in Proposition~\ref{prop:gen_char} requires both $E'_{XY} \geq 0$ and $F'_{XY} \geq 0$.
    The second constraint gives us
    \begin{align}
        \cL^{\rm NS}_{XY}\lrp{E'_{XY} + (m-1)F'_{XY}} = \frac{\idop_{XY}}{d_X},
    \end{align}
    as in Eq.~\eqref{supp_eq:LNS_E_m1F}, which, together with the observation that $\tr_X[\cL^{\rm NS}_{XY}(\cdot)] = \tr_X[\cdot]$, implies
    \begin{align}
        \tr_X\lrb{\cL^{\rm NS}_{XY}\lrp{E'_{XY} + (m-1)F'_{XY}}} = E'_Y + (m-1)F'_Y = \idop_Y.
    \end{align}
    Under the above equality, the No Forward Signaling Condition $E'_Y = F'_Y$ is equivalent to $mE'_Y = \idop_Y$.
    Thus, we obtain
    \begin{align}
    \begin{aligned}
        C_\epsilon^{\bf Free} (\,\vec{\cC}\,) = \log_2 \max &\; m\\
        \text{\rm s.t.} &\; 1 - E'_{XY} \star J^{\vec{\cC}}_{XY} \le \epsilon,\, mE'_Y = \idop_Y,\\
        &\; E'_{XY} \geq 0,\, F'_{XY} \geq 0,\, \cL^{\rm NS}_{XY}\lrp{E'_{XY} + (m-1)F'_{XY}} = \frac{\idop_{XY}}{d_X}.
    \end{aligned}
    \end{align}
    Relabeling $mE'_{XY}$ as $E_{XY}$ and $mE'_{XY} + m(m-1)F'_{XY}$ as $F_{XY}$ leads to
    \begin{align}
    \begin{aligned}
        C_\epsilon^{\bf Free} (\,\vec{\cC}\,) = \log_2 \max &\; m\\
        \text{\rm s.t.} &\; E_{XY} \star J^{\vec{\cC}}_{XY} \ge m(1 - \epsilon),\, E_Y = \idop_Y,\\
        &\; 0 \leq E_{XY} \leq F_{XY},\, \cL^{\rm NS}_{XY}\lrp{F_{XY}} = \frac{m}{d_X}\idop_{XY}.
    \end{aligned}
    \end{align}
    While $m$ is maximized over integers, one can actually maximize it over real numbers and then take the largest integer smaller than it.
    To see this, suppose $E^*_{XY}$, $F^*_{XY}$, $m^*$ are optimal when $m$ is maximized over real numbers, and $m^*$ is not an integer.
    Then, denoting $p \coloneqq \frac{\left\lfloor m^* \right\rfloor - 1}{m^* - 1}$, one can verify that $p E^*_{XY} + \frac{1-p}{d_X}\idop_{XY}$, $p F^*_{XY} + \frac{1-p}{d_X}\idop_{XY}$, and $\left\lfloor m^* \right\rfloor$ are feasible.
    This allows us to replace $m$ by $\lfloor m \rfloor$ in the optimization program above, as in Eq.~\eqref{supp_eq:CFree_SDP}, and thus shows that $C_\epsilon^{\bf Free} (\,\vec{\cC}\,)$ can be formulated as an SDP.
\end{proof}

\subsection{SDP for one-shot classical capacities assisted by free causally definite supermaps}
In the bipartite case, causally definite supermaps are equivalent to QC-CC operations, and the one-shot classical capacity assisted by free causally definite supermaps can be formulated as an SDP following a similar procedure.

\begin{proposition}
    Given a list of two quantum channels $\vec{C}$, its one-shot classical capacity assisted by free causally definite supermaps with error tolerance $\epsilon \geq 0$ can be formulated as
    \begin{align}
    \begin{aligned}
        C_\epsilon^{\bf FreeDef} (\,\vec{\cC}\,) = \log_2 \max &\; \left\lfloor m\right\rfloor\\
        \text{\rm s.t.} &\; \lrp{E_{XY} + G_{XY}} \star J^{\vec{\cC}}_{XY} \ge m(1 - \epsilon),\, E_Y + G_Y = \idop_Y,\\
        &\; 0 \leq E_{XY} \leq F_{XY},\, 0 \leq G_{XY} \leq H_{XY},\, \tr\lrb{F_{XY} + H_{XY}} = md_Y,\\
        &\; {}_{\lrb{1-Y_2}}F_{XY} = 0,\, {}_{\lrb{1-Y_1}}F_{X_1Y_1} = 0,\, {}_{\lrb{1-Y_1}}H_{XY} = 0,\, {}_{\lrb{1-Y_2}}H_{X_2Y_2} = 0,
    \end{aligned}
    \end{align}
    where $m$ is maximized over real numbers, and $E_{XY}$, $F_{XY}$, $G_{XY}$, and $H_{XY}$ are optimization variables.
\end{proposition}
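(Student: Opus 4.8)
The plan is to follow the template used to establish Proposition~\ref{prop:gen_capacity_sdp} for ${\bf Free}$, replacing the characterization of general supermaps by the characterization of causally definite supermaps from Proposition~\ref{prop:def_char}. First I would start from the definition $C_\epsilon^{\bf FreeDef}(\vec{\cC}) = \log_2\max\{m \,|\, \omega^{\bf FreeDef}(\vec{\cC},\Delta^m)\leq\epsilon\}$ and invoke the equivalence between the minimum error and the minimum average probability of error established above; the set ${\bf FreeDef}$ is convex and closed under simultaneous permutations of $A$ and $B$, so its hypotheses are met. Together with Eq.~\eqref{appeq:min_diamond_opt}, this lets me write the minimum error as $1 - E^{\rm tot}_{XY}\star J^{\vec{\cC}}_{XY}$, where $E^{\rm tot}_{XY}$ is the coefficient of $J^{\Delta^m}_{AB}$ in the symmetry-restricted block form $E^{\rm tot}_{XY}\otimes J^{\Delta^m}_{AB}+F^{\rm tot}_{XY}\otimes(\idop_{AB}-J^{\Delta^m}_{AB})$ of the optimal Choi operator in $\widetilde{\bf FreeDef}$.

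Next I would impose the structure of ${\bf FreeDef}$. For $N=2$, Proposition~\ref{prop:def_char} decomposes the Choi operator as $J_{XYAB}=J^{(1,2)}_{XYAB}+J^{(2,1)}_{XYAB}$, with each term positive, obeying the comb-type causality constraints for its order, and with the joint normalization $J^{(1,2)}_A+J^{(2,1)}_A=d_Y\idop_A$. The key observation is that the simultaneous permutation $\mW^\tau$ acts only on $A$ and $B$ and therefore maps each causal block to a block of the same order; hence the Schur-averaging argument can be applied blockwise, yielding $J^{(1,2)}_{XYAB}=E^{(1,2)}_{XY}\otimes J^{\Delta^m}_{AB}+F^{(1,2)}_{XY}\otimes(\idop_{AB}-J^{\Delta^m}_{AB})$ and analogously for $(2,1)$. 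I would then translate each constraint: positivity becomes $E^{(k)}_{XY},F^{(k)}_{XY}\geq 0$; computing the $A$-marginal via $\tr_B[J^{\Delta^m}_{AB}]=\idop_A$ and $\tr_B[\idop_{AB}-J^{\Delta^m}_{AB}]=(m-1)\idop_A$ turns each causality constraint ${}_{[1-Y_n]}(\cdots)=0$ into the same constraint on $E^{(k)}_{XY}+(m-1)F^{(k)}_{XY}$, and the joint normalization into the scalar condition $\tr[(E^{(1,2)}+(m-1)F^{(1,2)})+(E^{(2,1)}+(m-1)F^{(2,1)})]=d_Y$. After the relabeling $E_{XY}\coloneqq mE^{(1,2)}_{XY}$, $F_{XY}\coloneqq m(E^{(1,2)}_{XY}+(m-1)F^{(1,2)}_{XY})$, $G_{XY}\coloneqq mE^{(2,1)}_{XY}$, $H_{XY}\coloneqq m(E^{(2,1)}_{XY}+(m-1)F^{(2,1)}_{XY})$ — the $N=2$ analogue of the relabeling in Proposition~\ref{prop:gen_capacity_sdp} — the causality constraints become precisely ${}_{[1-Y_2]}F_{XY}={}_{[1-Y_1]}F_{X_1Y_1}=0$ and ${}_{[1-Y_1]}H_{XY}={}_{[1-Y_2]}H_{X_2Y_2}=0$, positivity and dominance become $0\leq E_{XY}\leq F_{XY}$ and $0\leq G_{XY}\leq H_{XY}$, the normalization becomes $\tr[F_{XY}+H_{XY}]=md_Y$, and the objective becomes $(E_{XY}+G_{XY})\star J^{\vec{\cC}}_{XY}\geq m(1-\epsilon)$.

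The step I expect to be the main obstacle is obtaining the operator-valued normalization $E_Y+G_Y=\idop_Y$ from the weaker scalar condition $\tr[F_{XY}+H_{XY}]=md_Y$ together with the forward-no-signaling constraint~\eqref{supp_eq:forward_no_sig}. The resolution is that the causality constraints on $F_{XY}$ (flatness in $Y_2$ globally, and flatness in $Y_1$ after tracing out $X_2Y_2$) already force $\tr_X[F_{XY}]$ to be proportional to $\idop_Y$, namely $\tr_X[F_{XY}]=(\tr[F_{XY}]/d_Y)\,\idop_Y$, and likewise for $H_{XY}$; hence $\tr_X[F_{XY}+H_{XY}]=m\idop_Y$. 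Feeding this into the forward-no-signaling condition, which in block form reads $E^{\rm tot}_Y=F^{\rm tot}_Y$ exactly as in the general case, collapses it to $E_Y+G_Y=\idop_Y$. Finally, I would close with the same real-relaxation argument as in Proposition~\ref{prop:gen_capacity_sdp}: maximizing $m$ over the reals and then taking $\lfloor m\rfloor$ is without loss of generality, since a convex mixture of a real optimum with the normalized identity produces a feasible point at $\lfloor m^*\rfloor$, certifying that the program is an SDP.
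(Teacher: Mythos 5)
Your proposal is correct and follows essentially the same route as the paper's proof: the same blockwise symmetry reduction of $J^{(1,2)}_{XYAB}$ and $J^{(2,1)}_{XYAB}$ to the $J^{\Delta^m}_{AB}$-invariant form, the same translation of the constraints of Proposition~\ref{prop:def_char}, the same relabeling of variables, and the same real-relaxation argument for $m$. The only (immaterial) difference is in obtaining $E_Y+G_Y=\idop_Y$: you derive $F_Y+H_Y=m\idop_Y$ directly from the comb causality constraints together with the trace normalization, whereas the paper invokes the general-supermap normalization $\cL^{\rm NS}_{XY}\lrp{J_{XYA}}=\idop_{XYA}/d_X$ (hence $J_{YA}=\idop_{YA}$) before combining with the forward-no-signaling condition -- both yield the same constraint.
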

\begin{proof}
    In the bipartite case, the constraints given in Proposition~\ref{prop:def_char} on the Choi operator $J_{XYA_mB_m}$ of a causally definite supermap reduce to
    \begin{align}
    \begin{gathered}\label{appeq:def_constraints_bipartite}
        J_{XYA_mB_m} = J^{(1,2)}_{XYA_mB_m} + J^{(2,1)}_{XYA_mB_m},\quad J^{(1,2)}_{A_m} + J^{(2,1)}_{A_m} = d_Y\idop_{A_m},\\
        J^{(1,2)}_{XYA_mB_m} \geq 0,\quad J^{(2,1)}_{XYA_mB_m} \geq 0,\\
        {}_{\lrb{1-Y_1}}J^{(1,2)}_{X_1Y_1A_m} = 0,\quad {}_{\lrb{1-Y_2}}J^{(2,1)}_{X_2Y_2A_m} = 0,\\
        {}_{\lrb{1-Y_2}}J^{(1,2)}_{XYA_m} = 0,\quad {}_{\lrb{1-Y_1}}J^{(2,1)}_{XYA_m} = 0.
    \end{gathered}
    \end{align}
    Since simultaneous permutation at $A_m$ and $B_m$ on $J_{XYA_mB_m}$ does not change the optimal value when solving the optimization program of the minimum error, we can restrict $J^{(1,2)}_{XYA_mB_m}$ and $J^{(2,1)}_{XYA_mB_m}$ to operators of the form
    \begin{align}
    \begin{gathered}
        J^{(1,2)}_{XYA_mB_m} = E'_{XY} \ox J^{\Delta_m}_{A_mB_m} + F'_{XY} \ox \lrp{\idop_{A_mB_m} - J^{\Delta_m}_{A_mB_m}},\\
        J^{(2,1)}_{XYA_mB_m} = G'_{XY} \ox J^{\Delta_m}_{A_mB_m} + H'_{XY} \ox \lrp{\idop_{A_mB_m} - J^{\Delta_m}_{A_mB_m}},
    \end{gathered}
    \end{align}
    where $E'_{XY}$, $F'_{XY}$, $G'_{XY}$, and $H'_{XY}$ are arbitrary Hermitian operators.
    Then, we can rewrite constraints~\eqref{appeq:def_constraints_bipartite} in terms of $E'_{XY}$, $F'_{XY}$, $G'_{XY}$, and $H'_{XY}$.
    
    The constraint $J^{(1,2)}_{A_m} + J^{(2,1)}_{A_m} = d_Y\idop_{A_m}$ is equivalent to
    \begin{align}
        \tr\lrb{E'_{XY} + (m-1)F'_{XY} + G'_{XY} + (m-1)H'_{XY}} = d_Y.
    \end{align}
    The constraints $J^{(1,2)}_{XYA_mB_m} \geq 0$ and $J^{(2,1)}_{XYA_mB_m} \geq 0$ correspond to
    \begin{align}
        E'_{XY} \geq 0,\quad F'_{XY} \geq 0,\quad G'_{XY} \geq 0,\quad \text{and}\quad H'_{XY} \geq 0.
    \end{align}
    Since $J^{(1,2)}_{XYA_m} = \lrp{E'_{XY} + (m-1)F'_{XY}} \ox \idop_{A_m}$, the constraints ${}_{\lrb{1-Y_2}}J^{(1,2)}_{XYA_m} = 0$ and ${}_{\lrb{1-Y_1}}J^{(1,2)}_{X_1Y_1A_m} = 0$ correspond to
    \begin{align}
        {}_{\lrb{1-Y_2}}\lrp{E'_{XY} + (m-1)F'_{XY}} = 0 \quad\text{and}\quad {}_{\lrb{1-Y_1}}\lrp{E'_{X_1Y_1} + (m-1)F'_{X_1Y_1}} = 0.
    \end{align}
    Similarly, ${}_{\lrb{1-Y_1}}J^{(2,1)}_{XYA_m} = 0$ and ${}_{\lrb{1-Y_2}}J^{(2,1)}_{X_2Y_2A_m} = 0$ correspond to
    \begin{align}
        {}_{\lrb{1-Y_1}}\lrp{G'_{XY} + (m-1)H'_{XY}} = 0 \quad\text{and}\quad {}_{\lrb{1-Y_2}}\lrp{G'_{X_2Y_2} + (m-1)H'_{X_2Y_2}} = 0.
    \end{align}
    Finally, comparing $J_{XYA_mB_m} = \lrp{E'_{XY} + G'_{XY}} \ox J^{\Delta_m}_{A_mB_m} + ({F'_{XY} + H'_{XY}}) \ox \lrp{\idop_{A_mB_m} - J^{\Delta_m}_{A_mB_m}}$ with Eq.~\eqref{supp_eq:omega_capacity_program}, we note that $E'_{XY}$ and $F'_{XY}$ in Eq.~\eqref{supp_eq:omega_capacity_program} correspond to $E'_{XY} + G'_{XY}$ and $F'_{XY} + H'_{XY}$, respectively, in this case.
    Hence, we arrive at
    \begin{align}
    \begin{aligned}
        C_\epsilon^{\bf FreeDef} (\,\vec{\cC}\,) = \log_2 \max &\; m\\
        \text{\rm s.t.} &\; \lrp{E'_{XY} + G'_{XY}} \star J^{\vec{\cC}}_{XY} \ge 1 - \epsilon,\, E'_Y + G'_Y = F'_Y + H'_Y,\\
        &\; \tr\lrb{E'_{XY} + (m-1)F'_{XY} + G'_{XY} + (m-1)H'_{XY}} = d_Y,\\
        &\; E'_{XY} \geq 0,\, F'_{XY} \geq 0,\, G'_{XY} \geq 0,\, H'_{XY} \geq 0,\\
        &\; {}_{\lrb{1-Y_2}}\lrp{E'_{XY} + (m-1)F'_{XY}} = 0,\, {}_{\lrb{1-Y_1}}\lrp{E'_{X_1Y_1} + (m-1)F'_{X_1Y_1}} = 0,\\
        &\; {}_{\lrb{1-Y_1}}\lrp{G'_{XY} + (m-1)H'_{XY}} = 0,\, {}_{\lrb{1-Y_2}}\lrp{G'_{X_2Y_2} + (m-1)H'_{X_2Y_2}} = 0.
    \end{aligned}
    \end{align}
    Since $J_{XYAB}$ also satisfy the causality constraint $\cL^{\rm NS}_{XY}\lrp{J_{XYA}} = \frac{\idop_{XYA}}{d_X}$ for a general supermap, we can equivalently write the No Forward Signaling Condition $E'_Y + G'_Y = F'_Y + H'_Y$ as
    \begin{align}
        m\lrp{E'_Y + G'_Y} = \idop_Y,
    \end{align}
    as we did in the case with general supermaps.
    Relabeling $mE'_{XY}$ as $E_{XY}$, $mG'_{XY}$ as $G_{XY}$, $mE'_{XY} + m(m-1)F'_{XY}$ as $F_{XY}$, and $mG'_{XY} + m(m-1)H'_{XY}$ as $H_{XY}$, we obtain
    \begin{align}
    \begin{aligned}
        C_\epsilon^{\bf FreeDef} (\,\vec{\cC}\,) = \log_2 \max &\; m\\
        \text{\rm s.t.} &\; \lrp{E_{XY} + G_{XY}} \star J^{\vec{\cC}}_{XY} \ge m(1 - \epsilon),\, E_Y + G_Y = \idop_Y,\\
        &\; 0 \leq E_{XY} \leq F_{XY},\, 0 \leq G_{XY} \leq H_{XY},\, \tr\lrb{F_{XY} + H_{XY}} = md_Y,\\
        &\; {}_{\lrb{1-Y_2}}F_{XY} = 0,\, {}_{\lrb{1-Y_1}}F_{X_1Y_1} = 0,\, {}_{\lrb{1-Y_1}}H_{XY} = 0,\, {}_{\lrb{1-Y_2}}H_{X_2Y_2} = 0.
    \end{aligned}
    \end{align}
    By the same argument as in the case with general supermaps, we can relax $m$ to be maximized over real numbers and then take the largest integer smaller than it, which completes the proof.
\end{proof}

\subsection{SDP for one-shot classical capacities assisted by free parallel supermaps}
The one-shot classical capacity assisted by free parallel supermaps is the same as that assisted by no-signaling correlations in the standard communication scenario, which can be formulated as the following SDP~\cite{leung2015power,wang2017semidefinite}:
\begin{align}
\begin{aligned}\label{appeq:par_capacity_sdp}
    C_\epsilon^{\bf FreePar} (\,\vec{\cC}\,) = \log_2 \max &\; \left\lfloor m\right\rfloor\\
    \text{\rm s.t.} &\; E_{XY} \star J^{\vec{\cC}}_{XY} \ge m(1 - \epsilon),\, E_Y = \idop_Y,\\
    &\; 0 \leq E_{XY} \leq F_X \ox \idop_Y,\, \tr\lrb{F_X} = m,
\end{aligned}
\end{align}
where $m$ is maximized over real numbers, and $E_{XY}$ and $F_X$ are optimization variables.

\section{Advantage of indefinite causal order in zero-error communication}\label{supp_sec:advantage}
\begin{corollary}[SDPs for one-shot zero-error classical capacity assisted by free general supermaps]
    Given a list of $N$ quantum channels $\vec{\cC}$, its one-shot zero-error classical capacity assisted by free general supermaps can be formulated as
    \begin{align}
    &\begin{aligned}\label{eq:classical_capacity_gen_sdp_primal}
        C_0^{\bf Free} (\,\vec{\cC}\,) = \log_2\max &\; \left\lfloor m\right\rfloor\\
        \text{\rm s.t.} &\; E_{XY} \star J^{\vec{\cC}}_{XY} = m,\, 0 \leq E_{XY} \leq F_{XY},\, \cL^{\rm NS}_{XY}(F_{XY}) = \frac{m}{d_X}\idop_{XY},\, E_Y = \idop_Y
    \end{aligned}\\
    &\phantom{C_0^{\bf Free} (\,\vec{\cC}\,)}
    \begin{aligned}\label{eq:classical_capacity_gen_sdp_dual}
        \mathllap{} \leq \log_2\min &\; \left\lfloor \tr\lrb{N_{Y}} \right\rfloor\\
        \text{\rm s.t.} &\; \cL^{\rm NS}_{XY}\lrp{M_{XY}} \geq 0,\, \cL^{\rm NS}_{XY}\lrp{M_{XY}} + \idop_X \ox N_Y \geq \lrp{1 + \frac{\tr\lrb{M_{XY}}}{d_X}} J^{\vec{\cC}}_{XY}
    \end{aligned}
    \end{align}
    with primal variables $m, E_{XY}, F_{XY}$ and dual variables $M_{XY}, N_Y$.
\end{corollary}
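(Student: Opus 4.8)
The plan is to prove the two displayed lines separately: the equality \eqref{eq:classical_capacity_gen_sdp_primal} is the specialization of Proposition~\ref{prop:gen_capacity_sdp} to $\epsilon=0$, while the bound \eqref{eq:classical_capacity_gen_sdp_dual} is weak Lagrange duality for the resulting semidefinite program. I would obtain the dual program by the standard recipe and then establish the claimed $\le$ by exhibiting the weak-duality chain directly, since only an inequality (not strong duality) is asserted.

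For the primal line I would start from the SDP of Proposition~\ref{prop:gen_capacity_sdp} with $\epsilon=0$. The only difference from the asserted program is that Proposition~\ref{prop:gen_capacity_sdp} gives the inequality $E_{XY}\star J^{\vec{\cC}}_{XY}\ge m$, whereas \eqref{eq:classical_capacity_gen_sdp_primal} asserts the equality $E_{XY}\star J^{\vec{\cC}}_{XY}=m$. To upgrade it, I would use that a list of channels is no-signaling, so $\cL^{\rm NS}_{XY}(J^{\vec{\cC}}_{XY})=J^{\vec{\cC}}_{XY}$, and then run the chain $E_{XY}\star J^{\vec{\cC}}_{XY}\le F_{XY}\star J^{\vec{\cC}}_{XY}=F_{XY}\star\cL^{\rm NS}_{XY}(J^{\vec{\cC}}_{XY})=\cL^{\rm NS}_{XY}(F_{XY})\star J^{\vec{\cC}}_{XY}=\frac{m}{d_X}\,\idop_{XY}\star J^{\vec{\cC}}_{XY}=\frac{m}{d_X}\tr[J^{\vec{\cC}}_{XY}]=m$, using monotonicity of the link product against the PSD operator $J^{\vec{\cC}}_{XY}$ (via $E_{XY}\le F_{XY}$), the self-duality of $\cL^{\rm NS}_{XY}$, the primal constraint $\cL^{\rm NS}_{XY}(F_{XY})=\frac{m}{d_X}\idop_{XY}$, and $\tr[J^{\vec{\cC}}_{XY}]=d_X$. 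Combined with $E_{XY}\star J^{\vec{\cC}}_{XY}\ge m$ this forces equality.

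For the dual line I would fix a primal-feasible triple $(m,E_{XY},F_{XY})$ and a dual-feasible pair $(M_{XY},N_Y)$, set $\beta\coloneqq 1+\tr[M_{XY}]/d_X$ (note $\beta\ge 1$, since $\cL^{\rm NS}_{XY}(M_{XY})\ge0$ and $\cL^{\rm NS}_{XY}$ is trace-preserving give $\tr[M_{XY}]\ge0$), and apply $E_{XY}\star(\cdot)$ to the dual operator inequality $\beta\,J^{\vec{\cC}}_{XY}\le\cL^{\rm NS}_{XY}(M_{XY})+\idop_X\ox N_Y$. Because $E_{XY}\ge0$, link monotonicity yields $\beta m=\beta\,(E_{XY}\star J^{\vec{\cC}}_{XY})\le E_{XY}\star\cL^{\rm NS}_{XY}(M_{XY})+E_{XY}\star(\idop_X\ox N_Y)$. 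I would bound the first term by $F_{XY}\star\cL^{\rm NS}_{XY}(M_{XY})=\cL^{\rm NS}_{XY}(F_{XY})\star M_{XY}=\frac{m}{d_X}\tr[M_{XY}]$ (using $E_{XY}\le F_{XY}$, $\cL^{\rm NS}_{XY}(M_{XY})\ge0$, self-duality, and the primal $F$-constraint), and evaluate the second term as $\tr[E_Y N_Y^{\rm T}]=\tr[N_Y]$ using $E_Y=\idop_Y$. This gives $m+\frac{m}{d_X}\tr[M_{XY}]\le\frac{m}{d_X}\tr[M_{XY}]+\tr[N_Y]$, and after the cancellation of the $\frac{m}{d_X}\tr[M_{XY}]$ terms, $m\le\tr[N_Y]$. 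Minimizing over dual-feasible $N_Y$, then applying the (monotone) floor and $\log_2$, delivers \eqref{eq:classical_capacity_gen_sdp_dual}.

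I expect the main obstacle to be the bookkeeping around the scalar variable $m$, which enters not only the objective but also the right-hand side of $\cL^{\rm NS}_{XY}(F_{XY})=\frac{m}{d_X}\idop_{XY}$; this coupling is exactly what produces the nonstandard coefficient $1+\tr[M_{XY}]/d_X$ in the dual and forces the precise cancellation above. A secondary technical point is keeping the partial transposes in the link product consistent, so that the self-duality identity $A\star\cL^{\rm NS}_{XY}(B)=\cL^{\rm NS}_{XY}(A)\star B$ and the evaluations $\idop_{XY}\star J^{\vec{\cC}}_{XY}=\tr[J^{\vec{\cC}}_{XY}]$ and $E_{XY}\star(\idop_X\ox N_Y)=\tr[E_Y N_Y^{\rm T}]$ hold as used; these follow because $\cL^{\rm NS}_{XY}$ is built from partial traces and therefore commutes with the global transpose.
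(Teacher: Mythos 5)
Your proposal is correct, and it reaches both displayed lines, but by a somewhat different route than the paper. For the primal line, the paper also specializes Proposition~\ref{prop:gen_capacity_sdp} to $\epsilon=0$, but it justifies replacing the inequality $E_{XY}\star J^{\vec{\cC}}_{XY}\ge m$ by an equality constructively: if there were slack $m'-m>0$, it would enlarge $F_{XY}$ to $F_{XY}+\frac{m'-m}{d_X}\idop_{XY}$ and promote $m$ to $m'$. Your chain $E_{XY}\star J^{\vec{\cC}}_{XY}\le F_{XY}\star J^{\vec{\cC}}_{XY}=\cL^{\rm NS}_{XY}(F_{XY})\star J^{\vec{\cC}}_{XY}=\frac{m}{d_X}\tr\lrb{J^{\vec{\cC}}_{XY}}=m$ instead shows that the remaining constraints already forbid any slack, so the two feasible sets literally coincide; this is a slightly stronger observation (it shows the paper's case $m'>m$ is vacuous) and relies only on identities the paper itself establishes elsewhere ($\cL^{\rm NS}_{XY}(J^{\vec{\cC}}_{XY})=J^{\vec{\cC}}_{XY}$ for a list of channels, self-adjointness of $\cL^{\rm NS}_{XY}$, and $\tr\lrb{J^{\vec{\cC}}_{XY}}=d_X$). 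For the dual line, the paper writes out the full Lagrangian, eliminates the multipliers $\lambda, K^{(1)}_{XY}, K^{(2)}_{XY}$, and then invokes weak duality as a general fact (adding the remark that the transpose on $J^{\vec{\cC}}_{XY}$ may be dropped); you instead verify the weak-duality inequality $m\le\tr\lrb{N_Y}$ directly by pairing an arbitrary primal-feasible triple with an arbitrary dual-feasible pair and exploiting the same self-duality and no-signaling identities, making the cancellation of the $\frac{m}{d_X}\tr\lrb{M_{XY}}$ terms explicit. Your version is more self-contained and certificate-like, while the paper's Lagrangian derivation additionally explains where the dual program comes from; both establish the stated bound. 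One minor remark: your observation that $\beta=1+\tr\lrb{M_{XY}}/d_X\ge 1$ is true but not actually needed, since the cancellation goes through regardless of the sign of $\tr\lrb{M_{XY}}$.
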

\begin{proof}
    Since zero-error is a special case where $\epsilon = 0$, we can obtain an SDP of $C_0^{\bf Free} (\,\vec{\cC}\,)$ from Proposition~\ref{prop:gen_capacity_sdp}:
    \begin{align}
    \begin{aligned}
        C_0^{\bf Free} (\,\vec{\cC}\,) = \log_2\max &\; \left\lfloor m\right\rfloor\\
        \text{\rm s.t.} &\; E_{XY} \star J^{\vec{\cC}}_{XY} \geq m,\, 0 \leq E_{XY} \leq F_{XY},\, \cL^{\rm NS}_{XY}(F_{XY}) = \frac{m}{d_X}\idop_{XY},\, E_Y = \idop_Y.
    \end{aligned}
    \end{align}
    The inequality constraint $E_{XY} \star J^{\vec{\cC}}_{XY} \geq m$ can be changed to the equality constraint $E_{XY} \star J^{\vec{\cC}}_{XY} = m$.
    To see this, suppose $m' \coloneqq E_{XY} \star J^{\vec{\cC}}_{XY} > m$.
    Then, we can replace $F_{XY}$ by $F_{XY} + \frac{m' - m}{d_X}\idop_{XY}$ so that $m'$, $E_{XY}$, and $F_{XY} + \frac{m' - m}{d_X}\idop_{XY}$ form a feasible solution.
    Since this works for every $m$, we arrive at the claimed SDP in Eq.~\eqref{eq:classical_capacity_gen_sdp_primal}.

    To derive the dual SDP~\eqref{eq:classical_capacity_gen_sdp_dual}, we first note that
    \begin{align}
    \begin{aligned}
        -C_0^{\bf Free} (\,\vec{\cC}\,) = \log_2 \min &\; \left\lceil -m\right\rceil\\
        \text{\rm s.t.} &\; E_{XY} \star J^{\vec{\cC}}_{XY} = m,\, 0 \leq E_{XY} \leq F_{XY},\, \cL^{\rm NS}_{XY}(F_{XY}) = \frac{m}{d_X}\idop_{XY},\, E_Y = \idop_Y.
    \end{aligned}
    \end{align}
    The Lagrangian associated with the above SDP is
    \begin{align}
    &\begin{aligned}
        & L\lrp{m, E_{XY}, F_{XY}, \lambda, K^{(1)}_{XY}, K^{(2)}_{XY}, K^{(3)}_{XY}, K^{(4)}_Y}\\
        \coloneqq \,& -m + \lambda\lrp{E_{XY} \star J^{\vec{\cC}}_{XY} - m} - \lrang{K^{(1)}_{XY}, E_{XY}} + \lrang{K^{(2)}_{XY}, E_{XY} - F_{XY}}\\
        &\quad + \lrang{K^{(3)}_{XY}, \cL^{\rm NS}_{XY}(F_{XY}) - \frac{m}{d_X}\idop_{XY}} + \lrang{K^{(4)}_Y, E_Y - \idop_Y}
    \end{aligned}\\
    &\begin{aligned}
        = \,& -m\lrp{1 + \lambda + \frac{1}{d_X}\tr\lrb{K^{(3)}_{XY}}} + \lrang{\lambda \lrp{J^{\vec{\cC}}_{XY}}^\tp - K^{(1)}_{XY} + K^{(2)}_{XY} + \idop_X \ox K^{(4)}_Y, E_{XY}}\\
        &\quad + \lrang{\cL^{\rm NS}_{XY}\lrp{K^{(3)}_{XY}} - K^{(2)}_{XY}, F_{XY}} - \tr\lrb{K^{(4)}_Y},
    \end{aligned}
    \end{align}
    where $\lambda, K^{(1)}_{XY}, K^{(2)}_{XY}, K^{(3)}_{XY}, K^{(4)}_Y$ are dual variables. For the corresponding Lagrange dual function
    \begin{align}
        g\lrp{\lambda, K^{(1)}_{XY}, K^{(2)}_{XY}, K^{(3)}_{XY}, K^{(4)}_Y} \coloneqq \inf_{m, E_{XY}, F_{XY}} L\lrp{m, E_{XY}, F_{XY}, \lambda, K^{(1)}_{XY}, K^{(2)}_{XY}, K^{(3)}_{XY}, K^{(4)}_Y}
    \end{align}
    to be bounded, the dual variables must satisfy
    \begin{align}
    \begin{gathered}
        K^{(1)}_{XY} \geq 0,\quad K^{(2)}_{XY} \geq 0,\quad 1 + \frac{1}{d_X}\tr\lrb{K^{(3)}_{XY}} = -\lambda,\\
        \lambda \lrp{J^{\vec{\cC}}_{XY}}^\tp + K^{(2)}_{XY} + \idop_X \ox K^{(4)}_Y = K^{(1)}_{XY},\quad \cL^{\rm NS}_{XY}\lrp{K^{(3)}_{XY}} = K^{(2)}_{XY}.
    \end{gathered}
    \end{align}
    Note that variables $\lambda$ ,$K^{(1)}_{XY}$ and $K^{(2)}_{XY}$ can be removed from the above constraints, resulting in the simplified constraints
    \begin{align}
    \begin{gathered}
         \cL^{\rm NS}_{XY}\lrp{K^{(3)}_{XY}} + \idop_X \ox K^{(4)}_Y \geq \lrp{1 + \frac{\tr\lrb{K^{(3)}_{XY}}}{d_X}} \lrp{J^{\vec{\cC}}_{XY}}^\tp,\quad \cL^{\rm NS}_{XY}\lrp{K^{(3)}_{XY}} \geq 0.
    \end{gathered}
    \end{align}
    Defining $M_{XY} \coloneqq K^{(3)}_{XY}$ and $N_Y \coloneqq K^{(4)}_Y$, we arrive at the dual SDP of maximizing Lagrange dual function $g$ with the above constraints:
    \begin{align}
    \begin{aligned}
        -C_0^{\bf Free} (\,\vec{\cC}\,) \geq \log_2 \max &\; \left\lceil -\tr\lrb{N_Y}\right\rceil\\
        \text{\rm s.t.} &\; \cL^{\rm NS}_{XY}\lrp{M_{XY}} \geq 0,\, \cL^{\rm NS}_{XY}\lrp{M_{XY}} + \idop_X \ox N_Y \geq \lrp{1 + \frac{\tr\lrb{M_{XY}}}{d_X}} \lrp{J^{\vec{\cC}}_{XY}}^\tp,
    \end{aligned}
    \end{align}
    where the inequality is due to weak duality. Then, $C_0^{\bf Free} (\,\vec{\cC}\,) \leq \log_2 \min \; \left\lfloor \tr\lrb{N_Y}\right\rfloor$ with the same constraints, resulting in the claimed SDP in Eq.~\eqref{eq:classical_capacity_gen_sdp_dual} with the additional observation that dropping the transpose acting on $J^{\vec{\cC}}_{XY}$ does not affect the optimization result.
\end{proof}

Following similar steps, we can derive the SDP for the capacity assisted by causally definite supermaps and its dual.

\begin{corollary}[SDPs for one-shot zero-error classical capacity assisted by free causally definite supermaps]
    Given a list of two quantum channels $\vec{\cC}$, its one-shot zero-error classical capacity assisted by free causally definite supermaps can be formulated as
    \begin{align}
    &\begin{aligned}\label{eq:classical_capacity_def_sdp_primal}
        C_0^{\bf FreeDef} (\,\vec{\cC}\,) = \log_2\max & \; \floor{m} \\
        \text{\rm s.t.} &\; \lrp{E_{XY} + G_{XY}} \star J^{\vec{\cC}}_{XY} = m,\, 0 \leq E_{XY} \leq F_{XY},\, 0 \leq G_{XY} \leq H_{XY},\\
        &\; {}_{[1-Y_2]}F_{XY}=0,\, {}_{[1-Y_1]}F_{X_1Y_1}=0,\, {}_{[1-Y_1]}H_{XY}=0,\, _{[1-Y_2]}H_{X_2Y_2}=0,\\
        &\; \tr\lrb{F_{XY} + H_{XY}} = md_Y,\, E_Y + G_Y =\idop_Y.
        \end{aligned}\\
        &\phantom{C_0^{\bf FreeDef} (\,\vec{\cC}\,)}
        \begin{aligned}\label{eq:classical_capacity_def_sdp_dual}
            \mathllap{} \leq \log_2\min &\; \left\lfloor \tr\lrb{K_Y} \right\rfloor\\
            \text{\rm s.t.} &\; M_{XY} \geq 0,\, _{[1-X_2]}M_{XY_1} = 0,\, M_{X_1} = d_{X_2}\lambda\idop_{X_1},\\
            &\; M_{XY} + \idop_X \ox K_Y \geq (\lambda + 1)J^{\vec{\cC}}_{XY},\\
            &\; N_{XY} \geq 0,\, _{[1-X_1]}N_{XY_2} = 0,\, N_{X_2} = d_{X_1}\lambda\idop_{X_2},\\
            &\; N_{XY} + \idop_X \ox K_Y \geq (\lambda + 1)J^{\vec{\cC}}_{XY}
        \end{aligned}
    \end{align}
    with primal variables $m, E_{XY}, F_{XY}, G_{XY}, H_{XY}$ and dual variables $\lambda, M_{XY}, N_{XY}, K_Y$.
\end{corollary}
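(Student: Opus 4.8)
The plan is to obtain both programs by specializing and then dualizing the bounded-error ${\bf FreeDef}$ program of the preceding proposition, mirroring the treatment of the general case in the previous corollary. For the primal \eqref{eq:classical_capacity_def_sdp_primal} I would first set $\epsilon=0$, which turns the capacity constraint into $\lrp{E_{XY}+G_{XY}}\star J^{\vec{\cC}}_{XY}\ge m$, and then argue that this inequality may be tightened to the equality $\lrp{E_{XY}+G_{XY}}\star J^{\vec{\cC}}_{XY}=m$ without changing the optimum. Indeed, if a feasible point overshoots, with $m'\coloneqq\lrp{E_{XY}+G_{XY}}\star J^{\vec{\cC}}_{XY}>m$, replacing $F_{XY}$ by $F_{XY}+\frac{m'-m}{d_X}\idop_{XY}$ produces a feasible point of value $m'$: adding a multiple of $\idop_{XY}$ preserves each flatness condition ${}_{\lrb{1-Y_2}}F_{XY}=0$ and ${}_{\lrb{1-Y_1}}F_{X_1Y_1}=0$ (since $\idop_{XY}$ and its partial traces are flat on every subsystem), maintains $E_{XY}\le F_{XY}$, leaves $E_Y+G_Y=\idop_Y$ intact, and rescales the trace to $\tr\lrb{F_{XY}+H_{XY}}=m'd_Y$. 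This is the same identity-shift used in the general case, and it establishes \eqref{eq:classical_capacity_def_sdp_primal}.

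For the dual \eqref{eq:classical_capacity_def_sdp_dual} I would attach Lagrange multipliers to each primal constraint: a scalar $\mu$ for the capacity equality, positive operators $K^{(1)},K^{(2)},K^{(5)},K^{(6)}$ for $E_{XY}\ge0$, $F_{XY}-E_{XY}\ge0$, $G_{XY}\ge0$, $H_{XY}-G_{XY}\ge0$, free Hermitian operators $P,P'$ and $Q,Q'$ for the four comb-flatness equalities on $F_{XY}$ and $H_{XY}$, a scalar $\nu$ for $\tr\lrb{F_{XY}+H_{XY}}=md_Y$, and a Hermitian $K_Y$ for $E_Y+G_Y=\idop_Y$. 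Forming the Lagrangian and demanding that the dual function be bounded yields the $m$-balance $1+\mu+\nu d_Y=0$; stationarity in $E_{XY}$ and in $G_{XY}$, which using $\lrang{K_Y,E_Y}=\lrang{\idop_X\ox K_Y,E_{XY}}$ give $K^{(1)}=\mu\lrp{J^{\vec{\cC}}_{XY}}^{\tp}+K^{(2)}+\idop_X\ox K_Y$ and the analogous relation for $K^{(5)}$; and stationarity in $F_{XY}$ and $H_{XY}$, which express $K^{(2)}$ and $K^{(6)}$ in terms of the comb multipliers and $\nu$. Setting $M_{XY}\coloneqq K^{(2)}$ and $N_{XY}\coloneqq K^{(6)}$, the constraints $K^{(1)},K^{(5)}\ge0$ become $M_{XY}+\idop_X\ox K_Y\ge -\mu\,J^{\vec{\cC}}_{XY}$ and $N_{XY}+\idop_X\ox K_Y\ge -\mu\,J^{\vec{\cC}}_{XY}$, with the transpose dropped exactly as in the general case.

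The crux is the last step: converting the $F_{XY}$- and $H_{XY}$-stationarity into the clean dual comb constraints. Here one dualizes the flatness conditions using the self-adjointness of the projectors ${}_{\lrb{1-Y_i}}$ and the fact that the adjoint of a partial trace is tensoring with an identity, obtaining $M_{XY}=\nu\idop_{XY}+{}_{\lrb{1-Y_2}}P+\idop_{X_2Y_2}\ox\lrp{{}_{\lrb{1-Y_1}}P'}$. Putting $\lambda\coloneqq\nu d_Y$ (so that $-\mu=\lambda+1$ and the right-hand sides above read $(\lambda+1)J^{\vec{\cC}}_{XY}$), I would eliminate $P,P'$ by applying partial traces: reducing to $X_1$ annihilates both comb-multiplier terms and leaves $M_{X_1}=d_{X_2}\lambda\idop_{X_1}$, while $M_{XY_1}\coloneqq\tr_{Y_2}\lrb{M_{XY}}$ retains only a part flat on $X_2$, giving ${}_{\lrb{1-X_2}}M_{XY_1}=0$; the symmetric computation for the $(2,1)$ branch yields $N_{X_2}=d_{X_1}\lambda\idop_{X_2}$ and ${}_{\lrb{1-X_1}}N_{XY_2}=0$. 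I would then check the converse, namely that any $M_{XY},N_{XY}\ge0$ obeying these conditions arise from suitable $P,P',Q,Q',\nu$, so that the listed constraints exactly cut out dual feasibility. Weak duality then gives $C_0^{\bf FreeDef}(\,\vec{\cC}\,)\le\log_2\min\lfloor\tr\lrb{K_Y}\rfloor$, and the relaxation to real $m$ followed by flooring is justified as before. The main subtlety throughout is the bookkeeping of the \emph{shared} multipliers $K_Y$ and $\lambda$, which couple the two causal-order branches precisely because $E_Y+G_Y=\idop_Y$ and the joint trace normalization are single constraints spanning both; keeping the two branches otherwise independent is what produces the symmetric $M$/$N$ pair in \eqref{eq:classical_capacity_def_sdp_dual}.
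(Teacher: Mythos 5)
Your proposal is correct and follows exactly the route the paper intends: the paper gives no explicit proof of this corollary, stating only that it follows "similar steps" to the general-supermap case, and your plan is a faithful instantiation of those steps (set $\epsilon=0$ and tighten to equality via the identity shift, then Lagrange-dualize with shared multipliers $K_Y$ and $\lambda$ for the constraints coupling the two causal branches). The one step the paper glosses over entirely and you rightly flag — that the reduced constraints $M_{XY}\geq 0$, ${}_{[1-X_2]}M_{XY_1}=0$, $M_{X_1}=d_{X_2}\lambda\idop_{X_1}$ are not merely necessary but sufficient for membership in the image $\nu\idop_{XY}+{}_{[1-Y_2]}P+({}_{[1-Y_1]}P')\ox\idop_{X_2Y_2}$ — does hold (take $R\coloneqq{}_{[1-Y_2]}M$ and $R'\coloneqq{}_{[1-Y_1]}M_{X_1Y_1}/d_{X_2Y_2}$ and check the remainder is $\nu\idop_{XY}$), and it is needed for the stated inequality, since relaxing the dual feasible set would weaken the upper bound.
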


\begin{theorem}
    Let $\vec{\cC} \coloneqq (\cA, \cA)$ be a pair of identical amplitude damping channels $\cA$ with damping rates $0.1$.
    The one-shot zero-error classical capacity of $\vec{\cC}$ assisted by causally indefinite supermaps is strictly larger than the capacity assisted by causally definite supermaps:
    \begin{align}
        C_0^{\bf Free} (\,\vec{\cC}\,) = 1 > C_0^{\bf FreeDef} (\,\vec{\cC}\,) = 0.
    \end{align}
    Moreover, this capacity value of $C_0^{\bf Free} (\,\vec{\cC}\,) = 1$ can be achieved with a QC-QC operation.
\end{theorem}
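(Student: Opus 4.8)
The plan is to reduce the statement entirely to the four semidefinite programs established above and to certify the two capacity values by exhibiting matching primal and dual feasible points. Throughout, write $J^{\vec{\cC}}_{XY} = J^{\cA}_{X_1 Y_1} \ox J^{\cA}_{X_2 Y_2}$ for the Choi operator of the pair, and note that for any class $\bf X$ one has $C_0^{\bf X}(\vec{\cC}) \ge 0$ trivially, since the single-message choice $m = 1$ is always primal-feasible with zero error. Hence it suffices to prove two things: (i) that the general-supermap program in Eq.~\eqref{eq:classical_capacity_gen_sdp_primal} has optimal value $m^\star = 2$, giving $C_0^{\bf Free} = \log_2 2 = 1$; and (ii) that the causally-definite program in Eq.~\eqref{eq:classical_capacity_def_sdp_primal} has optimal value $m^\star < 2$, so that $\floor{m^\star} = 1$ and $C_0^{\bf FreeDef} = \log_2 1 = 0$. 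Because the Schur-symmetrization of the earlier proposition has already collapsed the full supermap optimization to the operators $(E_{XY}, F_{XY})$ (respectively $(E_{XY}, F_{XY}, G_{XY}, H_{XY})$), I would work directly with these SDPs and never reconstruct the full supermap.

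For part (i), the lower bound $m^\star \ge 2$ would come from an explicit indefinite-order protocol built in the spirit of the quantum \textsc{switch}: I would write down the corresponding symmetrized pair $(E_{XY}, F_{XY})$ and verify the four primal constraints at $m = 2$, namely perfect transmission $E_{XY}\star J^{\vec{\cC}}_{XY} = 2$, the sandwich $0 \le E_{XY} \le F_{XY}$, the no-signaling normalization $\cL^{\rm NS}_{XY}(F_{XY}) = (2/d_X)\idop_{XY}$, and the forward-no-signaling condition $E_Y = \idop_Y$. The matching upper bound $m^\star \le 2$ would follow from a single dual feasible point $(M_{XY}, N_Y)$ of Eq.~\eqref{eq:classical_capacity_gen_sdp_dual} with $\tr[N_Y] = 2$; weak duality then sandwiches $2 \le m^\star \le 2$, so no strong-duality argument is even needed.

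For part (ii), I would argue purely through the dual program in Eq.~\eqref{eq:classical_capacity_def_sdp_dual}: it is enough to exhibit dual variables $(\lambda, M_{XY}, N_{XY}, K_Y)$ that are feasible and satisfy $\tr[K_Y] < 2$. Feasibility requires, for each of the two causal branches, positivity $M_{XY}, N_{XY} \ge 0$, the branch no-signaling constraints ${}_{[1-X_2]}M_{XY_1} = 0$ and ${}_{[1-X_1]}N_{XY_2} = 0$, the normalizations $M_{X_1} = d_{X_2}\lambda\idop_{X_1}$ and $N_{X_2} = d_{X_1}\lambda\idop_{X_2}$, and the two operator inequalities $M_{XY} + \idop_X \ox K_Y \ge (\lambda+1) J^{\vec{\cC}}_{XY}$ and $N_{XY} + \idop_X \ox K_Y \ge (\lambda+1) J^{\vec{\cC}}_{XY}$. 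Since the damping Choi operator $J^{\cA}$ is supported on a proper subspace with a simple $2\times 2$ block structure in the computational basis, I would choose $K_Y$, $M_{XY}$, $N_{XY}$ with matching support, reduce each operator inequality to a handful of small blocks, and confirm positivity block by block by completing squares or by an explicit eigenvalue check at $\eta = 0.1$.

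The main obstacle is the construction and exact verification of the definite-order dual certificate in part (ii): the branch-wise no-signaling constraints couple the blocks of $M_{XY}$ and $N_{XY}$, and both operator inequalities must hold simultaneously against the single shared $K_Y$ while keeping $\tr[K_Y]$ strictly below $2$. In practice I would first solve the definite-order SDP numerically at $\eta = 0.1$ to obtain near-optimal dual variables, rationalize them into a clean closed form, and then check analytically that every positivity and no-signaling constraint is met exactly and that $\tr[K_Y] < 2$. The same numerical-then-analytic route would supply the \textsc{switch}-based primal point and the value-$2$ dual point required in part (i), so that the two certificates together pin $C_0^{\bf Free} = 1$ and $C_0^{\bf FreeDef} = 0$.
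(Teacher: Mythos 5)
Your plan is exactly the paper's proof: the paper certifies $C_0^{\bf Free}(\vec{\cC})\ge 1$ with an explicit primal pair $(E_{XY},F_{XY})$ feasible at $m=2$, certifies $C_0^{\bf Free}(\vec{\cC})\le 1$ with a dual pair $(M_{XY},N_Y)$ having $\tr[N_Y]=\tfrac{5}{2}$ (any value in $[2,3)$ works because of the floor, so you need not hit $2$ exactly), and certifies $C_0^{\bf FreeDef}(\vec{\cC})=0$ via a dual point $(\lambda,M_{XY},N_{XY},K_Y)$ with $\tr[K_Y]=\tfrac{39}{20}<2$ together with the trivial bound $C_0^{\bf FreeDef}\ge 0$. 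The only remaining work is producing the explicit rational matrices, which the paper obtains by precisely the numerical-then-exact route you describe.
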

\begin{proof}
    We first prove $C_0^{\bf Free} (\,\vec{\cC}\,) = 1$ by showing that $C_0^{\bf Free} (\,\vec{\cC}\,) \geq 1$ using the primal SDP~\eqref{eq:classical_capacity_gen_sdp_primal} and $C_0^{\bf Free} (\,\vec{\cC}\,) \leq 1$ using the dual SDP~\eqref{eq:classical_capacity_gen_sdp_dual}.
    To see that $C_0^{\bf Free} (\,\vec{\cC}\,) \geq 1$, we note that
    \begin{align}
    \begin{gathered}\label{supp_eq:gen_primal_solution}
        F_{X_1Y_1X_2Y_2} = {\rm diag}\lrp{\frac{13}{20}, \frac{3}{5}, \frac{11}{20}, 0, \frac{3}{5}, \frac{11}{20}, \frac{3}{5}, \frac{1}{20}, \frac{11}{20}, \frac{3}{5}, \frac{1}{4}, \frac{4}{5}, 0, \frac{1}{20}, \frac{4}{5}, \frac{27}{20}},\\
        E_{X_1Y_1X_2Y_2} =
        \begin{pmatrix}
            \frac{1}{4} & 0 & 0 & 0 & 0 & 0 & 0 & 0 & 0 & 0 & 0 & 0 & 0 & 0 & 0 & \frac{4}{9}\\
            0 & 0 & 0 & 0 & 0 & 0 & 0 & 0 & 0 & 0 & 0 & 0 & 0 & 0 & 0 & 0\\
            0 & 0 & \frac{1}{4} & 0 & 0 & 0 & 0 & 0 & 0 & 0 & 0 & 0 & 0 & 0 & \frac{1}{\sqrt{10}} & 0\\
            0 & 0 & 0 & 0 & 0 & 0 & 0 & 0 & 0 & 0 & 0 & 0 & 0 & 0 & 0 & 0\\
            0 & 0 & 0 & 0 & 0 & 0 & 0 & 0 & 0 & 0 & 0 & 0 & 0 & 0 & 0 & 0\\
            0 & 0 & 0 & 0 & 0 & \frac{233}{4860} & 0 & 0 & 0 & 0 & 0 & 0 & 0 & 0 & 0 & 0\\
            0 & 0 & 0 & 0 & 0 & 0 & \frac{8}{15} & 0 & 0 & 0 & 0 & 0 & 0 & 0 & 0 & 0\\
            0 & 0 & 0 & 0 & 0 & 0 & 0 & \frac{233}{4860} & 0 & 0 & 0 & 0 & 0 & 0 & 0 & 0\\
            0 & 0 & 0 & 0 & 0 & 0 & 0 & 0 & \frac{1}{4} & 0 & 0 & \frac{1}{\sqrt{10}} & 0 & 0 & 0 & 0\\
            0 & 0 & 0 & 0 & 0 & 0 & 0 & 0 & 0 & \frac{8}{15} & 0 & 0 & 0 & 0 & 0 & 0\\
            0 & 0 & 0 & 0 & 0 & 0 & 0 & 0 & 0 & 0 & \frac{1}{4} & 0 & 0 & 0 & 0 & 0\\
            0 & 0 & 0 & 0 & 0 & 0 & 0 & 0 & \frac{1}{\sqrt{10}} & 0 & 0 & \frac{7}{15} & 0 & 0 & 0 & 0\\
            0 & 0 & 0 & 0 & 0 & 0 & 0 & 0 & 0 & 0 & 0 & 0 & 0 & 0 & 0 & 0\\
            0 & 0 & 0 & 0 & 0 & 0 & 0 & 0 & 0 & 0 & 0 & 0 & 0 & \frac{233}{4860} & 0 & 0\\
            0 & 0 & \frac{1}{\sqrt{10}} & 0 & 0 & 0 & 0 & 0 & 0 & 0 & 0 & 0 & 0 & 0 & \frac{7}{15} & 0\\
            \frac{4}{9} & 0 & 0 & 0 & 0 & 0 & 0 & 0 & 0 & 0 & 0 & 0 & 0 & 0 & 0 & \frac{1387}{1620}
        \end{pmatrix}
    \end{gathered}
    \end{align}
    form a feasible solution to SDP~\eqref{eq:classical_capacity_gen_sdp_primal} with $m=2$, implying that $C_0^{\bf Free} (\,\vec{\cC}\,) \geq \log_2 2 = 1$.
    To see that $C_0^{\bf Free} (\,\vec{\cC}\,) \leq 1$, we note that
    \begin{align}
    \begin{gathered}
        N_Y = \frac{5}{2}\proj{00},\\
        M_{X_1Y_1X_2Y_2} =
        \begin{pmatrix}
            160 & 0 & 0 & \frac{3808}{25} & 0 & 0 & 0 & 0 & 0 & 0 & 0 & 0 & \frac{3808}{25} & 0 & 0 & \frac{289}{2}\\
            0 & 0 & 0 & 0 & 0 & 0 & 0 & 0 & 0 & 0 & 0 & 0 & 0 & 0 & 0 & 0\\
            0 & 0 & \frac{737}{50} & 0 & 0 & 0 & 0 & 0 & 0 & 0 & 0 & 0 & 0 & 0 & \frac{737}{50} & 0\\
            \frac{3808}{25} & 0 & 0 & \frac{7263}{50} & 0 & 0 & 0 & 0 & 0 & 0 & 0 & 0 & 145 & 0 & 0 & \frac{6879}{50}\\
            0 & 0 & 0 & 0 & 0 & 0 & 0 & 0 & 0 & 0 & 0 & 0 & 0 & 0 & 0 & 0\\
            0 & 0 & 0 & 0 & 0 & 0 & 0 & 0 & 0 & 0 & 0 & 0 & 0 & 0 & 0 & 0\\
            0 & 0 & 0 & 0 & 0 & 0 & 0 & 0 & 0 & 0 & 0 & 0 & 0 & 0 & 0 & 0\\
            0 & 0 & 0 & 0 & 0 & 0 & 0 & 0 & 0 & 0 & 0 & 0 & 0 & 0 & 0 & 0\\
            0 & 0 & 0 & 0 & 0 & 0 & 0 & 0 & \frac{737}{50} & 0 & 0 & \frac{737}{50} & 0 & 0 & 0 & 0\\
            0 & 0 & 0 & 0 & 0 & 0 & 0 & 0 & 0 & 0 & 0 & 0 & 0 & 0 & 0 & 0\\
            0 & 0 & 0 & 0 & 0 & 0 & 0 & 0 & 0 & 0 & 0 & 0 & 0 & 0 & 0 & 0\\
            0 & 0 & 0 & 0 & 0 & 0 & 0 & 0 & \frac{737}{50} & 0 & 0 & \frac{737}{50} & 0 & 0 & 0 & 0\\
            \frac{3808}{25} & 0 & 0 & 145 & 0 & 0 & 0 & 0 & 0 & 0 & 0 & 0 & \frac{7263}{50} & 0 & 0 & \frac{6879}{50}\\
            0 & 0 & 0 & 0 & 0 & 0 & 0 & 0 & 0 & 0 & 0 & 0 & 0 & 0 & 0 & 0\\
            0 & 0 & \frac{737}{50} & 0 & 0 & 0 & 0 & 0 & 0 & 0 & 0 & 0 & 0 & 0 & \frac{737}{50} & 0\\
            \frac{289}{2} & 0 & 0 & \frac{6879}{50} & 0 & 0 & 0 & 0 & 0 & 0 & 0 & 0 & \frac{6879}{50} & 0 & 0 & \frac{3263}{25}
        \end{pmatrix}
    \end{gathered}
    \end{align}
    form a feasible solution to SDP~\eqref{eq:classical_capacity_gen_sdp_dual} with $\tr\lrb{N_Y} = \frac{5}{2}$, implying that $C_0^{\bf Free} (\,\vec{\cC}\,) \leq \log_2 \left\lfloor \frac{5}{2} \right\rfloor = \log_2 2 = 1$. Hence, we conclude that $C_0^{\bf Free} (\,\vec{\cC}\,) = 1$.

    One important thing to note is that the feasible solution given in Eq.~\eqref{supp_eq:gen_primal_solution} actually corresponds to a process that can be realized with QC-QC.
    According to the proof of Proposition~\ref{supp_prop:diamond_prob_equal} and Proposition~\ref{prop:gen_capacity_sdp}, the Choi operator $J_{XYAB}$ of a supermap in ${\bf Free}$ that enables the perfect transmission of one classical bit can be constructed from the feasible solution given in Eq.~\eqref{supp_eq:gen_primal_solution} as
    \begin{align}\label{supp_eq:choi_from_feasible}
        J_{XYAB} = \frac{E_{XY}}{m} \ox J^{\Delta_m}_{AB} + \frac{F_{XY} - E_{XY}}{m(m-1)} \ox (\idop_{AB} - J^{\Delta_m}_{AB})
    \end{align}
    with $m = 2$.
    In the case of $N = 2$, the characterization of QC-QC operations given in Proposition~\ref{prop:qcqc_char} reduces to
    \begin{align}
    \begin{gathered}
        J_{XYAB} \geq 0,\quad J^{(\{1\},2)}_{XY_1A} \geq 0,\quad J^{(\{2\},1)}_{XY_2A} \geq 0,\quad J^{(\emptyset,1)}_{X_1A} \geq 0,\quad J^{(\emptyset, 2)}_{X_2A} \geq 0,\\
        J_{XYA} = J^{(\{1\},2)}_{XY_1A} \ox \frac{\idop_{Y_2}}{d_{Y_2}} + J^{(\{2\},1)}_{XY_2A} \ox \frac{\idop_{Y_1}}{d_{Y_1}},\\
        J^{(\{1\},2)}_{X_1Y_1A} = J^{(\emptyset,1)}_{X_1A} \ox \frac{\idop_{Y_1}}{d_{Y_1}},\quad J^{(\{2\},1)}_{X_2Y_2A} = J^{(\emptyset,2)}_{X_2A} \ox \frac{\idop_{Y_2}}{d_{Y_2}},\quad J^{(\emptyset,1)}_{A} + J^{(\emptyset,2)}_{A} = d_Y\idop_A.
    \end{gathered}
    \end{align}
    Continuing from Eq.~\eqref{supp_eq:choi_from_feasible}, we have
    \begin{align}
        J_{XYA} &= \frac{E_{XY}}{m} \ox \idop_A + \frac{F_{XY} - E_{XY}}{m(m-1)} \ox (m - 1)\idop_A\\
        &= F_{XY} \ox \frac{\idop_A}{m}.
    \end{align}
    As a result, to show that $J_{XYAB}$ is the Choi operator of a QC-QC operation is to show that there exist positive semidefinite operators $J^{(\{1\},2)}_{XY_1A}$, $J^{(\{2\},1)}_{XY_2A}$, $J^{(\emptyset,1)}_{X_1A}$, and $J^{(\emptyset,2)}_{X_2A}$ such that
    \begin{align}
    \begin{gathered}
        F_{XY} \ox \frac{\idop_A}{m} = J^{(\{1\},2)}_{XY_1A} \ox \frac{\idop_{Y_2}}{d_{Y_2}} + J^{(\{2\},1)}_{XY_2A} \ox \frac{\idop_{Y_1}}{d_{Y_1}},\\
        J^{(\{1\},2)}_{X_1Y_1A} = J^{(\emptyset,1)}_{X_1A} \ox \frac{\idop_{Y_1}}{d_{Y_1}},\quad J^{(\{2\},1)}_{X_2Y_2A} = J^{(\emptyset,2)}_{X_2A} \ox \frac{\idop_{Y_2}}{d_{Y_2}},\quad J^{(\emptyset,1)}_{A} + J^{(\emptyset,2)}_{A} = d_Y\idop_A.
    \end{gathered}
    \end{align}
    It is straightforward to verify that the following positive semidefinite operators satisfy the above constraints for the Choi operator $J_{XYAB}$ constructed in Eq.~\eqref{supp_eq:choi_from_feasible}:
    \begin{align}
    \begin{gathered}
        J^{(\{1\},2)}_{XY_1A} = J^{(\{1\},2)}_{XY_1} \ox \frac{\idop_A}{m},\quad J^{(\{2\},1)}_{XY_2A} = J^{(\{2\},1)}_{XY_2} \ox \frac{\idop_A}{m},\quad J^{(\emptyset,1)}_{X_1A} = J^{(\emptyset,1)}_{X_1} \ox \frac{\idop_A}{m},\quad J^{(\emptyset,2)}_{X_2A} = J^{(\emptyset,2)}_{X_2} \ox \frac{\idop_A}{m},
    \end{gathered}
    \end{align}
    where
    \begin{align}
    \begin{gathered}
        J^{(\{1\},2)}_{X_1Y_1X_2} \coloneqq {\rm diag}\lrp{\frac{13}{20}, 0, \frac{11}{20}, \frac{1}{10}, \frac{11}{10}, \frac{1}{4}, 0, \frac{27}{20}},\quad J^{(\{2\},1)}_{XY_2} \coloneqq {\rm diag}\lrp{\frac{13}{20}, \frac{11}{20}, \frac{11}{10}, 0, 0, \frac{1}{10}, \frac{1}{4}, \frac{27}{20}},\\
        J^{(\emptyset,1)}_{X_1} = J^{(\emptyset,2)}_{X_2} = {\rm diag}\lrp{\frac{13}{10}, \frac{27}{10}},
    \end{gathered}
    \end{align}
    implying that the feasible $J_{XYAB}$ indeed corresponds to a QC-QC operation.

    Now, we prove $C_0^{\bf FreeDef} (\,\vec{\cC}\,) = 0$ using the dual SDP~\eqref{eq:classical_capacity_def_sdp_dual}.
    We note that
    \begin{align}
    \begin{gathered}
        \lambda = 200,\quad K_Y = \frac{39}{20}\proj{00},\\
        M_{X_1Y_1X_2Y_2} =
        \begin{pmatrix}
            200 & 0 & 0 & \frac{1903}{10} & 0 & 0 & 0 & 0 & 0 & 0 & 0 & 0 & \frac{952}{5} & 0 & 0 & \frac{361}{2}\\
            0 & 0 & 0 & 0 & 0 & 0 & 0 & 0 & 0 & 0 & 0 & 0 & 0 & 0 & 0 & 0\\
            0 & 0 & \frac{189}{10} & 0 & 0 & 0 & 0 & 0 & 0 & 0 & 0 & 0 & 0 & 0 & \frac{931}{50} & 0\\
            \frac{1903}{10} & 0 & 0 & \frac{1811}{10} & 0 & 0 & 0 & 0 & 0 & 0 & 0 & 0 & \frac{4528}{25} & 0 & 0 & \frac{8589}{50}\\
            0 & 0 & 0 & 0 & 0 & 0 & 0 & 0 & 0 & 0 & 0 & 0 & 0 & 0 & 0 & 0\\
            0 & 0 & 0 & 0 & 0 & 0 & 0 & 0 & 0 & 0 & 0 & 0 & 0 & 0 & 0 & 0\\
            0 & 0 & 0 & 0 & 0 & 0 & 0 & 0 & 0 & 0 & 0 & 0 & 0 & 0 & 0 & 0\\
            0 & 0 & 0 & 0 & 0 & 0 & 0 & 0 & 0 & 0 & 0 & 0 & 0 & 0 & 0 & 0\\
            0 & 0 & 0 & 0 & 0 & 0 & 0 & 0 & \frac{373}{20} & 0 & 0 & \frac{3719}{200} & 0 & 0 & 0 & 0\\
            0 & 0 & 0 & 0 & 0 & 0 & 0 & 0 & 0 & 0 & 0 & 0 & 0 & 0 & 0 & 0\\
            0 & 0 & 0 & 0 & 0 & 0 & 0 & 0 & 0 & 0 & \frac{3}{50} & 0 & 0 & 0 & 0 & 0\\
            0 & 0 & 0 & 0 & 0 & 0 & 0 & 0 & \frac{3719}{200} & 0 & 0 & \frac{1859}{100} & 0 & 0 & 0 & 0\\
            \frac{952}{5} & 0 & 0 & \frac{4528}{25} & 0 & 0 & 0 & 0 & 0 & 0 & 0 & 0 & \frac{3627}{20} & 0 & 0 & \frac{4294}{25}\\
            0 & 0 & 0 & 0 & 0 & 0 & 0 & 0 & 0 & 0 & 0 & 0 & 0 & 0 & 0 & 0\\
            0 & 0 & \frac{931}{50} & 0 & 0 & 0 & 0 & 0 & 0 & 0 & 0 & 0 & 0 & 0 & \frac{459}{25} & 0\\
            \frac{361}{2} & 0 & 0 & \frac{8589}{50} & 0 & 0 & 0 & 0 & 0 & 0 & 0 & 0 & \frac{4294}{25} & 0 & 0 & \frac{16299}{100}\\
        \end{pmatrix},\\
        N_{X_1Y_1X_2Y_2} =
        \begin{pmatrix}
            200 & 0 & 0 & \frac{952}{5} & 0 & 0 & 0 & 0 & 0 & 0 & 0 & 0 & \frac{1903}{10} & 0 & 0 & \frac{361}{2}\\
            0 & 0 & 0 & 0 & 0 & 0 & 0 & 0 & 0 & 0 & 0 & 0 & 0 & 0 & 0 & 0\\
            0 & 0 & \frac{93}{5} & 0 & 0 & 0 & 0 & 0 & 0 & 0 & 0 & 0 & 0 & 0 & \frac{37189}{2000} & 0\\
            \frac{952}{5} & 0 & 0 & \frac{3627}{20} & 0 & 0 & 0 & 0 & 0 & 0 & 0 & 0 & \frac{4528}{25} & 0 & 0 & \frac{4294}{25}\\
            0 & 0 & 0 & 0 & 0 & 0 & 0 & 0 & 0 & 0 & 0 & 0 & 0 & 0 & 0 & 0\\
            0 & 0 & 0 & 0 & 0 & 0 & 0 & 0 & 0 & 0 & 0 & 0 & 0 & 0 & 0 & 0\\
            0 & 0 & 0 & 0 & 0 & 0 & \frac{1}{20} & 0 & 0 & 0 & 0 & 0 & 0 & 0 & 0 & 0\\
            0 & 0 & 0 & 0 & 0 & 0 & 0 & 0 & 0 & 0 & 0 & 0 & 0 & 0 & 0 & 0\\
            0 & 0 & 0 & 0 & 0 & 0 & 0 & 0 & \frac{189}{10} & 0 & 0 & \frac{931}{50} & 0 & 0 & 0 & 0\\
            0 & 0 & 0 & 0 & 0 & 0 & 0 & 0 & 0 & 0 & 0 & 0 & 0 & 0 & 0 & 0\\
            0 & 0 & 0 & 0 & 0 & 0 & 0 & 0 & 0 & 0 & \frac{3}{50} & 0 & 0 & 0 & 0 & 0\\
            0 & 0 & 0 & 0 & 0 & 0 & 0 & 0 & \frac{931}{50} & 0 & 0 & \frac{459}{25} & 0 & 0 & 0 & 0\\
            \frac{1903}{10} & 0 & 0 & \frac{4528}{25} & 0 & 0 & 0 & 0 & 0 & 0 & 0 & 0 & \frac{1811}{10} & 0 & 0 & \frac{8589}{50}\\
            0 & 0 & 0 & 0 & 0 & 0 & 0 & 0 & 0 & 0 & 0 & 0 & 0 & 0 & 0 & 0\\
            0 & 0 & \frac{37189}{2000} & 0 & 0 & 0 & 0 & 0 & 0 & 0 & 0 & 0 & 0 & 0 & \frac{1859}{100} & 0\\
            \frac{361}{2} & 0 & 0 & \frac{4294}{25} & 0 & 0 & 0 & 0 & 0 & 0 & 0 & 0 & \frac{8589}{50} & 0 & 0 & \frac{16299}{100}\\
        \end{pmatrix}
    \end{gathered}
    \end{align}
    form a feasible solution to SDP~\eqref{eq:classical_capacity_def_sdp_dual} with $\tr\lrb{K_Y} = \frac{39}{20}$, implying that $C_0^{\bf FreeDef} (\,\vec{\cC}\,) \leq \log_2 \left\lfloor \frac{39}{20} \right\rfloor = \log_2 1 = 0$.
    Since, by definition, $C_0^{\bf FreeDef} (\,\vec{\cC}\,) \geq 0$, we conclude that $C_0^{\bf FreeDef} (\,\vec{\cC}\,) = 0$.
\end{proof}

\begin{figure}
    \centering
    \includegraphics[width=0.5\columnwidth]{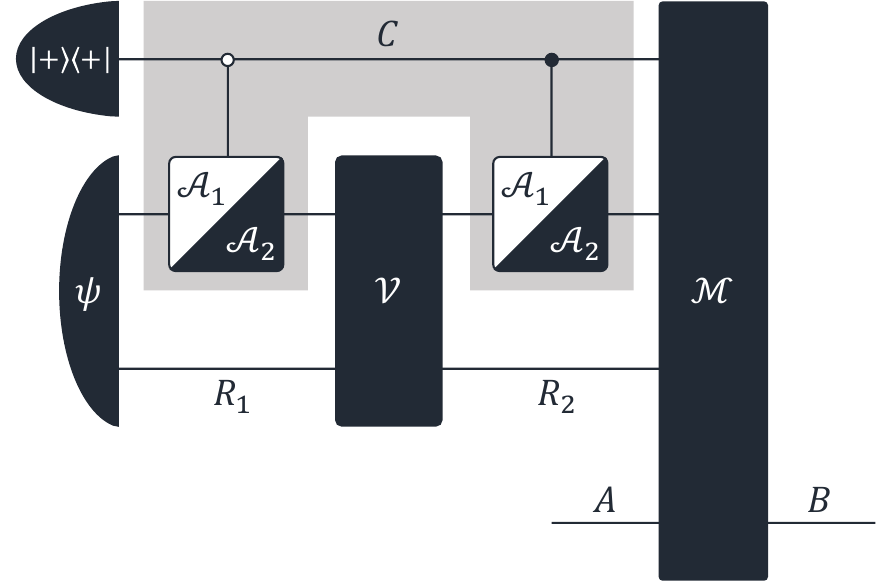}
    \caption{\textbf{Circuit implementation of the given QC-QC operation that achieves the advantage.}
    System $C$ is a control qubit that coherently controls the order of the two input channels $\cA_1$ and $\cA_2$.
    When the state of the control qubit is $\proj{0}$, $\cA_1$ is placed before $\cA_2$; when the state is $\proj{1}$, the order is reversed.
    Here, the state of the control qubit is set to $\proj{+}$.
    Auxiliary systems $R_1$ and $R_2$ have dimensions $2$ and $6$, respectively.
    The overall QC-QC operation can then be seen as the sequential application of two supermaps: the quantum switch (highlighted in gray) and a quantum supermap with definite order (the rest part of the quantum circuit).
    In this decomposition, the quantum switch is responsible for the indefinite causal order of the overall QC-QC operation, while the supermap with definite order makes the overall QC-QC operation signaling-non-generating.}
    \label{fig:QC_QC_implementation}
\end{figure}

In the proof above, we give the Choi operator of a QC-QC operation that achieves the communication of one classical bit.
Fig.~\ref{fig:QC_QC_implementation} shows an explicit circuit implementation of this QC-QC operation, which is obtained from the procedure given in Appendix~B.3.c of Ref.~\cite{wechs2021quantum}.
The control qubit is fixed to the state $\proj{+}$, and the initial state $\psi$ is fixed to $\psi \coloneqq \proj{\psi}$ with $\ket{\psi} \coloneqq \sqrt{\frac{13}{40}}\ket{00} + \sqrt{\frac{27}{40}}\ket{11}$.
The channel $\cV$ is defined as $\cV(\cdot) \coloneqq V(\cdot)V^\dagger$, where
\begin{align}
    V = \proj{00} + \proj{11} + \lrp{\sqrt{\frac{22}{27}}\ket{02} + \sqrt{\frac{5}{27}}\ket{13}}\bra{01} + \lrp{\sqrt{\frac{11}{13}}\ket{04} + \sqrt{\frac{2}{13}}\ket{15}}\bra{10}
\end{align}
is an isometry with input systems $Y_1$ (or $Y_2$) and $R_1$ and output systems $X_2$ (or $X_1$) and $R_2$.
For the quantum channel $\cM$, we give its Choi operator as follows:
\begin{align}
    J^\cM_{CTR_2AB} &\coloneqq \frac{1}{2} \idop_{{CTR_2AB}} + \frac{1}{2} \left[
    \idop_C \otimes \left( \ketbra{0}{0} \otimes
        \begin{pmatrix}
            -\frac{3}{26} & 0 & 0 & 0 & 0 & 0 \\
            0 & \frac{9}{64} & 0 & 0 & 0 & 0 \\
            0 & 0 & -\frac{1}{11} & 0 & 0 & 0 \\
            0 & 0 & 0 & \frac{1}{2} & 0 & 0 \\
            0 & 0 & 0 & 0 & -\frac{11}{24} & 0 \\
            0 & 0 & 0 & 0 & 0 & \frac{7}{108} \\
        \end{pmatrix}
        + \ketbra{0}{1} \otimes
        \begin{pmatrix}
            0 & \frac{80}{27 \sqrt{39}} & 0 & 0 & 0 & 0 \\
            0 & 0 & 0 & 0 & 0 & 0 \\
            0 & 0 & 0 & \frac{5}{4 \sqrt{11}} & 0 & 0 \\
            0 & 0 & 0 & 0 & 0 & 0 \\
            0 & 0 & 0 & 0 & 0 & 0 \\
            0 & 0 & 0 & 0 & 0 & 0 \\
        \end{pmatrix} \right. \right. \notag \\
    & \left. \hspace{40mm} + \ketbra{1}{0} \otimes
        \begin{pmatrix}
            0 & 0 & 0 & 0 & 0 & 0 \\
            \frac{80}{27 \sqrt{39}} & 0 & 0 & 0 & 0 & 0 \\
            0 & 0 & 0 & 0 & 0 & 0 \\
            0 & 0 & \frac{5}{4 \sqrt{11}} & 0 & 0 & 0 \\
            0 & 0 & 0 & 0 & 0 & 0 \\
            0 & 0 & 0 & 0 & 0 & 0 \\
        \end{pmatrix}
        + \ketbra{1}{1} \otimes
        \begin{pmatrix}
            -\frac{13}{24} & 0 & 0 & 0 & 0 & 0 \\
            0 & \frac{587}{4374} & 0 & 0 & 0 & 0 \\
            0 & 0 & \frac{77}{108} & 0 & 0 & 0 \\
            0 & 0 & 0 & \frac{5}{192} & 0 & 0 \\
            0 & 0 & 0 & 0 & -\frac{2207}{5346} & 0 \\
            0 & 0 & 0 & 0 & 0 & \frac{223}{243} \\
        \end{pmatrix}\right) \notag \\
    & \hspace{12mm} + \sigma_1 \otimes \left( \ketbra{0}{0} \otimes \begin{pmatrix}
            -\frac{3}{26} & 0 & 0 & 0 & 0 & 0 \\
             0 & 0 & \frac{3}{4} \sqrt{\frac{15}{11}} & 0 & 0 & 0 \\
             0 & \frac{3}{4} \sqrt{\frac{15}{11}} & 0 & 0 & 0 & 0 \\
             0 & 0 & 0 & \frac{1}{2} & 0 & 0 \\
             0 & 0 & 0 & 0 & 0 & 0 \\
             0 & 0 & 0 & 0 & 0 & 0 \\
        \end{pmatrix}
        + \ketbra{0}{1} \otimes
        \begin{pmatrix}
            0 & \frac{80}{27 \sqrt{39}} & 0 & 0 & 0 & 0 \\
            0 & 0 & 0 & \frac{\sqrt{15}}{64} & 0 & 0 \\
            0 & 0 & 0 & 0 & 0 & 0 \\
            0 & 0 & 0 & 0 & 0 & 0 \\
            -\frac{\sqrt{143}}{24} & 0 & 0 & 0 & 0 & 0 \\
            0 & 0 & \frac{7 \sqrt{11}}{108} & 0 & 0 & 0 \\
        \end{pmatrix} \right. \notag \\
    & \left. \left. \hspace{24mm} + \ketbra{1}{0} \otimes
        \begin{pmatrix}
            0 & 0 & 0 & 0 & -\frac{\sqrt{143}}{24} & 0 \\
            \frac{80}{27 \sqrt{39}} & 0 & 0 & 0 & 0 & 0 \\
            0 & 0 & 0 & 0 & 0 & \frac{7 \sqrt{11}}{108} \\
            0 & \frac{\sqrt{15}}{64} & 0 & 0 & 0 & 0 \\
            0 & 0 & 0 & 0 & 0 & 0 \\
            0 & 0 & 0 & 0 & 0 & 0 \\
        \end{pmatrix}
        + \ketbra{1}{1} \otimes
        \begin{pmatrix}
            0 & 0 & 0 & 0 & 0 & 0 \\
            0 & \frac{587}{4374} & 0 & 0 & 0 & 0 \\
            0 & 0 & 0 & 0 & 0 & 0 \\
            0 & 0 & 0 & 0 & 0 & 0 \\
            0 & 0 & 0 & 0 & -\frac{2207}{5346} & 0 \\
            0 & 0 & 0 & 0 & 0 & 0 \\
        \end{pmatrix} \right)
    \right] \otimes \sigma_3 \otimes \sigma_3,
\end{align}
where the input system $T$ stores the state output by the second channel in the circuit.
One can check that the circuit in Fig.~~\ref{fig:QC_QC_implementation} indeed reproduces the desired supermap by verifying that the Choi operator $J_{XYAB}$ in Eq.~\eqref{supp_eq:choi_from_feasible}, with $E_{XY}$ and $F_{XY}$ specified in Eq.~\eqref{supp_eq:gen_primal_solution}, is equal to $J_{XYCTR_2}^{\rm QS} \star J^\cM_{CTR_2AB}$, where $J_{XYCTR_2}^{\rm QS} \coloneqq \dproj{J^{\rm QS}}$ is the Choi operator (process matrix) of the ``quantum-switch-like'' part of the circuit (the upper-left part of the circuit that is fed into $\cM$) with
\begin{align}
    \dket{J^{\rm QS}} \coloneqq \frac{1}{\sqrt{2}} \ket{0}_C \ox \ket{\psi}_{X_1R_1} \star \dket{V}_{Y_1R_1X_2R_2} \ox \ket{\Gamma}_{Y_2T} + \frac{1}{\sqrt{2}} \ket{1}_C \ox \ket{\psi}_{X_2R_1} \star \dket{V}_{Y_2R_1X_1R_2} \ox \ket{\Gamma}_{Y_1T},
\end{align}
where $\ket{\Gamma}_{Y_2T} \coloneqq \sum_j \ket{j}_{Y_2} \ox \ket{j}_T$ is the unnormalized pure maximally entangled state.
The ``double ket'' notation $\dket{V}_{Y_1R_1X_2R_2} \coloneqq (I_{Y_1R_1} \ox V) \ket{\Gamma}_{Y_1R_1 Y_1R_1}$ denotes the pure Choi vector of $V$.
The link product of two vectors is defined as $\ket{p}_{AB} \star \ket{q}_{BC} \coloneqq (I_{AC} \ox \bra{\Gamma}_{BB})(\ket{p}_{AB} \ox \ket{q}_{BC})$, and thus $\ket{\psi}_{X_1R_1} \star \dket{V}_{Y_1R_1X_2R_2} = (I_{X_1Y_1X_2R_2} \ox \bra{\Gamma}_{R_1R_1})(\ket{\psi}_{X_1R_1} \ox \dket{V}_{Y_1R_1X_2R_2})$.

In the proof of Proposition~\ref{supp_prop:diamond_prob_equal}, we restrict the search for optimal supermaps to those whose Choi operators $J^\mS_{XYA_mB_m}$ are of the form $J^\mS_{XYA_mB_m} = E_{XY}\ox J^{\Delta_m}_{A_mB_m} + F_{XY}\ox\lrp{\idop_{A_mB_m}-J^{\Delta_m}_{A_mB_m}}$.
Notice that any operator $J^\mS_{XYA_mB_m}$ of this form satisfies $J^\mS_{XYA_m} = (E_{XY}+(m-1)F_{XY})\ox\idop_{A_m}$, which means that $A_m$ does not signal to $XY$, and thus the system $A$ we want to communicate to $B$ can enter the circuit after the use of the original channels.
Thus, what we obtain in our specific QC-QC implementation above is a general feature.

\section{No advantage for Pauli channels}\label{supp_sec:pauli}
\begin{theorem}
    Given a list of $N$, possibly different, Pauli channels $\vec{\cP}$, it holds that
    \begin{align}
        C_\epsilon^{\bf Free} (\,\vec{\cP}\,) = C_\epsilon^{\bf FreePar} (\,\vec{\cP}\,)
    \end{align}
    for every error tolerance $\epsilon \geq 0$.
\end{theorem}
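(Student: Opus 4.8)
The plan is to prove the two inequalities separately. The inequality $C_\epsilon^{\bf FreePar}(\,\vec{\cP}\,) \le C_\epsilon^{\bf Free}(\,\vec{\cP}\,)$ is immediate from the inclusion ${\bf FreePar}\subseteq{\bf Free}$, so the whole content lies in the reverse inequality $C_\epsilon^{\bf Free}(\,\vec{\cP}\,) \le C_\epsilon^{\bf FreePar}(\,\vec{\cP}\,)$. To establish it, I would take an arbitrary feasible point $(m, E_{XY}, F_{XY})$ of the ${\bf Free}$ SDP of Proposition~\ref{prop:gen_capacity_sdp} and manufacture from it a feasible point of the ${\bf FreePar}$ SDP of Eq.~\eqref{appeq:par_capacity_sdp} carrying the same value of $m$. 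The engine of the construction is a symmetrization (twirl) tailored to the Pauli structure of $\vec{\cP}$.

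Concretely, for the $i$-th channel acting on $q_i$ qubits let $\cT_i$ be the twirl $\cT_i(\cdot)\coloneqq |\mathcal{P}_{q_i}|^{-1}\sum_{P}(P_{X_i}\ox P_{Y_i})(\cdot)(P_{X_i}\ox P_{Y_i})$, where $P$ ranges over the $q_i$-qubit Pauli group $\mathcal{P}_{q_i}$, and set $\cT\coloneqq\cT_1\circ\cdots\circ\cT_N$. Two structural facts about Pauli channels drive the argument. First, the Choi operator $J^{\cP_i}_{X_iY_i}$ is diagonal in the (generalized) Bell basis, and each Bell projector is fixed by conjugation with $P\ox P$; hence $\cT(J^{\vec{\cP}}_{XY})=J^{\vec{\cP}}_{XY}$. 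Since the link product on $XY$ is a transpose-twisted Hilbert--Schmidt pairing and $\cT$ is self-adjoint and fixes $J^{\vec{\cP}}_{XY}$ (the stray transposes from $P\ox P$ cancel because each Pauli factor appears squared), the objective is preserved: $\cT(E_{XY})\star J^{\vec{\cP}}_{XY}=E_{XY}\star J^{\vec{\cP}}_{XY}$. Second, $\cT$ is a unital completely positive map acting locally on each pair $X_iY_i$, so it preserves positivity, the ordering $0\le E_{XY}\le F_{XY}$, and (after cycling the Pauli factors through the partial trace) the normalization $E_Y=\idop_Y$. Thus $(\cT(E_{XY}),\cT(F_{XY}))$ is again ${\bf Free}$-feasible with unchanged $m$.

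The crux is to show that the twirled upper bound collapses to a product form. The image of $\cT$ is the algebra of operators that are Bell-diagonal on every pair $X_iY_i$; every such operator is a combination of products $\bigotimes_i\Phi_{a_i}$ of Bell projectors, each factor of which has maximally mixed marginals and is therefore fixed by $\cL^{\rm NS}_{X_iY_i}$, so $\cL^{\rm NS}_{XY}(\cT(F_{XY}))=\cT(F_{XY})$. On the other hand $\cT$ commutes with $\cL^{\rm NS}_{XY}$ (both act pairwise, and a direct check shows $\cT_i$ commutes with $\cL^{\rm NS}_{X_iY_i}$), so the ${\bf Free}$ constraint $\cL^{\rm NS}_{XY}(F_{XY})=\tfrac{m}{d_X}\idop_{XY}$ forces $\cT(F_{XY})=\cL^{\rm NS}_{XY}(\cT(F_{XY}))=\cT(\cL^{\rm NS}_{XY}(F_{XY}))=\tfrac{m}{d_X}\idop_{XY}$. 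Setting $F_X\coloneqq\tfrac{m}{d_X}\idop_X$, one has $\tr[F_X]=m$ and $\cT(E_{XY})\le\cT(F_{XY})=F_X\ox\idop_Y$, so $(\cT(E_{XY}),F_X)$ satisfies exactly the constraints of the ${\bf FreePar}$ SDP with value $m$. This yields $C_\epsilon^{\bf Free}(\,\vec{\cP}\,)\le C_\epsilon^{\bf FreePar}(\,\vec{\cP}\,)$, and hence equality.

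The main obstacle, and the place where the Pauli hypothesis is genuinely used, is this final collapse: the argument hinges on the fact that the commutant of the twirl group sits inside the no-signaling subspace, which forces the symmetrized $F$ to be maximally mixed and hence realizable in parallel. For a non-Pauli channel such as amplitude damping this fails on two counts — the twirl no longer fixes $J^{\vec{\cC}}_{XY}$, and the symmetrized upper bound need not be no-signaling — which is precisely the room exploited by the indefinite-order advantage of the earlier theorem. I would also note that nothing in the argument requires the channels to act on equal numbers of qubits, so the stated generality of arbitrary, possibly distinct Pauli channels is covered verbatim.
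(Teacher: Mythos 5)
Your proof is correct and follows essentially the same route as the paper: the Pauli twirl you apply is exactly the projection onto the Bell-diagonal span that the paper implements by expanding $E^*_{XY}$ and $F^*_{XY}$ in the orthogonal Bell projectors $\Gamma^{i_1\dots i_{q_{\rm tot}}}_{XY}$, and your key collapse $\cT(F_{XY})=\cT(\cL^{\rm NS}_{XY}(F_{XY}))=\tfrac{m}{d_X}\idop_{XY}$ is the same mechanism as the paper's computation $\tr[F^*_{XY}\Gamma^{i_1\dots i_{q_{\rm tot}}}_{XY}]=\tr[\cL^{\rm NS}_{XY}(F^*_{XY})\Gamma^{i_1\dots i_{q_{\rm tot}}}_{XY}]=m^*$, which forces the twirled upper bound to be maximally mixed and hence parallel-realizable. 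No gaps.
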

\begin{proof}
    For a $q$-qubit Pauli channel $\cP(\cdot) = \sum_{i_1\dots i_q} p_{i_1\dots i_q} (\sigma_{i_1} \ox \cdots \ox \sigma_{i_q}) (\cdot) (\sigma_{i_1} \ox \cdots \ox \sigma_{i_q})$, its Choi operator is
    \begin{align}
        J^\cP_{XY} = \sum_{i_1\dots i_q} p_{i_1\dots i_q} \lrp{\idop_X \ox (\sigma_{i_1} \ox \cdots \ox \sigma_{i_q})_Y} \proj{\Gamma}_{XY} \lrp{\idop_X \ox (\sigma_{i_1} \ox \cdots \ox \sigma_{i_q})_Y} = \sum_{i_1\dots i_q} p_{i_1\dots i_q} \Gamma^{i_1\dots i_q}_{XY},
    \end{align}
    where $\Gamma^{i_1\dots i_q}_{XY} \coloneqq \lrp{\idop_X \ox (\sigma_{i_1} \ox \cdots \ox \sigma_{i_q})_Y} \proj{\Gamma}_{XY} \lrp{\idop_X \ox (\sigma_{i_1} \ox \cdots \ox \sigma_{i_q})_Y}$ and $\ket{\Gamma}_{XY} \coloneqq \sum_{j=1}^{2^q} \ket{j}_X\ox \ket{j}_Y$.
    The Choi operator of $N$ Pauli channels $\vec{\cP} = (\cP_1, \dots, \cP_N)$ with the $j$-th Pauli channel $\cP_j$ acting on $q_j$ number of qubits is
    \begin{align}
        J^{\vec{\cP}}_{XY} = \bigotimes_{j=1}^N\lrp{\sum_{i_1\dots i_{q_j}} p_{i_1\dots i_{q_j}} \Gamma^{i_1\dots i_{q_j}}_{X_jY_j}} = \sum_{i_1\dots i_{q_{\rm tot}}} p_{i_1\dots i_{q_{\rm tot}}} \Gamma^{i_1\dots i_{q_{\rm tot}}}_{XY},
    \end{align}
    where $q_{\rm tot} = \sum_{j=1}^N q_j$ is the total number of qubits $\vec{\cP}$ acts on, and each $p_{i_1\dots i_{q_{\rm tot}}}$ is a probability determined by all $p_{i_1\dots i_{q_j}}$.
    Then, according to Proposition~\ref{prop:gen_capacity_sdp}, the one-shot $\epsilon$-error classical capacity of $\vec{\cP}$ assisted by general supermaps can be formulated as
    \begin{align}
    \begin{aligned}\label{eq:sdp_pauli_capacity_gen}
        C_\epsilon^{\bf Free} (\,\vec{\cP}\,) = \log_2 \max &\; \lfloor m \rfloor\\
        \text{\rm s.t.} &\; E_{XY} \star \sum_{i_1\dots i_{q_{\rm tot}}} p_{i_1\dots i_{q_{\rm tot}}} \Gamma^{i_1\dots i_{q_{\rm tot}}}_{XY} \geq m(1-\epsilon),\, E_Y = \idop_Y,\\
        &\; 0 \leq E_{XY} \leq F_{XY},\, \cL^{\rm NS}_{XY}\lrp{F_{XY}} = \frac{m \idop_{XY}}{2^{q_{\rm tot}}}.
    \end{aligned}
    \end{align}
    Assume that $m^*$, $E^*_{XY}$, and $F^*_{XY}$ form an optimal solution to the above maximization program.
    For every $i_1\dots i_{q_{\rm tot}}$, define $\alpha_{i_1\dots i_{q_{\rm tot}}} \coloneqq \frac{1}{4^{q_{\rm tot}}} \tr\lrb{E^*_{XY} \Gamma^{i_1\dots i_{q_{\rm tot}}}_{XY}}$ and $\beta_{i_1\dots i_{q_{\rm tot}}} \coloneqq \frac{1}{4^{q_{\rm tot}}} \tr\lrb{F^*_{XY} \Gamma^{i_1\dots i_{q_{\rm tot}}}_{XY}}$.
    Then, $m^*$, $\widetilde{E}_{XY} \coloneqq \sum_{i_1\dots i_{q_{\rm tot}}} \alpha_{i_1\dots i_{q_{\rm tot}}} \Gamma^{i_1\dots i_{q_{\rm tot}}}_{XY}$, and $\widetilde{F}_{XY} \coloneqq \sum_{i_1\dots i_{q_{\rm tot}}} \beta_{i_1\dots i_{q_{\rm tot}}} \Gamma^{i_1\dots i_{q_{\rm tot}}}_{XY}$ also form an optimal solution, as we shall show now.

    For the error tolerance constraint, observe that
    \begin{align}
        \widetilde{E}_{XY} \star \sum_{i_1\dots i_{q_{\rm tot}}} p_{i_1\dots i_{q_{\rm tot}}} \Gamma^{i_1\dots i_{q_{\rm tot}}}_{XY} &= \sum_{i_1\dots i_{q_{\rm tot}}} p_{i_1\dots i_{q_{\rm tot}}} \widetilde{E}_{XY} \star \Gamma^{i_1\dots i_{q_{\rm tot}}}_{XY}\\
        &= \sum_{i_1\dots i_{q_{\rm tot}}} \sum_{j_1\dots j_{q_{\rm tot}}} \alpha_{j_1\dots j_{q_{\rm tot}}} p_{i_1\dots i_{q_{\rm tot}}} \Gamma^{j_1\dots j_{q_{\rm tot}}}_{XY} \star \Gamma^{i_1\dots i_{q_{\rm tot}}}_{XY}\\
        &= \sum_{i_1\dots i_{q_{\rm tot}}} \alpha_{i_1\dots i_{q_{\rm tot}}} p_{i_1\dots i_{q_{\rm tot}}} \cdot 4^{q_{\rm tot}}\\
        &= \sum_{i_1\dots i_{q_{\rm tot}}} p_{i_1\dots i_{q_{\rm tot}}} \tr\lrb{E^*_{XY} \Gamma^{i_1\dots i_{q_{\rm tot}}}_{XY}}\\
        &= E^*_{XY} \star \sum_{i_1\dots i_{q_{\rm tot}}} p_{i_1\dots i_{q_{\rm tot}}} \Gamma^{i_1\dots i_{q_{\rm tot}}}_{XY}\\
        &\geq m^*(1-\epsilon),
    \end{align}
    where in the third equality, we use the identity $\tr\lrb{\Gamma^{j_1\dots j_{q_{\rm tot}}}_{XY} \Gamma^{i_1\dots i_{q_{\rm tot}}}_{XY}} = 4^{q_{\rm tot}} \delta_{i_1\dots i_{q_{\rm tot}},j_1\dots j_{q_{\rm tot}}}$.
    Since $E^*_{XY} \geq 0$, we know $\tr\lrb{E^*_{XY} \Gamma^{i_1\dots i_{q_{\rm tot}}}_{XY}}$ and thus $\alpha_{i_1\dots i_{q_{\rm tot}}}$ are nonnegative for every $i_1\dots i_{q_{\rm tot}}$, implying that $\widetilde{E}_{XY} \geq 0$.
    Similarly, because $E^*_{XY} \leq F^*_{XY}$, we have $\tr\lrb{E^*_{XY} \Gamma^{i_1\dots i_{q_{\rm tot}}}_{XY}} \leq \tr\lrb{F^*_{XY} \Gamma^{i_1\dots i_{q_{\rm tot}}}_{XY}}$, and thus $\alpha_{i_1\dots i_{q_{\rm tot}}} \leq \beta_{i_1\dots i_{q_{\rm tot}}}$ for every $i_1\dots i_{q_{\rm tot}}$, implying that $\widetilde{E}_{XY} \leq \widetilde{F}_{XY}$.
    For the causality constraint, first note that, for every $i_1\dots i_{q_{\rm tot}}$,
    \begin{align}
        \tr\lrb{F^*_{XY} \Gamma^{i_1\dots i_{q_{\rm tot}}}_{XY}} &= \tr\lrb{F^*_{XY} \cL^{\rm NS}_{XY}\lrp{\Gamma^{i_1\dots i_{q_{\rm tot}}}_{XY}}}\\
        &= \tr\lrb{\cL^{\rm NS}_{XY}\lrp{F^*_{XY}} \Gamma^{i_1\dots i_{q_{\rm tot}}}_{XY}}\\
        &= \frac{m^*}{2^{q_{\rm tot}}} \tr\lrb{\idop_{XY} \Gamma^{i_1\dots i_{q_{\rm tot}}}_{XY}}\\
        &= m^*,
    \end{align}
    where the first equality is due to that $\Gamma^{i_1\dots i_{q_{\rm tot}}}_{XY}$ is the Choi operator of a no-signaling $N$-partite channel, and the second equality holds because the projection $\cL^{\rm NS}_{XY}$ is self-dual.
    Then, $\beta_{i_1\dots i_{q_{\rm tot}}} = \frac{m^*}{4^{q_{\rm tot}}}$ for every $i_1\dots i_{q_{\rm tot}}$, resulting in
    \begin{align}\label{eq:widetilde_f}
        \widetilde{F}_{XY} = \frac{m^*}{4^{q_{\rm tot}}} \sum_{i_1\dots i_{q_{\rm tot}}} \Gamma^{i_1\dots i_{q_{\rm tot}}}_{XY} = \frac{m^* \idop_{XY}}{2^{q_{\rm tot}}},
    \end{align}
    where the second equality is due to the identity $\sum_{i_1\dots i_{q_{\rm tot}}} \Gamma^{i_1\dots i_{q_{\rm tot}}}_{XY} = 2^{q_{\rm tot}} \idop_{XY}$.
    By Eq.~\eqref{eq:widetilde_f}, it is straightforward to see that the causality constraint $\cL^{\rm NS}_{XY}\lrp{\widetilde{F}_{XY}} = \frac{m^* \idop_{XY}}{2^{q_{\rm tot}}}$ holds.
    Finally, consider that
    \begin{align}
        \widetilde{E}_Y &= \sum_{i_1\dots i_{q_{\rm tot}}} \alpha_{i_1\dots i_{q_{\rm tot}}} \tr_X\lrb{\Gamma^{i_1\dots i_{q_{\rm tot}}}_{XY}}\\
        &= \frac{1}{4^{q_{\rm tot}}} \sum_{i_1\dots i_{q_{\rm tot}}} \tr\lrb{E^*_{XY} \Gamma^{i_1\dots i_{q_{\rm tot}}}_{XY}} \idop_Y\\
        &= \frac{1}{4^{q_{\rm tot}}} \tr\lrb{E^*_{XY} \lrp{2^{q_{\rm tot}} \idop_{XY}}} \idop_Y\\
        &= \frac{\tr\lrb{E^*_{XY}}}{2^{q_{\rm tot}}} \idop_Y\\
        &= \idop_Y,
    \end{align}
    where the last equality is due to $\tr\lrb{E^*_{XY}} = \tr\lrb{E^*_Y} = \tr\lrb{\idop_Y} = 2^{q_{\rm tot}}$.

    So far, we have shown that $m^*$, $\widetilde{E}_{XY}$, and $\widetilde{F}_{XY}$ satisfy all the constraints in SDP~\eqref{eq:sdp_pauli_capacity_gen}.
    Since $m^*$ is optimal, they form an optimal solution.
    Now, by Eq.~\eqref{appeq:par_capacity_sdp},
    \begin{align}
    \begin{aligned}
        C_\epsilon^{\bf FreePar} (\,\vec{\cP}\,) = \log_2 \max &\; \lfloor m \rfloor\\
        \text{\rm s.t.} &\; E_{XY} \star \sum_{i_1\dots i_{q_{\rm tot}}} p_{i_1\dots i_{q_{\rm tot}}} \Gamma^{i_1\dots i_{q_{\rm tot}}}_{XY} \geq m(1-\epsilon),\, E_Y = \idop_Y,\\
        &\; 0 \leq E_{XY} \leq F_X \ox \idop_Y,\, \tr\lrb{F_X} = m.
    \end{aligned}
    \end{align}
    Clearly, $m^*$, $\widetilde{E}_{XY}$, and $\frac{1}{2^{q_{\rm tot}}}\widetilde{F}_X$ form a feasible solution as $\widetilde{F}_{XY} = \frac{m^*}{2^{q_{\rm tot}}} \idop_X \ox \idop_Y = \frac{1}{2^{q_{\rm tot}}}\widetilde{F}_X \ox \idop_Y$.
    Thus, we know $C_\epsilon^{\bf FreePar} (\,\vec{\cP}\,) \geq \log_2 \lfloor m^* \rfloor$.
    Since $C_\epsilon^{\bf FreePar} (\,\vec{\cP}\,) \leq C_\epsilon^{\bf Free} (\,\vec{\cP}\,)$ and $C_\epsilon^{\bf Free} (\,\vec{\cP}\,) = \log_2 \lfloor m^* \rfloor$, we conclude that $C_\epsilon^{\bf FreePar} (\,\vec{\cP}\,) = C_\epsilon^{\bf Free} (\,\vec{\cP}\,)$.
\end{proof}

\section{No advantage for channel simulation}\label{supp_sec:chan_sim}
\begin{theorem}
    For any quantum channel $\cC$ and error tolerance $\epsilon \geq 0$, the one-shot classical simulation cost assisted by free general supermaps is equal to that assisted by free parallel supermaps, i.e.,
    \begin{align}
        S_\epsilon^{\bf Free}(\cC) = S_\epsilon^{\bf FreePar}(\cC).
    \end{align}
\end{theorem}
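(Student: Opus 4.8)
The plan is to prove the two inequalities separately. The direction $S_\epsilon^{\bf Free}(\cC) \le S_\epsilon^{\bf FreePar}(\cC)$ is immediate from the inclusion ${\bf FreePar}\subseteq{\bf Free}$: a parallel free supermap achieving the simulation is a fortiori a general free supermap, so enlarging the feasible set cannot raise the minimal cost. The entire content is therefore the reverse inequality $S_\epsilon^{\bf Free}(\cC)\ge S_\epsilon^{\bf FreePar}(\cC)$, namely that any general free supermap simulating $\cC$ from a list of noiseless classical channels can be replaced, at no extra cost, by a parallel one. First I would cast both simulation costs as optimizations over Choi operators, exactly as in the capacity SDPs: a feasible $\mS$ is represented by $J^\mS_{XYAB}$ subject to positivity, the forward no-signaling constraint $\tr_X\lrb{J^\mS} = \tr_{XA}\lrb{J^\mS}\ox\idop_A/d_A$, and the causal constraint of its class (Proposition~\ref{prop:gen_char} for ${\bf Free}$, and the single-slot constraint ${}_{[1-Y]}J^\mS_{XYA}=0$ together with $J^\mS_A=d_Y\idop_A$ for ${\bf FreePar}$), with closeness $\tfrac12\lrV{J^\mS\star J^{\vec{\Delta}^m}-J^\cC}_\diamond\le\epsilon$ encoded by the standard diamond-distance SDP.

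The key structural input is that each resource channel $\Delta^{m_i}$ has a Choi operator $\sum_j\proj{j}_{X_i}\ox\proj{j}_{Y_i}$ that is diagonal in the computational basis, so the simulated channel $J^\mS\star J^{\vec{\Delta}^m}$ depends on $J^\mS$ only through its matched diagonal blocks $\bra{\vec j}_X\bra{\vec j}_Y\,J^\mS\,\ket{\vec j}_X\ket{\vec j}_Y$. I would therefore dephase the optimal $J^\mS$ on all the $X_i,Y_i$ registers, replacing it by $\tilde J=\cD_{XY}(J^\mS)$, where $\cD_{XY}$ is the completely dephasing map in the computational basis. Because $J^{\vec{\Delta}^m}$ is diagonal, $\tilde J\star J^{\vec{\Delta}^m}=J^\mS\star J^{\vec{\Delta}^m}$, so the diamond distance to $\cC$ and the cost $m$ are untouched; and since $\cD_{XY}$ commutes with the partial traces and with $\cL^{\rm NS}_{XY}$ (both built from the same maximally-mixing maps), $\tilde J$ remains positive, forward no-signaling, and general-feasible.

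For a single resource channel this already closes the argument: imposing $\cL^{\rm NS}_{XY}\lrp{\tilde J_{XYA}}=\idop_{XYA}/d_X$ on a $\tilde J$ that is diagonal in $XY$ forces, by a direct componentwise computation, $\tilde J_{XYA}$ to be uniform in $Y$, i.e.\ ${}_{[1-Y]}\tilde J_{XYA}=0$, which is precisely the parallel constraint. Hence the dephased supermap already lies in ${\bf FreePar}$ and simulates the same channel, giving $S_\epsilon^{\bf FreePar}(\cC)\le\log_2 m$ in the one-channel case.

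The main obstacle is the multipartite case, where diagonality plus the general constraint no longer imply the parallel constraint: the projection $\cL^{\rm NS}_{XY}$ only fixes the component of $\tilde J$ inside the no-signaling subspace, so $\tilde J$ may still carry $Y_j\to X_{j'}$ feedback components encoding a genuinely non-parallel, possibly indefinite, causal order. The plan to remove them is to exploit the transparency of noiseless classical channels: the value returned at $Y_i$ is exactly the value the supermap fed into $X_i$, so no order-dependent or adaptive use of the channels can reveal anything the encoder does not already control, and any measurement used to steer later uses can be deferred and supplied in advance by the no-signaling resource. Concretely, I would bundle the list into the single classical channel $\bigotimes_i\Delta^{m_i}$, show that any general free supermap on the $N$ classical channels is reproduced by a general free supermap on this bundled channel with the same resource count, and then invoke the already-established single-channel reduction to obtain a parallel supermap; a parallel supermap on the bundled channel is in turn a parallel supermap on the original list. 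Verifying that the bundling preserves feasibility and the cost bookkeeping, and that the surviving feedback components can be absorbed without altering the matched blocks that determine the simulated channel, is where the real work lies, and I expect this to be the crux of the proof.
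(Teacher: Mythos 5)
Your easy direction and your dephasing step are both correct: since $J^{\vec{\Delta}^m}_{XY}$ is diagonal in the computational basis, replacing the optimal $J^\mS_{XYAB}$ by its dephased version leaves the link product $J^\mS_{XYAB}\star J^{\vec{\Delta}^m}_{XY}$ unchanged and preserves positivity, the forward no-signaling constraint, and the constraint $\cL^{\rm NS}_{XY}(J^\mS_{XYA})=\idop_{XYA}/d_X$. Your single-channel observation is also fine (for $N=1$ the general constraint already forces the parallel constraint, even without dephasing). But the proof as proposed has a genuine gap exactly where you flag it: for $N\geq 2$ you never show how to remove the residual feedback components, and the "bundling" idea does not close it. A general (possibly causally indefinite) supermap acting on $N$ separate slots is not reproduced by a general supermap acting on a single slot carrying $\bigotimes_i\Delta^{m_i}$ -- the single-slot reduction you want to invoke only applies after you have already shown that the $N$ ideal channels can be used in parallel, which is precisely the statement being proved. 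So the argument is circular at the crux, and the hard inequality $S_\epsilon^{\bf Free}(\cC)\geq S_\epsilon^{\bf FreePar}(\cC)$ is not established.

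The paper avoids constructing a parallel supermap from the general one altogether. It writes $S_\epsilon^{\bf Free}(\cC)$ as an optimization over $J^\mS_{XYAB}$ (with the diamond-distance SDP encoding the error), and then \emph{relaxes} the causality constraint $\cL^{\rm NS}_{XY}\lrp{J^\mS_{XYA}}=\idop_{XYA}/d_X$ to two of its consequences: $J^\cM_A=\idop_A$ (the simulated map is trace preserving) and $J^\mS_{YA}=\idop_{YA}$ (obtained by tracing out $X$ from the causality constraint). After symmetrizing over simultaneous permutations on $A$ and $B$ -- exploiting the permutation symmetry of $J^{\vec{\Delta}^m}_{XY}$, in the same way as in the capacity SDPs -- the relaxed program collapses to an SDP in variables $E_{AB}$ and $F_B$ only, which is exactly the known SDP for the parallel (no-signaling-assisted) simulation cost of Ref.~\cite{fang2019quantum}. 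This yields $S_\epsilon^{\bf Free}(\cC)\geq S_\epsilon^{\bf FreePar}(\cC)$ directly, with no need to track how the indefinite order interacts with the individual classical channels. If you want to salvage your constructive route, you would need a lemma showing that any general free supermap fed with ideal classical channels induces the same $AB$ channel as some parallel free supermap with the same total message count -- but proving that lemma is essentially equivalent to the relaxation argument the paper gives.
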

\begin{proof}
    By definition, it is straightforward to see that $S_\epsilon^{\bf Free}(\cC) \leq S_\epsilon^{\bf FreePar}(\cC)$ since ${\bf FreePar} \subset {\bf Free}$.
    Now, we show that $S_\epsilon^{\bf FreePar}(\cC)$ is also a lower bound to $S_\epsilon^{\bf Free}(\cC)$.

    The first step is to express $S_\epsilon^{\bf Free}(\cC)$ as an optimization program in terms of the Choi operator of channel $\cC$ with input $A$ and output $B$.
    The diamond distance between two quantum channels $\cC_1$ and $\cC_2$ can be written as an SDP~\cite{watrous2009semidefinite}:
    \begin{align}
    \begin{aligned}
        \frac{1}{2}\lrV{\cC_1 - \cC_2}_\diamond = \min &\; \mu\\
        \text{\rm s.t.} &\; Z_{AB} \geq 0,\, Z_{AB} \geq J^{\cC_1}_{AB} - J^{\cC_2}_{AB},\, \tr_B\lrb{Z_{AB}} \leq \mu\idop_A,
    \end{aligned}
    \end{align}
    where $J^{\cC_1}_{AB}$ and $J^{\cC_2}_{AB}$ are the Choi operators of the corresponding channels.
    It then follows that the simulation cost $S_\epsilon^{\bf Free}(\cC)$ can be formulated as
    \begin{align}
    \begin{aligned}
        S_\epsilon^{\bf Free}(\cC) = \log_2 \min &\; m\\
        \text{\rm s.t.} &\; Z_{AB} \geq 0,\, Z_{AB} \geq J^\cM_{AB} - J^\cC_{AB},\, Z_A \leq \epsilon\idop_A,\, J^\cM_{AB} = J^\mS_{XYAB} \star J^{\vec{\Delta}_m}_{XY},\\
        &\; J^\mS_{XYAB} \ge 0,\, \cL^{\rm NS}_{XY}\lrp{J^\mS_{XYA}} = \frac{\idop_{XYA}}{d_X},\, {}_{[1-A]}J^\mS_{YAB} = 0,
    \end{aligned}
    \end{align}
    where $m$ is minimized over positive integers.
    The constraints in the last line is requiring $\mS$ to be a supermap in ${\bf Free}$.
    We obtain a lower bound to $S_\epsilon^{\bf Free}(\cC)$ by relaxing the constraint $\cL^{\rm NS}_{XY}\lrp{J^\mS_{XYA}} = \frac{\idop_{XYA}}{d_X}$ into constraints $J^\cM_A = \idop_A$ and $J^\mS_{YA} = \idop_{YA}$.
    To show that this is indeed a relaxation, we now prove that these two relaxed constraints are implications of $\cL^{\rm NS}_{XY}\lrp{J^\mS_{XYA}} = \frac{\idop_{XYA}}{d_X}$.
    For the first constraint, consider that
    \begin{align}
        J^\cM_A &= \tr_B\lrb{J^\mS_{XYAB} \star J^{\vec{\Delta}_m}_{XY}}\\
        &= \tr_{XYB}\lrb{J^\mS_{XYAB} \lrp{\cL^{\rm NS}_{XY}\lrp{J^{\vec{\Delta}_m}_{XY}} \ox \idop_{AB}}}\\
        &= \tr_{XY}\lrb{\cL^{\rm NS}_{XY}\lrp{J^\mS_{XYA}} \lrp{J^{\vec{\Delta}_m}_{XY} \ox \idop_A}}\\
        &= \tr_{XY}\lrb{\frac{\idop_{XYA}}{d_X} \lrp{J^{\vec{\Delta}_m}_{XY} \ox \idop_A}}\\
        &= \idop_A,
    \end{align}
    which is what we set out to show.
    For the other constraint, note that, in one direction, we have
    \begin{align}
        \tr_X\lrb{\cL^{\rm NS}_{XY}\lrp{J^\mS_{XYA}}} &= \tr_X\lrb{\frac{\idop_{XYA}}{d_X}}\\
        &= \idop_{YA}.\label{appeq:one_direction}
    \end{align}
    In another direction, due to the observation that $\tr_X[\cL^{\rm NS}_{XY}(\cdot)] = \tr_X[\cdot]$, we have
    \begin{align}\label{supp_eq:causality_partial_trace}
        \tr_X\lrb{\cL^{\rm NS}_{XY}\lrp{J^\mS_{XYA}}} = J^\mS_{YA},
    \end{align}
    which, together with Eq.~\eqref{appeq:one_direction}, implies
    \begin{align}
        J^\mS_{YA} = \idop_{YA}.
    \end{align}
    Hence, we arrive at
    \begin{align}
    \begin{aligned}
        S_\epsilon^{\bf Free}(\cC) \geq \log_2 \min &\; m\\
        \text{\rm s.t.} &\; Z_{AB} \geq 0,\, Z_{AB} \geq J^\cM_{AB} - J^\cC_{AB},\, Z_A \leq \epsilon\idop_A,\, J^\cM_{AB} = J^\mS_{XYAB} \star J^{\vec{\Delta}_m}_{XY},\\
        &\; J^\mS_{XYAB} \ge 0,\, J^\cM_A = \idop_A,\, J^\mS_{YA} = \idop_{YA},\, {}_{[1-A]}J^\mS_{YAB} = 0.
    \end{aligned}
    \end{align}

    Similar to what we did in deriving the SDPs for the capacities, here we use the symmetry of $J^{\vec{\Delta}_m}_{XY}$ to simplify the above optimization program, arriving at
    \begin{align}
    \begin{aligned}
        S_\epsilon^{\bf Free}(\cC) \geq \log_2 \min &\; \lceil\tr\lrb{F_B}\rceil\\
        \text{\rm s.t.} &\; Z_{AB} \geq 0,\, Z_{AB} \geq E_{AB} - J^\cC_{AB},\, Z_A \leq \epsilon\idop_A,\\
        &\; 0 \leq E_{AB} \leq \idop_A \ox F_B,\, E_A = \idop_A.
    \end{aligned}
    \end{align}
    Observe that the right hand side is exactly the SDP for one-shot classical simulation cost assisted by parallel supermaps as given in Ref.~\cite{fang2019quantum}.
    Hence, $S_\epsilon^{\bf Free}(\cC) \geq S_\epsilon^{\bf FreePar}(\cC)$.
    Therefore, we conclude that $S_\epsilon^{\bf Free}(\cC) = S_\epsilon^{\bf FreePar}(\cC)$.
\end{proof}
\end{document}